\newcommand{\commentout}[1]{}
\newtheorem{theorem}{Theorem}[section]
\newtheorem{proposition}[theorem]{Proposition}
\newtheorem{corollary}[theorem]{Corollary}
\newtheorem{lemma}[theorem]{Lemma}
\newtheorem{assumption}[theorem]{Assumption}
\theoremstyle{definition}
\newtheorem{definition}[theorem]{Definition}
\newtheorem{remark}[theorem]{Remark}
\numberwithin{equation}{section}
\renewcommand{\subset}{\subseteq}
\renewcommand{\hat}{\widehat}
\renewcommand{\tilde}{\widetilde}
\renewcommand{\epsilon}{\varepsilon}
\def\supp{\text{supp}}
\def\OmegaRF{\text{SRF}}
\def\SRF{\text{SRF}}
\newcommand{\calA}{\mathcal{A}}
\newcommand{\calY}{\mathcal{Y}}
\newcommand{\calE}{\mathcal{E}}
\newcommand{\calF}{\mathcal{F}}
\newcommand{\calH}{\mathcal{H}}
\newcommand{\calR}{\mathcal{R}}
\newcommand{\calJ}{\mathcal{J}}
\newcommand{\calT}{\mathcal{T}}
\newcommand{\diag}{\text{diag}}
\newcommand{\beq}{\begin{equation}}
\newcommand{\eeq}{\end{equation}}
\def\({\Big(}
\def\){\Big)}
\def\C{\mathbb{C}}
\def\P{\mathcal{P}}
\def\R{\mathbb{R}}
\def\T{\mathbb{T}}
\def\Z{\mathbb{Z}}
\title{Stable super-resolution limit and smallest singular value of restricted Fourier matrices}
\date{}
\author{Weilin Li\thanks{Courant Institute of Mathematical Sciences, New York University. Email: weilinli@cims.nyu.edu}
\and Wenjing Liao\thanks{School of Mathematics, Georgia Institute of Technology. Email: wliao60@gatech.edu}}
\begin{document}

\maketitle

% % % % % % % % % % % % % % % % % % % %
\begin{abstract}

We consider the inverse problem of recovering the locations and amplitudes of a collection of point sources represented as a discrete measure, given $M+1$ of its noisy low-frequency Fourier coefficients. Super-resolution refers to a stable recovery when the distance $\Delta$ between the two closest point sources is less than $1/M$. We introduce a clumps model where the point sources are closely spaced within several clumps. Under this assumption, we derive a non-asymptotic lower bound for the minimum singular value of a Vandermonde matrix whose nodes are determined by the point sources. Our estimate is given as a weighted $\ell^2$ sum, where each term only depends on the configuration of each individual clump. The main novelty is that our lower bound obtains an exact dependence on the {\it Super-Resolution Factor} $SRF=(M\Delta)^{-1}$. As noise level increases, the {\it sensitivity of the noise-space correlation function in the MUSIC algorithm} degrades according to a power law in $SRF$ where the exponent depends on the cardinality of the largest clump. Numerical experiments validate our theoretical bounds for the minimum singular value and the sensitivity of MUSIC. We also provide lower and upper bounds for a min-max error of super-resolution for the grid model, which in turn is closely related to the minimum singular value of Vandermonde matrices.

\end{abstract}

{\bf Keywords:} Super-resolution, Vandermonde matrix, Fourier matrix, minimum singular value, subspace methods, MUSIC, min-max error, sparse recovery, polynomial interpolation, uncertainty principles

{\bf 2010 Math subject classification:} 42A10, 42A15, 94A08, 94A15, 94A20

%!TEX root = main.tex

% % % % % % % % % % % % % % % % % % % %
\section{Introduction}
\label{sec:introduction}

% % % % % % % % % % % % % % % % % % % %
\subsection{Background}

This paper studies the inverse problem of recovering a collection of point sources from its noisy low-frequency Fourier coefficients. Suppose $S$ point sources with amplitudes $x=\{x_j\}_{j=1}^S \in \mathbb{C}^S$ are located on an unknown discrete set $\Omega=\{\omega_j\}_{j=1}^S$ in the periodic interval $\mathbb{T} = [0,1)$. This collection of points sources can be represented by a discrete measure,
\begin{equation}
\label{eq:model2}
\mu(\omega)
:=\sum_{j=1}^S x_j\delta_{\omega_j}(\omega),
\end{equation}
where $\delta_{\omega_j}$ denotes the Dirac measure supported at $\omega_j$ and $\Omega$ is the support of $\mu$, denoted $\supp(\mu)$. A uniform array of $M+1$ sensors collects measurements of the point sources. Suppose the $k$-th sensor collects the $k$-th noisy Fourier coefficient of $\mu$:
\begin{equation}
\label{eq:model1}
y_k 
:=\hat\mu(k)+\eta_k
:=\int_{\T} e^{-2\pi ik\omega} \ d\mu(\omega) + \eta_k
=\sum_{j=1}^S x_j e^{-2\pi i  k\omega_j} + \eta_k, \ \quad k=0,1,\ldots,M,
\end{equation}
where $\hat\mu$ is the Fourier transform of $\mu$ and $\eta_k$ represents some unknown noise at the $k$-th sensor. Our goal is to accurately recover $\mu$, which consists of the support $\Omega$ and the amplitudes $x\in\C^S$, from the noisy low-frequency Fourier data $y = \{y_k\}_{k=0}^M\in\C^{M+1}$. 

The measurement vector $y$ can be expressed as 
\begin{equation}
	\label{eq:model3}
	y  = \Phi x+\eta,
\end{equation}
where $\Phi$ is the $(M+1)\times S$ {\it Fourier} or {\it Vandermonde} matrix (with nodes $e^{-2\pi i\omega_j}$ on the unit circle): 
\begin{equation}
\Phi
:=\Phi(\Omega,M)
:=\Phi_M(\Omega)
:=\begin{bmatrix}
1 & 1 & \ldots & 1 \\
e^{-2\pi i\omega_1} & e^{-2\pi i\omega_2} & \ldots & e^{-2\pi i\omega_S}\\
\vdots & \vdots & \vdots & \vdots \\
e^{-2\pi iM \omega_1} & e^{-2\pi iM\omega_2} & \dots & e^{-2\pi iM\omega_S}
\end{bmatrix}. 
\label{eqphi}
\end{equation}
While it is convenient to re-formulate the measurement vector $y$ in the linear system \eqref{eq:model3}, we do not have access to the sensing matrix $\Phi$ because it depends on the unknown $\Omega$.
This inverse problem is referred to as single-snapshot spectral estimation, as only one snapshot of measurements is taken by the sensors. This problem appears in many interesting imaging and signal processing applications, including remote sensing \cite{fannjiang2010remote}, inverse scattering \cite{fannjiang2015compressive,fannjiang2010compressive}, Direction-Of-Arrival (DOA) estimation \cite{krim1996array,schmidt1986multiple} and spectral analysis \cite{stoica1997introduction}.

A key step is to estimate the support $\Omega$ and then the amplitudes $x$ can be recovered as the least-squares solution of \eqref{eq:model3}. 
In the noisy case, the stability of this inverse problem crucially depends on $\Omega$. The {\it minimum separation} of $\Omega$ has been widely used to describe the stability of this inverse problem. It is defined as
\[
\Delta
:=\Delta(\Omega)
:=\min_{1\leq j<k\leq S} |\omega_j-\omega_k|_\T,
\quad
|\omega|_\T:=\min_{n\in\Z} |\omega-n|.
\]
The Heisenberg uncertainty principle implies that the spatial and frequency localization of signals are inversely proportional. When we have access to only $M$ Fourier coefficients of $\mu$, the recovery is sensitive to noise whenever $\Delta<1/M$. In the imaging community, $1/M$ is called the {\it Rayleigh Length} ($RL$), and it is regarded as the resolution that a standard imaging system can resolve \cite{den1997resolution}. Super-resolution refers to the capability of recovering point sources when $\Delta < 1/M$.
The {\it super-resolution factor} (SRF) is $1/(M\Delta)$, which characterizes the maximum number of points in $\Omega$ that is contained in $1/M$.

The first recovery method was invented by Prony \cite{prony1795essai}. The classical Prony's method \cite{prony1795essai} can recover $\mu$ exactly in the noiseless setting, but it is very sensitive to noise \cite{batenkov2013accuracy}. Numerous modifications were attempted to improve its numerical behavior \cite{golub2003separable,beylkin2005approximation,potts2010parameter}. In the signal processing community, a class of subspace methods has been widely used in applications, including MUltiple SIgnal Classification (MUSIC) \cite{schmidt1986multiple}, Estimation of Signal Parameters via Rotational Invariance Technique (ESPRIT) \cite{kailath1989esprit}, and the Matrix Pencil Method (MPM) \cite{hua1990matrixpencil}. In the past ten years, super-resolution was addressed with an optimization approach, such as the total variation minimization (TV-min) \cite{candes2013super, fernandez2013support, azais2015spike, duval2015exact, li2017elementary}.

Existing mathematical theories on the recovery of $\mu$ can be divided into three main categories: (1) 
	The {\it well-separated case} is when $\Delta\geq 1/M$, and in which case, we say that $\Omega$ is {\it well-separated}. There are many polynomial-time algorithms that {\it provably} recover $\mu$ with high accuracy. These methods include total variation minimization (TV-min) \cite{candes2013super, fernandez2013support, azais2015spike, duval2015exact, li2017elementary}, greedy algorithms \cite{duarte2013spectral,fannjiang2012coherence,bredies2013inverse,boyd2017alternating,denoyelle2019sliding}, and subspace methods \cite{hua1990matrixpencil,fannjiang2011music,liao2016music,liao2015multi,moitra2015matrixpencil,kailath1989esprit,schmidt1986multiple}. These works address the discretization error and basis mismatch issues \cite{fannjiang2011spie, fannjiang2012coherence, chi2011basismismatch} arising in compressed sensing \cite{candes2006robust,donoho2006compressed}. 
(2) The {\it super-resolution regime} is when $\Delta<1/M$. There are two main approaches to achieve super-resolution. 
	The optimization-based methods  require certain assumptions, such as  positivity \cite{morgenshtern2016super,denoyelle2017support,morgenshtern2020super}or certain sign constraints \cite{benedetto2020super} of $\mu$.
	The classical subspace methods \cite{schmidt1986multiple,kailath1989esprit,hua1990matrixpencil} can recover complex measures and have super-resolution phenomenon. Meanwhile, there are many open problems regarding the stability of subspace methods with a single snapshot of measurements. 	
	(3) Super resolution is addressed from an information theoretic point of view in \cite{donoho1992superresolution,demanet2015recoverability},  where the point sources are located on a grid of $\mathbb{R}$ with spacing $1/N$ and measurements are the noisy continuous Fourier transform of the points. These works derived asymptotic lower and upper bounds for a min-max error, when the grid spacing is sufficiently small. Recently, \cite{batenkov2019super} studied the off-the-grid min-max error when the support contains a single cluster of nodes and other nodes are well separated.

This paper addresses two important questions: {\it (1) What is the fundamental limit of super-resolution?} To quantify the fundamental difficulty of super-resolution, we use the concept of min-max error introduced by Donoho \cite{donoho1992superresolution}. The min-max error is the reconstruction error incurred by the best possible algorithm in the worst case scenario. For technical reasons, we assume that the point sources are located on a grid with spacing $1/N$ when we study the min-max error. We follow the theme in \cite{donoho1992superresolution,demanet2015recoverability} to relate the min-max error with $\sigma_{\min}(\Phi(\Omega,M))$ for the worst subset $\Omega$ on the grid. {\it (2) What is the stability of subspace methods?} The focus of this paper is the MUSIC algorithm \cite{schmidt1986multiple}.  
MUSIC can recover {\it complex} measures and is well known for its super-resolution capability \cite{odendaal1994two}. However, there are many unanswered questions related to its stability when $\Delta < 1/M$. 
In the classical Direction-Of-Arrival setting \cite{krim1996array} with multiple snapshots of measurements, the Cram\'er-Rao bound \cite{stoica1995resolution,lee1992cramer,lee1992eigenvalues} gives a stability of MUSIC with respect to noise, $\#{\rm Snapshot}$ and the source separation $\Delta$, in the asymptotic scenario when $\#{\rm Snapshot}\rightarrow \infty$ and $\Delta\rightarrow 0$. These works can not be directly applied to the single-snapshot setting. This paper aims to establish a sensitivity analysis for the MUSIC algorithm, which is different from the Cram\'er-Rao bound since the noise statistics are not emphasized. Recently, it was shown in \cite{liao2016music} that the perturbation of the noise-space correlation function in the single-snapshot MUSIC is closely related with $\sigma_{\min}(\Phi)$. The key question is to accurately estimate $\sigma_{\min}(\Phi)$ for a given support $\Omega$.

% % % % % % % % % % % % % % % % % % % %
\subsection{Contributions}

This paper has three main contributions: accurate lower bounds for $\sigma_{\min}(\Phi)$ under geometric assumptions of the support set, improvements to the min-max error of super-resolution by sparsity constraints, and a sensitivity analysis of the noise-space correlation function in the MUSIC algorithm. We informally summarize our main results here and postpone the formal definitions until later.

\begin{enumerate}[(1)]
	\item 
	We consider a geometric clumps model for $\Omega$. Assume that: (1) $\Omega$ can be written as the disjoint union of $A$ finite sets called clumps, where each clump is contained in an interval of length $1/M$ and the distance between any two clumps is at least $\beta/M$; (2) the minimum separation of $\Omega$ is at least $\alpha/M$.  
	
	Theorem \ref{thm:clump2} shows that if $\lambda_a$ denotes the cardinality of the $a$-th clump, for any $\alpha>0$ and sufficiently large $\beta>0$, there exist constants $\{C_a(\lambda_a,M)\}_{a=1}^A$ such that
	\begin{equation}
	\label{eqinformal1}
	\sigma_{\min}(\Phi)
	\geq \sqrt{M}\ \(\sum_{a=1}^A \big (C_a(\lambda_a,M) \, \alpha^{-\lambda_a+1}\big )^2 \)^{-1/2}.
	\end{equation}
	The explicit formula for $C_a$ is given in equation \eqref{thm2ca} and we discuss its dependence on $\lambda_a$ and $M$ in Remark \ref{rem:Ca}. The main novelty of this bound is that the exponent on $\alpha^{-\lambda_a+1}=\SRF^{\lambda_a-1}$ only depends on $\lambda_a$ as opposed to $S$, which shows that $\sigma_{\min}(\Phi)$ depends on the local geometry of $\Omega$. We provide an upper bound for $\sigma_{\min}(\Phi)$ in Proposition \ref{prop:upper} and numerical experiments in Section \ref{subsecboundnum} to show that the dependence on $\alpha$ is tight. 
	
	\item
	We derive an estimate of the min-max error under sparsity constraints, 
	which illustrates why geometric assumptions on $\Omega$ are both natural and necessary. Suppose $S$ and $\Delta$ are known and for technical reasons, we assume that $\Omega$ lies on a fine grid with spacing $1/N$. The min-max error $\calE(M,N,S,\delta)$ for this model is defined to be the error incurred by the best possible recovery algorithm(s) when the error is measured with respect to the worst case measure and noise $\eta$ satisfying $\|\eta\|_2\leq\delta$. 
	
	Theorem \ref{thm:minmax} provides explicit constants $A(M,S)$ and $B(M,S)$ such that 
	\begin{equation}
	\label{eqinformal2}
	A(M,S)\ \(\frac{N}{M}\)^{2S-1}\ \delta
	\leq \calE(M,N,S,\delta)
	\leq B(M,S)\ \(\frac{N}{M}\)^{2S-1}\ \delta.
	\end{equation}
	The dominant factor in both inequalities is $(N/M)^{2S-1}=\SRF^{2S-1}$. Hence, without any prior geometric assumptions on $\Omega$, no algorithm can accurately estimate every measure and noise, unless $\delta$ is smaller than $\SRF^{-2S+1}$. 
	
	\item
	We provide a sensitivity analysis of the noise-space correlation function in MUSIC under the clumps model. MUSIC amounts to recovering the point sources from the $S$ smallest local minima of a noise-space correlation function $\calR$. Theorem \ref{thmmusic} shows that when $\Omega$ satisfies the clumps assumption in Theorem \ref{thm:clump2}, then for Gaussian noise $\eta \sim \mathcal{N}(0,\sigma^2 I)$, in order to guarantee an $\epsilon$-perturbation of $\calR$ in the supremum norm, the noise-to-signal ratio that MUSIC can tolerate obeys the following scaling law,

	\begin{equation}
	\label{eqinformal3}
	\frac{\|\eta\|_2}{x_{\min}} \propto \sqrt{M} \left(
	\sum_{a=1}^A \, \big(C_a(\lambda_a,M/2) \, \alpha^{-\lambda_a+1}\big)^2
	\right)^{-1}\epsilon. 
	\end{equation}
	Our result shows that the sensitivity of the noise-space correlation function in MUSIC is exponential in $1/{\rm SRF}$, and importantly, the exponent depends on $\lambda_a$ instead of the sparsity $S$. This estimate is verified by numerical experiments. A perturbation bound for deterministic noise is given in Corollary \ref{comusicbounded}.
	 
\end{enumerate}

Let us briefly discuss the implications of our main results to super-resolution. Inequalities \eqref{eqinformal2} show that, super-resolution solely based on sparsity and minimum separation is impossible unless the noise is smaller than $\SRF^{-2S+1}$. The results can be greatly improved if the structure of $\Omega$ is exploited. Inequality \eqref{eqinformal3}, derived from \eqref{eqinformal1}, indicates that the perturbation of $\calR$ is small if the noise is smaller than $\SRF^{-2\lambda+1}$ as opposed to $\SRF^{-2S+1}$ where $\lambda=\max_a \lambda_a$ and $S=\sum_a \lambda_a$. This rigorously confirms prior numerical evidence that MUSIC can succeed in the super-resolution regime, if $\lambda$ is sufficiently small. Although our analysis only pertains to perturbations of $\calR$, this is an inherent feature of MUSIC.

% % % % % % % % % % % % % % % % % % % %
\subsection{Outline}

The remainder of this paper is organized as follows. Since this paper encapsulates three main topics, we present each in its own section and related work is located in the last subsection. Our estimates for $\sigma_{\min}(\Phi)$ and proof strategy are contained in Section \ref{sec:sr} and the min-max error under sparsity constraints is studied in Section \ref{sec:minmax}. Numerical experiments are included highlighting their accuracy. Section \ref{sec:music} explains the MUSIC algorithm, and includes a new sensitivity analysis of the noise-space correlation function under the clumps model. Numerical simulations are provided to validate our theoretical analysis on MUSIC. Appendices \ref{sec:proofs}, \ref{sec:props}, and \ref{sec:lemmas} contains the proofs for all the theorems, propositions, and lemmas, respectively.

%!TEX root = main.tex

% % % % % % % % % % % % % % % % % % % %
\section{Minimum singular value of Vandermonde matrices}
\label{sec:sr}

% % % % % % % % % % % % % % % % % % % %
\subsection{Lower bounds under a clumps model} 

\begin{figure}
  \begin{center}
    \includegraphics[width=0.7\textwidth]{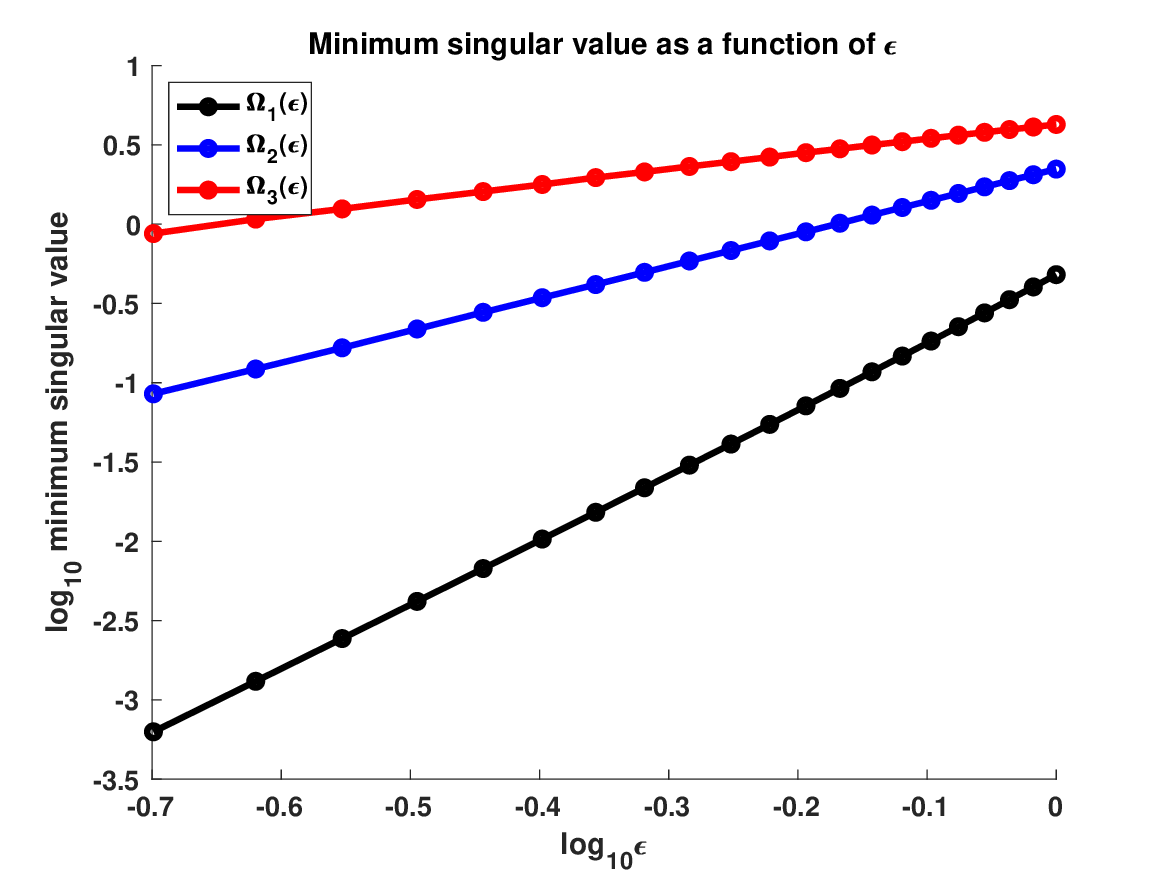}
  \end{center}
  \caption{Consider the sets $\Omega_1(\epsilon)$, $\Omega_2(\epsilon)$, $\Omega_3(\epsilon)$ defined in \eqref{eqthreesets} each with $\Delta(\epsilon)=0.01\epsilon$. The functions $\epsilon$ $\mapsto$ $\sigma_{\min}(\Phi_M(\Omega_j(\epsilon)))$ is exponential with very different exponents.} 
  \label{fig:example}
\end{figure}

The quantity $\sigma_{\min}(\Phi)$ is extremely sensitive to the ``geometry" or configuration of $\Omega$ in the super-resolution regime $\Delta<1/M$. To support this assertion, Figure \ref{fig:example} provides examples of three sets 
\begin{gather}
\label{eqthreesets}	
	\begin{split}
	\Omega_1(\epsilon) &=\epsilon\{0,0.01,0.02,0.03,0.04\}, \\ \Omega_2(\epsilon) &=\epsilon\{0,0.01,0.02,0.4,0.5\}, \\
	\Omega_3(\epsilon) &=\epsilon\{0,0.01,0.3,0.4,0.5\}
	\end{split}
\end{gather}
where $0.2\leq \epsilon\leq 1$. We fix $M=50$ and $S=5$.
These three support sets have the same cardinality and minimum separation, but the minimum singular values of their associated Vandermonde matrices have different behaviors. This simple numerical experiment demonstrates that it is impossible to accurately describe $\sigma_{\min}(\Phi)$ solely in terms of $\Delta$. A more sophisticated description of the ``geometry" of $\Omega$ is required.

We introduce a clumps model where $\Omega$ consists of well-separated subsets called {\it clumps}, where each clump contains several points that can be closely spaced.

\begin{assumption}[Clumps model]
	\label{def:clumps}
	We say that a finite set $\Omega\subset\T$ consists of $A$ clumps with parameters $(A,M,S,\beta)$ if the following hold.
	\begin{enumerate}[(1)]
	\item $\Omega$ has cardinality $S$ and can be written as a disjoint union of $A$ sets, 
	$
	\Omega=\bigcup_{a=1}^A \Lambda_a, 
	$
	where each {\it clump} $\Lambda_a$ is a finite set contained in an open interval of length $1/M$.  
	\item  If $A>1$, the distance between two clumps, defined as
	$$\text{dist}(\Lambda_m,\Lambda_n)
	:=\min_{\omega_j\in\Lambda_m,\, \omega_k\in \Lambda_n} |\omega_j-\omega_k|_{\T},$$
	satisfies $\min_{m\neq n} \text{dist}(\Lambda_m,\Lambda_n)\ge \beta/M$.
	\end{enumerate}
	Throughout the paper, we denote the cardinality of $\Lambda_a$ by $\lambda_a$.
\end{assumption}

\begin{definition}[Complexity]
	\label{def:complexity}
	For any finite set $\Omega=\{\omega_k\}_{k=1}^S\subset\T$, its {\it complexity} at $\omega_j\in\Omega$ is the quantity,
	\[
	\rho_j
	:=\rho_j(\Omega,M)
	:=\prod_{\omega_k\in\Omega\colon 0<|\omega_k-\omega_j|_\T<1/M} \frac{1}{\pi M|\omega_j-\omega_k|_\T}. 
	\]
\end{definition}

Our Theorem \ref{thm:clump1} below is our most general lower bound for $\sigma_{\min}(\Phi)$, which is given in terms of a weighted $\ell^2$ aggregate of the complexity of $\Omega$ at each point. The proof of Theorem \ref{thm:clump1} can be found in Appendix \ref{proof:clump1}. 

\begin{theorem}
	\label{thm:clump1}
	Fix positive integers $A,M,S$ with $M\geq 2S^2$. Suppose $\Omega$ satisfies Assumption \ref{def:clumps} with parameters $(A,M,S,\beta)$. If $A>1$, assume that
	\begin{equation}
	\label{eq:sep1}
	\beta
	\geq \max_{1\leq a\leq A} \max_{\omega_j\in\Lambda_a} {10 \lambda_a^{5/2} (S\rho_j)^{1/(2\lambda_a)}}.
	\end{equation}
	For each $1\leq a\leq A$, we define the constant 
	\[
	B_a
	:=B_a(\lambda_a,M)
	:=\frac{20\sqrt 2}{19} \(1-\frac{\pi^2}{3\lambda_a^2}\)^{-(\lambda_a-1)/2} \(\frac{M}{\lambda_a}\)^{\lambda_a-1} \Big\lfloor \frac{M}{\lambda_a}\Big\rfloor^{-(\lambda_a-1)}.
	\]
	Then the minimum singular value of $\Phi=\Phi(\Omega,M)$ defined in \eqref{eqphi} satisfies
	\begin{equation}
	\label{eq:clump1}
	\sigma_{\min}(\Phi)
	\geq \sqrt{M} \(\sum_{a=1}^A \sum_{\omega_j\in \Lambda_a} (B_a \lambda_a^{\lambda_a} \rho_j)^2 \)^{-1/2}. 
	\end{equation}
\end{theorem}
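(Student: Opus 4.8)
The plan is to invoke the robust duality of Proposition~\ref{prop:duality2}: for each unit-norm $v\in\C^S$ I must produce $f_v\in\P(M)$ that interpolates $v$ (up to a small error $\epsilon$ with $\|\epsilon\|_2<1$) on $\Omega$, and then bound $\|f_v\|_{L^2(\T)}$ uniformly in $v$; the reciprocal of that bound, times $\sqrt{M}$ and times $(1-\|\epsilon\|_2)$, will be the desired lower bound for $\sigma_{\min}(\Phi)$. The natural building block is a single-clump interpolant: for each clump $\Lambda_a$ of size $\lambda_a$, one wants a trigonometric polynomial $g_{a,j}\in\P(M)$ that equals $1$ at $\omega_j\in\Lambda_a$, vanishes at the other nodes of $\Lambda_a$, has $L^2$-norm of order $\sqrt{M}^{-1}\cdot B_a\lambda_a^{\lambda_a-1}\rho_j$, and—crucially—decays fast enough away from $\Lambda_a$ that its values on the \emph{other} clumps are tiny. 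The candidate is a ``localized Lagrange-type'' polynomial: take the Lagrange factor $\prod_{\omega_k\in\Lambda_a,k\ne j}\frac{\sin(\pi(\cdot-\omega_k))}{\sin(\pi(\omega_j-\omega_k))}$-style product to kill the other nodes in $\Lambda_a$ (this is where the factors $\rho_j$ and $\lambda_a^{\lambda_a-1}$ enter, since the denominators are exactly the intra-clump gaps appearing in $\rho_j$), multiplied by a high-degree Fej\'er-type kernel $\big(\tfrac{\sin(\pi m\,\cdot)}{m\sin(\pi\,\cdot)}\big)^{2}$-type concentration bump of degree $\approx M/\lambda_a$ centered at (roughly) the clump, which both keeps the total degree under $M$ and forces rapid off-clump decay. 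The $\lfloor M/\lambda_a\rfloor$ and $(1-\pi^2/(3\lambda_a^2))$ factors in $B_a$ are exactly the artifacts of choosing this kernel's degree to be the integer $\lfloor M/\lambda_a\rfloor$ and of bounding its $L^2$ mass.

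Next I would assemble the global interpolant as $f_v=\sum_{a=1}^A\sum_{\omega_j\in\Lambda_a} v_j\, g_{a,j}$. At its designated node $\omega_j$ it gives $v_j$ plus an error coming only from the tails of the $g_{b,k}$'s with $b\ne a$ evaluated at $\omega_j$; the separation hypothesis \eqref{eq:sep1} is precisely calibrated so that each such tail is smaller than roughly $(S\rho_j)^{-1}$-type, making the aggregate error vector $\epsilon$ satisfy $\|\epsilon\|_2<1$ (this is where the exponent $1/(2\lambda_a)$ and the factor $\lambda_a^{5/2}$ in the separation come from: one needs the kernel of degree $\sim M/\lambda_a$ evaluated at distance $\sim\mathrm{dist}/\,$(that degree's Nyquist width) to beat the amplification $S\rho_j$, and solving for $\mathrm{dist}$ produces those powers). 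For the $L^2$ bound, by the triangle inequality $\|f_v\|_{L^2}\le\sum_{a,j}|v_j|\,\|g_{a,j}\|_{L^2}$, but that would lose a factor $\sqrt S$; instead I would exploit the near-orthogonality of the $g_{a,j}$ (their essential supports sit in disjoint $O(1/M)$-neighborhoods of the distinct clumps and they are nearly orthogonal within a clump because the concentration kernel localizes them) to get $\|f_v\|_{L^2}^2\lesssim \sum_{a,j}|v_j|^2\,\|g_{a,j}\|_{L^2}^2$ up to a constant close to $1$. Since $\|v\|_2=1$, this yields $\|f_v\|_{L^2}^2\lesssim \max_{a,j}\|g_{a,j}\|_{L^2}^2\cdot(\text{weights})$, and after taking reciprocals and keeping track of constants one recovers exactly the weighted $\ell^2$ aggregate $\big(\sum_a\sum_{\omega_j\in\Lambda_a}(B_a\lambda_a^{\lambda_a-1}\rho_j)^2\big)^{-1/2}$, with the leading $\sqrt{M}$ coming from the $\|\cdot\|_{L^2(\T)}$ vs.\ $\|\cdot\|_2$ normalization of $\Phi$.

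I expect the main obstacle to be the simultaneous control of the two competing demands on the kernel: a high degree is needed so the bump is narrow (hence small off-clump error, allowing a weaker separation requirement), but degree is capped at $M$ and shared across the $\lambda_a$ Lagrange factors, and moreover a narrower bump has larger $L^2$ norm at the node, worsening the constant $B_a$. Balancing these—choosing the kernel degree as $\lfloor M/\lambda_a\rfloor$, proving the off-clump tail decays like a sufficiently high power of (degree$\cdot$distance)$^{-1}$, and simultaneously keeping $\|g_{a,j}\|_{L^2}$ down to the stated $B_a\lambda_a^{\lambda_a-1}\rho_j/\sqrt M$—is the delicate heart of the argument, and is presumably where the $M\ge 2S^2$ hypothesis is consumed (it guarantees the degree budget $M/\lambda_a$ is large relative to $\lambda_a$ and $S$). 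A secondary technical point is that within a single clump the naive Lagrange construction gives interpolants whose mutual inner products are not exactly zero; I would either symmetrize (build one polynomial per clump that handles all of $\Lambda_a$ at once via a vector-valued interpolation and then split) or absorb the cross terms into the constant, checking that the loss is the harmless factor $20\sqrt2/19$ rather than something growing with $\lambda_a$.
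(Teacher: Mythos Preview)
Your overall strategy matches the paper's proof almost exactly: the paper also builds, for each $\omega_j\in\Lambda_a$, a localized polynomial $I_j=G_j\cdot H_j\in\P(M)$ where $G_j$ is a Lagrange-type product (with dilated exponentials $e^{2\pi iQ_j\omega}$, $Q_j=\lfloor M/\lambda_a\rfloor$) enforcing $I_j(\omega_k)=\delta_{j,k}$ on $\Lambda_a$, and $H_j$ is a power of a Fej\'er kernel (specifically $F_{P_j}^{\lambda_a}$ with $P_j=\lfloor M/(2\lambda_a^2)\rfloor$) providing off-clump decay. The separation hypothesis \eqref{eq:sep1} is used exactly as you describe, to force $|I_j(\omega_k)|\le 1/(20S)$ for $\omega_k\notin\Lambda_a$, which makes $\|\epsilon\|_2\le 1/20$ and allows robust duality with the factor $19/20$.

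Your one misstep is in the $L^2$ bound. You worry that the triangle inequality $\|f_v\|_{L^2}\le\sum_j|v_j|\,\|I_j\|_{L^2}$ ``would lose a factor $\sqrt S$'' and propose to recover it via near-orthogonality of the $I_j$. This is unnecessary: applying Cauchy--Schwarz to that sum together with $\|v\|_2=1$ already gives
\[
\|f_v\|_{L^2}\le\Big(\sum_{j=1}^S\|I_j\|_{L^2}^2\Big)^{1/2},
\]
which is \emph{exactly} the weighted $\ell^2$ aggregate in the theorem statement. No orthogonality is needed, and no $\sqrt S$ is lost. The constant $20\sqrt2/19$ in $B_a$ comes from the $19/20$ of robust duality combined with a $\sqrt2$ from the estimate $\|I_j\|_{L^2}\le(2/M)^{1/2}B_a'\lambda_a^{\lambda_a-1}\rho_j$ (the $2$ arising from $P_j+1\ge M/(2\lambda_a^2)$), not from cross-term control. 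Your near-orthogonality route would, if it worked, actually yield a \emph{stronger} bound (with $\max_j$ instead of the $\ell^2$ sum), but that is not what the theorem claims and is not pursued in the paper.
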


\begin{remark}
	\label{rem:Ba}
	The constant $B_a$ is insensitive to the geometry of each clump $\Lambda_a$ because it only depends on $M$ and $\lambda_a$, and it is also independent of $S$. The dependence of $B_a$ on $M$ is weak because 
	$$\(\frac{M}{\lambda_a}\) \Big\lfloor \frac{M}{\lambda_a}\Big\rfloor^{-1} = 1+o\(\frac{M}{\lambda_a}\) \quad \text{as} \quad \frac{M}{\lambda_a}\to\infty. $$ 
	It is possible to upper bound $B_a$ by a constant that does not depend on $M$. Clearly $B_a$ does not dependent on $M$ when $\lambda_a=1$, and for $\lambda_a\geq 2$, since $t/\lfloor t \rfloor \leq t/(t-1)\leq 2$ for  $t>1$, we see that
	\[
	B_a
	\leq 
	\frac{20\sqrt 2}{19} \(1-\frac{\pi^2}{3\lambda_a^2}\)^{-(\lambda_a-1)/2} 2^{\lambda_a-1}.
	\]
	We can think of $B_a$ as a small universal constant because  and the function $n\mapsto (1-\pi^2/(3n^2))^{-(n-1)/2}$ defined on the integers $n\geq 2$ approaches a horizontal asymptote of $1$ quickly as $n$ increases. In the regime where each $\lambda_a$ is of moderate size and $M/\lambda_a$ is large, $B_a$ is approximately $20\sqrt 2/19\approx 1.4886$. 
\end{remark}

\begin{remark}
	Although this is not the main point of the theorem, we can also apply it to the well-separated case. Assume that $\Delta\geq 10 \sqrt S/M$. Then each clump $\Lambda_a$ contains a single point, $B_a=20\sqrt 2/19$ for each $1\leq a\leq A$, and $\rho_j=1$ for each $\omega_j\in\Omega$. We readily check that the conditions of Theorem \ref{thm:clump1} are satisfied, 
	and thus,
	\[
	\sigma_{\min}(\Phi)
	\geq \frac{19}{20\sqrt 2} \sqrt{\frac{M}{S}}. 
	\]
	This shows that $\sigma_{\min}(\Phi)$ is on the order of $\sqrt{M/S}$ if $\Delta$ is about $\sqrt{S}$ times larger than $1/M$. This result is weaker than the one in \cite{moitra2015matrixpencil}, which was derived using extremal functions that are specialized to the well-separated case. {Note that $\sqrt M$ is approximately the largest $\sigma_{\min}(\Phi)$ can be because $\sigma_{\max}(\Phi)\leq \|\Phi\|_F = \sqrt{MS}$, where $\|\cdot\|_F$ is the Frobenius norm.}  
\end{remark}

Theorem \ref{thm:clump1} is our most general lower bound for $\sigma_{\min}(\Phi)$ without a minimum separation condition. The bound in Theorem \ref{thm:clump1} can be reduced to a more explicit estimate if we consider the minimum separation of $\Omega$. 

\begin{assumption}[Clumps model with a minimum separation]
	\label{def:clumps2}
	We say that a finite set $\Omega\subset\T$ satisfies a clumps  model with a minimum separation with parameters $(A,M,S,\beta,\alpha)$ if $\Omega$ satisfies Assumption \ref{def:clumps} with parameters $(A,M,S,\beta)$
and moreover, the minimum separation of $\Omega$ satisfies
$\Delta \ge \alpha/M$
with $\max_{1\le a \le A } (\lambda_a-1)<1/\alpha.$
	 \end{assumption}

\begin{figure}[h]
	\centering
	\begin{tikzpicture}[xscale = 1,yscale = 1]
	\draw[thick] (-6,0) -- (-0.5,0);
	\filldraw[red] (-5,0) circle (0.1cm);		
	\filldraw[red] (-4.7,0) circle (0.1cm);		
	\filldraw[red] (-4.4,0) circle (0.1cm);		
	\draw[blue,thick,<->] (-4.7,-0.2) -- (-4.4,-0.2);
	\node[blue,below] at (-4.4,-0.2) {$\alpha/M$};
	
	\draw[blue,thick,<->] (-4.4,0.2) -- (-2,0.2);
	\node[blue,above] at (-3.2,0.2) {$\beta/M$};
	
	\node[above] at (-4.7,0.2) {$\Lambda_1$};
	\filldraw[red] (-2,0) circle (0.1cm);		
	\filldraw[red] (-1.7,0) circle (0.1cm);		
	\filldraw[red] (-1.4,0) circle (0.1cm);		
	\draw[blue,thick,<->] (-1.7,-0.2) -- (-1.4,-0.2);
	\node[blue,below] at (-1.4,-0.2) {$\alpha/M$};
	\node[above] at (-1.7,0.2) {$\Lambda_2$};
	\draw[dotted,thick] (-0.5,0) -- (1,0);
	\draw[thick] (1,0) -- (6.4,0);
	\filldraw[red] (2,0) circle (0.1cm);		
	\filldraw[red] (2.3,0) circle (0.1cm);		
	\filldraw[red] (2.6,0) circle (0.1cm);		
	\draw[blue,thick,<->] (2.3,-0.2) -- (2.6,-0.2);
	\node[blue,below] at (2.6,-0.2) {$\alpha/M$};
	\node[above] at (2.3,0.2) {$\Lambda_{A-1}$};
	\filldraw[red] (5,0) circle (0.1cm);		
	\filldraw[red] (5.3,0) circle (0.1cm);		
	\filldraw[red] (5.6,0) circle (0.1cm);		
	\draw[blue,thick,<->] (5.3,-0.2) -- (5.6,-0.2);
	\node[blue,below] at (5.6,-0.2) {$\alpha/M$};
	\node[above] at (5.3,0.2) {$\Lambda_A$};
	\draw[blue,thick,<->] (2.6,0.2) -- (5,0.2);
	\node[blue,above] at (3.8,0.2) {$\beta/M$};
	%%  
	%\node[below] at (-0.5,-1) {$\lambda =|\Lambda_1| = |\Lambda_2|= \ldots = |\Lambda_A| $};
	\end{tikzpicture}
	\caption{An example of clumps model with a minimum separation.}
	\label{FigDemoClumps1}
\end{figure}

Theorem \ref{thm:clump2} below is derived from Theorem \ref{thm:clump1} under Assumption \ref{def:clumps2}, which is proved in Appendix \ref{proof:clump2}.

\begin{theorem}
	\label{thm:clump2}
	Fix positive integers $A,M,S$ with $M\geq S^2$. Suppose $\Omega$ satisfies Assumption \ref{def:clumps2} with parameters $(A,M,S,\beta,\alpha)$. 
	If $A>1$, assume that 
	\begin{equation}
	\label{eq:sep2}
	\beta
	\geq \max_{1\leq a\leq A}\  \frac{20S^{1/2}\lambda_a^{5/2}}{\alpha^{1/2}}.
	\end{equation}
	For each $1\leq a\leq A$, let $B_a:=B_a(\lambda_a,M)$ be the constant defined in Theorem \ref{thm:clump1} and
	\begin{equation}
	\label{thm2ca}
	C_a
	:=C_a(\lambda_a,M)
	:= B_a \lambda_a\(\frac{\lambda_a}{\pi}\)^{\lambda_a-1}
	\(\sum_{j=1}^{\lambda_a} \prod_{k=1,\ k\not=j}^{\lambda_a} \frac{1}{(j-k)^2}\)^{1/2}.
	\end{equation}
		Then the minimum singular value of $\Phi=\Phi(\Omega,M)$ defined in \eqref{eqphi} satisfies 
	\begin{equation}
	\label{thm2lowerbound}
	\sigma_{\min}(\Phi)
	\geq \sqrt{M}\ \(\sum_{a=1}^A \big( C_a \alpha^{-\lambda_a+1} \big)^2 \)^{-1/2}. 
	\end{equation}
\end{theorem}

\begin{remark}
	\label{remark:thm12}
	The main difference between Theorems \ref{thm:clump1} and \ref{thm:clump2} is that, Theorems \ref{thm:clump1} is more general and accurate, while Theorems \ref{thm:clump2} is more concrete since it bounds $\sigma_{\min}(\Phi)$ in terms of $\alpha=1/\SRF$. The conclusions in Theorem \ref{thm:clump1} and Theorem \ref{thm:clump2} are identical for the special case where each clump $\Lambda_a$ consists of $\lambda_a$ points consecutively spaced by $\alpha/M$. For all other configurations of $\Omega$, Theorem \ref{thm:clump1} provides a more accurate lower bound. The separation condition \eqref{eq:sep1} is always weaker than \eqref{eq:sep2}. 
\end{remark}

\begin{remark}
	\label{rem:Ca}
	According to Remark \ref{rem:Ba}, the constant $B_a$ can be thought of as a universal constant. For the behavior of $C_a$, we have a simple upper bound $C_a\leq B_a \lambda_a^{3/2} (\lambda_a/\pi)^{\lambda_a-1}$, which is a reasonable bound for small and moderate $\lambda_a$. A more refined argument (see Lemma \ref{lemma:comb} in Appendix \ref{proof:discrete}) shows that 
	\begin{equation}
	C_a 
	\leq  B_a\lambda_a \(\frac{\lambda_a}{\pi}\)^{\lambda_a-1} 2\pi e \sqrt{\lambda_a} \(\frac{\lambda_a}{2}-1\)^{-\lambda_a} e^{\lambda_a}
	=2\pi^2 e B_a\sqrt{\lambda_a} \(\frac{2e\lambda_a}{\pi(\lambda_a-2)}\)^{\lambda_a}.
	\label{eqca}
	\end{equation}
	For large $\lambda_a$, the right hand side scales like $C B_a \sqrt{\lambda_a} c^{\lambda_a}$ for universal constants $C,c>0$. 
\end{remark}

The main contribution of this theorem is the exponent on $\SRF=(M\Delta)^{-1}=1/\alpha$, which depends on $\lambda_a$ as opposed to the sparsity $S$. Let us look at a special case of $\Omega$, where each clump $\Lambda_a$ contains $\lambda$ points equally spaced by $\alpha/M$ and the distance between clumps is $\beta/M$ where $\beta$ is properly chosen such that \eqref{eq:sep2} holds. See Figure \ref{FigDemoClumps1} for an illustration.

In this case, Theorem \ref{thm:clump2} implies
\begin{equation}
\label{eqlowert1}
\sigma_{\min}(\Phi(\Omega,M)) \ge
C(\lambda) A^{-1/2} \underbrace{\sqrt{M}}_{\text{scaling}}  \cdot \underbrace{\alpha^{\lambda-1}}_{(1/{\rm SRF})^{\lambda-1}},
\end{equation}
where the constant $C(\lambda)$ only  depends on $\lambda$. Here, $\sqrt M$ is a natural scaling factor because each column of $\Phi(\Omega,M)$ has Euclidean norm $\sqrt{M+1}$. Importantly, the lower bound scales like $\alpha^{\lambda-1}=(1/{\rm SRF})^{\lambda -1}$	where $\lambda$ is the cardinality of each clump instead of $S$, which matches our intuition that the conditioning of $\Phi(\Omega,M)$ should only depend on how complicated each individual clump is.

% % % % % % % % % % % % % % % % % % % %
\subsection{Upper bound for the minimum singular value}

Our lower bounds in Theorems \ref{thm:clump1} and \ref{thm:clump2} show that, for the special support in Figure \ref{FigDemoClumps1}, our lower bound for $\sigma_{\min}(\Phi)$ depends linearly on $\SRF^{-\lambda+1}$. Proposition \ref{prop:upper} below proves that this dependence is sharp. Its proof can be found in Appendix \ref{proof:upper} and uses a method similar to one in \cite{donoho1992superresolution}. 

\begin{proposition}
	\label{prop:upper}
	Fix positive integers $M,S,\lambda$ such that $\lambda\leq S\leq M-1$. Let $\omega\in\T$ and $\alpha>0$ such that
	\begin{equation}
	\label{eq:alpha}
	\alpha\leq \frac{1}{C(\lambda) \sqrt{M+1}},
	\quad\text{where}\quad
	C(\lambda) = 2\pi \sum_{j=0}^{\lambda-1} {\lambda-1 \choose j} \frac{j^\lambda}{\lambda!}. 
	\end{equation}
	Assume that $\Omega=\{\omega_j\}_{j=1}^S\subset\T$ contains the set,
	\[
	\Lambda
	=\omega+\Big\{0,\frac{\alpha}{M},\dots,\frac{(\lambda-1)\alpha}{M} \Big \}.
	\]
	Let $\Phi=\Phi(\Omega,M)$ be the $(M+1)\times S$ Vandermonde matrix associated with $\Omega$ and $M$. Then
	\[
	\sigma_{\min}(\Phi)
	\leq {2\lambda-2 \choose \lambda-1}^{-1/2} 2\sqrt{M+1}\ (2\pi \alpha)^{\lambda-1}.
	\]
\end{proposition}

Proposition \ref{prop:upper} shows that if $\Omega$ contains a set $\Lambda$, which consists of $\lambda$ points equally spaced by $\alpha/M$ for a sufficiently small $\alpha$, then $\sigma_{\min}(\Phi)$ depends on $\alpha^{\lambda-1}$. This implies that the dependence on $\alpha$ in the order of $\alpha^{\lambda-1}$ in our lower bound of \eqref{eqlowert1} is tight.

% % % % % % % % % % % % % % % % % % % %
\subsection{Numerical accuracy of Theorems \ref{thm:clump1} and \ref{thm:clump2}}
\label{subsecboundnum}

To numerically evaluate the accuracy of Theorems \ref{thm:clump1} and \ref{thm:clump2}, we consider the case where $\Omega$ consists of $A$ clumps, each clump contains $\lambda$ equally spaced points separated by $\alpha/M$, and the clump separation is $\beta/M$. We fix $M$, vary $\SRF=1/\alpha$, and select $\beta$ to be the right hand side of inequality \eqref{eq:sep2}. As discussed in Remark \ref{remark:thm12}, both Theorem \ref{thm:clump1} and \ref{thm:clump2} provide the identical lower bound for $\sigma_{\min}(\Phi)$ for this example of $\Omega$, which is of the form
\begin{equation}
\label{eq:true}
\sigma_{\min}(\Phi)
\geq C(\lambda) \sqrt{\frac{M}{A}} \, \SRF^{-\lambda+1}
=:\phi(A,M,\lambda,\SRF).
\end{equation}

Figure \ref{fig:exper1} displays plots of $\sigma_{\min}(\Phi)$ and $\phi$ as functions of $\SRF$, for several values of $A$ and $\lambda$. Their slopes are identical, which establishes that the dependence on $\SRF$ in inequality \eqref{eq:true} is exact. In other words, the experiments verify that $\sigma_{\min}(\Phi)$ should only depend on the cardinality of each clump and not on the total number of points $S$ under the theorems' assumptions.

\begin{figure}
	\hspace{-0.5cm}
	\subfigure[2-clump model: $A=2$]{\includegraphics[width=0.55\textwidth]{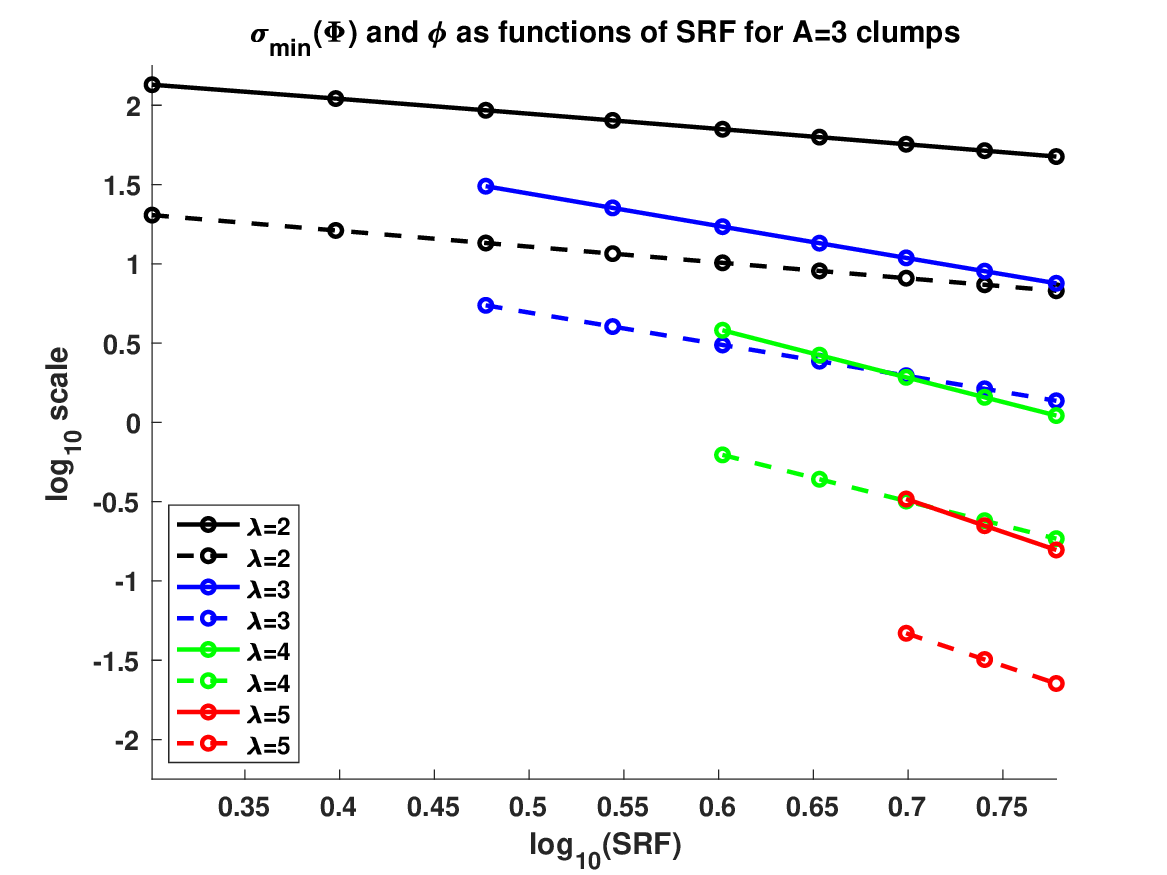}}
	\hspace{-1cm}
	\subfigure[3-clump model: $A=3$]{\includegraphics[width=0.55\textwidth]{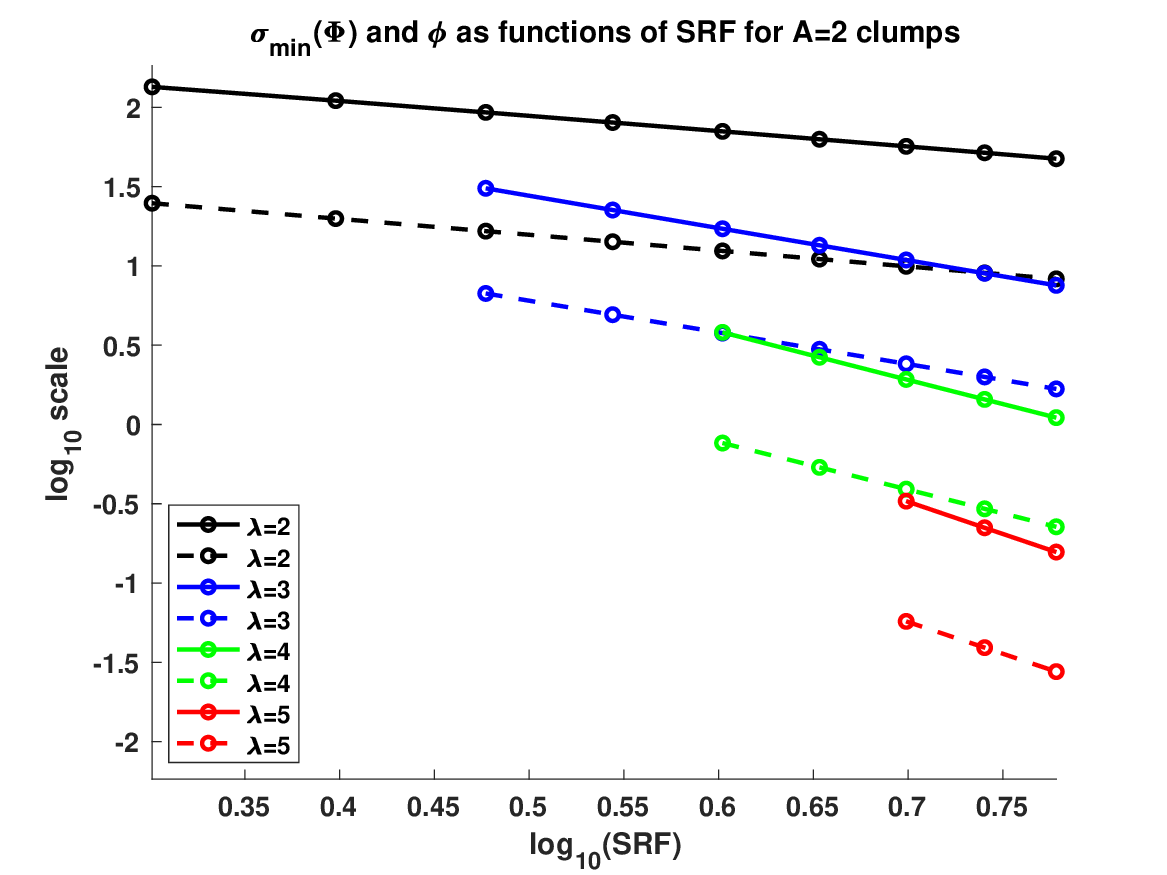}}
	\caption{Plots of $\sigma_{\min}(\Phi)$ (solid lines) and $\phi$ in \eqref{eq:true} (dashed lines), as functions of $\SRF=1/\alpha$ when the other parameters $A,\lambda,M$ are fixed. For all the curves, we set $M=50000$ and let $\Omega$ consist of $A$ clumps, where each clump contains $\lambda$ points equally separated by $\alpha/M$ and the clump separation is given by the right hand side of inequality \eqref{eq:sep2}. We consider the following range of parameters: $2\leq A\leq 3$, $2\leq \lambda \leq 5$, and $\lambda\leq \SRF\leq 6$.}
	\label{fig:exper1}  
\end{figure}

% % % % % % % % % % % % % % % % % % % %
\subsection{Proof strategy by polynomial duality and interpolation}
\label{sec:duality}

Our primary method for lower bounding $\sigma_{\min}(\Phi)$ is through a dual characterization via trigonometric interpolation. We begin with some notation and definitions. Let $\P(M)$ be the space of all smooth functions $f$ on $\T$ such that for all $\omega\in\T$, 
\[
f(\omega)=\sum_{m=0}^{M} \hat f(m)e^{2\pi im\omega}.
\]
We call $f$ a {\it trigonometric polynomial} of degree at most $M$. 

\begin{definition}[Polynomial interpolation set]
	Given $\Omega=\{\omega_j\}_{j=1}^S\subset\T$ and $v\in\C^S$, the polynomial interpolation set, denoted by $\P(\Omega,M,v)$, is the set of $f\in\P(M)$ such that $f(\omega_j)=v_j$ for each $1\leq j\leq S$. 
\end{definition}

We have the following duality between the minimum singular value of Fourier matrices and the polynomial interpolation set, and its proof is in Appendix \ref{proof:duality}. 

\begin{proposition}[Exact duality]
	\label{prop:duality}
	Fix positive integers $M$ and $S$ such that $S\leq M-1$. For any set $\Omega=\{\omega_j\}_{j=1}^S\subset\T$, let $\Phi=\Phi(\Omega,M)$ be the $(M+1)\times S$ Vandermonde matrix associated with $\Omega$ and $M$. For any $w\in\C^S$, the set $\P(\Omega,M,w)$ is non-empty. If $\sigma_{\min}(\Phi)=\|\Phi v\|_2$ for some unit norm vector $v\in\C^S$, then
	\[
	\sigma_{\min}(\Phi)
	=\max_{f\in \mathcal{P}(M,\Omega,v)} \|f\|_{L^2(\T)}^{-1}.
	\]
\end{proposition}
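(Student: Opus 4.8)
The plan is to dualize on the Fourier side: identify a polynomial $f\in\P(M)$ with its coefficient vector, turn the interpolation constraint into a linear system governed by $\Phi^*$, and then recognize $\sigma_{\min}(\Phi)^{-1}$ as the norm of a minimum-norm solution of that system.

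First I would set up the dictionary between $\P(M)$ and $\C^{M+1}$: every $f\in\P(M)$ is determined by $c=(\hat f(0),\dots,\hat f(M))^{\mathsf T}\in\C^{M+1}$, and Parseval gives $\|f\|_{L^2(\T)}=\|c\|_2$. The interpolation conditions $f(\omega_j)=v_j$ read $\sum_{m=0}^M \hat f(m)e^{2\pi i m\omega_j}=v_j$, which, matching the convention that the $(m,j)$ entry of $\Phi$ is $e^{-2\pi i m\omega_j}$, is exactly $\Phi^* c=v$. Hence
\[
\max_{f\in\P(M,\Omega,v)}\|f\|_{L^2(\T)}^{-1}
=\Big(\min_{c\in\C^{M+1}\colon \Phi^*c=v}\|c\|_2\Big)^{-1},
\]
where the feasible set is nonempty precisely because $\P(M,\Omega,v)\neq\emptyset$ for $S\le M-1$ (the fact already recorded in the excerpt), and the same hypothesis, together with distinctness of the $S$ nodes, guarantees that $\Phi$ has full column rank $S$.

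Next I would evaluate the minimal $\|c\|_2$. Writing $c=c_\star+c_0$ with $c_\star\in\mathrm{range}(\Phi)$ and $c_0\in\ker(\Phi^*)=\mathrm{range}(\Phi)^\perp$, the constraint $\Phi^*c=v$ involves only $c_\star$, and $\|c\|_2^2=\|c_\star\|_2^2+\|c_0\|_2^2$; so the minimizer has $c_0=0$ and $c_\star=\Phi w$ with $\Phi^*\Phi w=v$. Since $\Phi^*\Phi$ is invertible, $w=(\Phi^*\Phi)^{-1}v$ and
\[
\min_{\Phi^*c=v}\|c\|_2^2
= w^*\Phi^*\Phi w
= v^*(\Phi^*\Phi)^{-1}v.
\]
Finally I would use the hypothesis $\|\Phi v\|_2=\sigma_{\min}(\Phi)$ with $\|v\|_2=1$: by the variational characterization of the smallest eigenvalue of $\Phi^*\Phi$, equality forces $\Phi^*\Phi v=\sigma_{\min}(\Phi)^2 v$, hence $(\Phi^*\Phi)^{-1}v=\sigma_{\min}(\Phi)^{-2}v$ and $v^*(\Phi^*\Phi)^{-1}v=\sigma_{\min}(\Phi)^{-2}$. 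Chaining the three displays gives $\max_{f}\|f\|_{L^2(\T)}^{-1}=\sigma_{\min}(\Phi)$, and the extremum is attained at the polynomial with coefficient vector $\Phi(\Phi^*\Phi)^{-1}v$.

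There is no serious obstacle: this is the standard duality between the minimum singular value and minimum-norm interpolation. The only points that need care are (i) bookkeeping with the transpose/conjugate conventions so that the constraint emerges as $\Phi^*c=v$ (harmless, since one may conjugate $v$), (ii) confirming that the supremum over $f$ is actually a maximum — immediate once it is rewritten as a minimum-norm problem over a nonempty affine subspace on which $\|f\|_{L^2(\T)}>0$ because $v\neq 0$ — and (iii) invoking full column rank of $\Phi$ to invert $\Phi^*\Phi$, which holds since $\Omega$ consists of $S\le M-1$ distinct nodes.
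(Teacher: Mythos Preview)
Your proof is correct and follows essentially the same route as the paper: identify $f$ with its Fourier-coefficient vector, rewrite the interpolation constraint as $\Phi^*c=v$, and reduce to a minimum-norm least-squares problem whose solution is the Moore--Penrose inverse applied to $v$. You are simply more explicit than the paper in computing $\min_{\Phi^*c=v}\|c\|_2^2=v^*(\Phi^*\Phi)^{-1}v$ and in noting that the hypothesis forces $v$ to be an eigenvector of $\Phi^*\Phi$ with eigenvalue $\sigma_{\min}(\Phi)^2$; the paper compresses both of these into the single line $\|(\Phi^*)^\dagger v\|_2=1/\sigma_{\min}(\Phi)$.
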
 

There are two main technical difficulties with using the proposition. First, in the extreme case that $S\ll M$, we expect $\P(\Omega,M,v)$ to contain a rich set of functions. However, we do not know much about this set, aside from it being convex. Moreover, this set is extremely dependent on $\Omega$ because we know that $\sigma_{\min}(\Phi)$ is highly sensitive to the configuration of $\Omega$. Second, we have no information about $v$, a right singular vector associated with $\sigma_{\min}(\Phi)$. Yet, in order to invoke the duality result, we must examine the set $\P(\Omega,M,v)$.

It turns out that we can circumvent both of these issues, but doing so will introduce additional technicalities and difficulties. Proposition \ref{prop:duality2} below is a relaxation of the exact duality in Proposition \ref{prop:duality}, which will provide us with an extra bit of flexibility. Its proof can be found in Appendix \ref{proof:duality2}. 

\begin{proposition}[Robust duality]
	\label{prop:duality2}
	Fix positive integers $M$ and $S$ such that $S\leq M-1$. For any set $\Omega=\{\omega_j\}_{j=1}^S\subset\T$, unit norm vector $v\in\C^S$, and $\epsilon\in\C^S$ with $\|\epsilon\|_2< 1$, if there exists $f\in\P(M)$ such that $f(\omega_j)=v_j+\epsilon_j$ for each $1\leq j\leq S$, then 
	\[
	\|\Phi v\|_2
	\geq (1-\|\epsilon\|_2) \|f\|_{L^2(\T)}^{-1}.
	\]
\end{proposition}

In order to use these results to derive a lower bound for $\sigma_{\min}(\Phi)$ for a given $\Omega$, for each unit norm $v\in\C^S$, we construct a $f_v\in\P(\Omega,M,v)$ and then bound $\|f_v\|_{L^2(\T)}$ uniformly in $v$. This process must be done carefully; otherwise we would obtain a loose lower bound for $\sigma_{\min}(\Phi)$. This construction is technical and our approach is inspired by uncertainty principles for trigonometric polynomials \cite{turan1946rational} and uniform dilation problems on the torus \cite{alon1992uniform,konyagin2000uniformly}.

\subsection{Related work on the conditioning of Vandermonde matrices}
\label{sec:relatedVander}

The spectral properties of a Vandermonde matrix greatly depend on its nodes. Real \cite{gautschi1987lower,beckermann2000condition,eisinberg2001rectangular,eisinberg2001vandermonde}, random \cite{ryan2009asymptotic,tucci2011eigenvalue,tucci2014asymptotic}, and those located within the unit complex disk \cite{bazan2000conditioning,aubel2017vandermonde} have been studied. In this paper, we study tall and deterministic Vandermonde matrices whose nodes are on the complex unit circle, which are generalizations of harmonic Fourier matrices.

In the well-separated case $\Delta\geq 1/M$, Vandermonde matrices are well-conditioned \cite{ferreira1999super,berman2007perfect,liao2016music,moitra2015matrixpencil}. There are fewer available results for the super-resolution regime $\Delta\leq 1/M$. On one extreme, there exists a set $\Omega$ for which the conditioning of $\Phi$ scales linearly in $\SRF^{S-1}$ when all other parameters are fixed \cite{moitra2015matrixpencil}. On the other hand, Theorems \ref{thm:clump1} and \ref{thm:clump2} show that this is overly pessimistic under appropriate clumps model. 

Classical work, such as \cite{gautschi1962inverses}, primarily focused on square Vandermonde matrices. One of the earliest results for rectangular ones that also incorporates geometric information follows by combining \cite[Theorem 1]{bazan2000conditioning} and \cite[Theorem 1]{gautschi1962inverses}, which yields the estimate
\begin{equation}
\label{eq:classical}
\sigma_{\min}(\Phi)
\geq \sqrt{\frac{M}{S}} \min_{1\leq j\leq S} \prod_{k=1,\ k\not=j}^S \frac{|\omega_j-\omega_k|_\T}{2}.
\end{equation}
When $\Omega$ consists of separated clumps with parameters $(A,M,S,\beta)$, this inequality yields
\[
\sigma_{\min}(\Phi)
\geq 2^{-S} \sqrt{\frac{M}{S}} \, \min_{1\leq a\leq A} \(\frac{\beta}{M}\)^{S-\lambda_a} \, \Delta^{\lambda_a-1}.
\]
When $M$ is large, this lower bound is significantly worse than the inequality in Theorem \ref{thm:clump2} which depends on terms involving $\SRF^{-\lambda_a+1}=(M\Delta)^{\lambda_a-1}$. 

The preceding discussion is closely related to an important difference between the complexity $\rho_j$ and $\prod_{j\not=k} |\omega_j-\omega_k|^{-1}$, the reciprocal of the product term in \eqref{eq:classical}. Notice that $\rho_j$ is local in the sense it only depends on the structure of $\Omega$ in a $1/M$ neighborhood of $\omega_j$ and it includes the $\pi M$ scaling term. We include the $\pi$ factor because we use the Fourier transform with the $2\pi$ convention in \eqref{eq:model1}, but the $M$ factor is very important. In the super-resolution and imaging communities, it is known that $\SRF=(M\Delta)^{-1}$, not $\Delta^{-1}$, is the correct quantity to describe the feasibility of super-resolution. The complexity can be viewed as a refined notion of $\SRF^S=(M\Delta)^{-S}$ since $\rho_j$ only depends on the structure of $\Omega$ near $\omega_j$.

In the process of revising the first version of this manuscript, \cite{batenkov2020conditioning} independently derived lower bounds for $\sigma_{\min}(\Phi)$ with clustered nodes. Here the differences between their \cite[Corollary 3.6]{batenkov2020conditioning} and our Theorem \ref{thm:clump2}: (1) They assume that $\Omega$ consists of clumps that are all contained in an interval of length approximately $1/S^2$. For ours, the clumps can be spread throughout $\T$ and not have to be concentrated on a sub-interval. (2)	They require the aspect ratio $M\ge 4S^3$, whereas we only need $M \ge 2S^2$. They also require an upper bound on $M/S$, which prohibits their Vandermonde matrix from being too tall. 
(3) Their lower bound is of the form, $\sigma_{\min}(\Phi)\geq C_S\sqrt{M} \cdot \SRF^{-\lambda+1}$, where $C_S$ depends only on $S$ and scales like $S^{-2S}$. According to \eqref{eqca}, our constant $C_a$ scales like $ \sqrt{\lambda_a}\, c^{\lambda_a}$ for a universal constant $c>0$. (4) Our clump separation condition \eqref{eq:sep2} in Theorem \ref{thm:clump2} depends on $\alpha=\SRF^{-1}$, which means that for large $\SRF$, the clumps have to be significantly separated. In contrast, their clump separation condition only depends on $S$ but all of their clumps must remain in an interval with length approximately $1/S^2$. 

During the review period of this paper, the singular values of the same  Vandermonde matrices with nearly colliding pairs were analyzed in \cite{kunis2020condition} and \cite[Corollary 4.2]{diederichs2019well}. The nearly colliding pairs form a special case of the clumps model when each clump has at most two closely spaced points.  In this special case, the separation condition between the clumps are improved to be independent of $\Delta$ in \cite{kunis2020condition,diederichs2019well}.
A recent work \cite{kunis2020smallest} refined the main strategy in the proofs of Theorems \ref{thm:clump1} and \ref{thm:clump2} to improve the constant $C_a$ and strengthen our $\ell^2$ bound to an $\ell^\infty$ one. They also provided an extension to the multi-dimensional case. Another recent paper \cite{batenkov2020spectral} provides bounds for all singular values of Vandermonde matrices under a clumps model.

Lower bounds for the conditioning of arbitrarily sized $p\times q$ cyclically contiguous sub-matrices of the $N\times N$ discrete Fourier transform matrix was derived in \cite{barnett2020exponentially}. This sub-matrix is equivalent to our $\Phi(\Omega,p)$, where $\Omega:=\{k/N\}_{k=1}^q$. A comparison between our Proposition \ref{prop:upper} and the main result in \cite{barnett2020exponentially} can be found in the referenced paper.

%!TEX root = main.tex

% % % % % % % % % % % % % % % % % % % %
% % % % % % % % % % % % % % % % % % % %
\section{Min-max error and worst case analysis}
\label{sec:minmax}

\subsection{A grid model and the min-max error}

In order to understand the fundamental limits of super-resolution without any geometric information about $\Omega$, in this section, we study the min-max error under a grid and sparsity assumption on $\Omega$. Our results will illustrate that it is not only natural, but also necessary  to take the geometric information of $\Omega$ into account.

Suppose $\Omega$ has cardinality $S$ and is a subset of $\{n/N\}_{n=0}^{N-1}$, which we refer to as the grid with spacing $1/N$. In the super-resolution literature, this is called the {\it on-the-grid} model. This assumption implicitly places a minimum separation requirement so that all point sources are separated by at least $1/N$ and $\SRF=N/M$. In comparison to the clumps model, the one considered in this section only has a sparsity constraint and places no geometric constraints on the support set. The grid assumption is purely for technical reasons.  

We will define a min-max error proposed in \cite{donoho1992superresolution}. Fix positive integers $M,N,S$ such that $S\leq M\leq N$ and let $\delta>0$. We define a set of vectors in $\C^{M+1}$,
\begin{align*}
\calY 
&:= \calY(M,N,S,\delta) \\
&=\Big\{y\in\C^{M+1}: \exists\, \mu \text{ supported in } \Omega \subset \{n/N\}_{n=0}^{N-1}, \ |\Omega| = S,  \(\sum_{k=0}^M |y_k-\hat\mu(k)|^2\)^{1/2} \leq \delta \Big \}. 
\end{align*}
Let $\calA=\calA(M,N,S,\delta)$ be the set of functions $\varphi$ that maps each $y\in \calY$ to a discrete measure $\varphi_y$ supported on the grid with spacing $1/N$.

\begin{definition}[Min-max error]
	The $\ell^2$ {\it min-max error} for the on-the-grid model is
	\[
	\calE(M,N,S,\delta)
	:=\inf_{\varphi\in\calA}\ \sup_{y\in \calY(M,N,S,\delta)}  \(\sum_{n=0}^{N-1} \Big|\varphi_y\( \frac{n}{N}\)-\mu\( \frac{n}{N}\) \Big|^2\)^{1/2}.
	\]
\end{definition}

We interpret $\delta$ as the noise level and $\calY$ as the ``signal space" formed by all possible $\delta$ perturbations of the Fourier coefficients of a measure with at most $S$ Dirac masses on the grid. We interpret a function $\varphi\in \calA$ as an ``algorithm" that maps a given $y\in \calY$ to a measure $\varphi_y$ that approximates $\mu$. By taking the infimum over all possible algorithms (which includes those that are computationally intractable), the min-max error is the reconstruction error incurred by the best algorithm, when measured against the worst case signal and noise. This quantity describes the fundamental limit of super-resolution under sparsity constraints, and no algorithm can perform better than the min-max rate. 

\subsection{Lower and upper bounds of the min-max error}

To lower and upper bound the min-max error, 
we follow the approach of \cite{demanet2015recoverability} to connect the min-max error with $\sigma_{\min}(\Phi(\Omega,M))$ for the worst subset $\Omega$ on the grid.

\begin{definition}[Lower restricted isometry constant]
	Fix positive integers $M,N,S$ such that $S\leq M\leq N$. The lower restricted isometry constant of order $S$ is defined as
	\[
	\Theta(M,N,S)
	:=\min_{\Omega \subset \{n/N\}_{n=0}^{N-1}, \ |\Omega| = S} \sigma_{\min}(\Phi(\Omega,M)).
	\]
\end{definition}

\begin{remark}
	While $\Theta(M,N,S)$ is related to the lower bound of the $S$-{\it restricted isometry property} (RIP) in compressive sensing \cite{candes2008restricted}, there is a major difference. If we select $M$ appropriately chosen rows of the $N\times N$ Discrete Fourier Transform (DFT), then every $M\times S$ sub-matrix is well-conditioned, see \cite{rudelson2008sparse}. However, $\Phi(\Omega,M)$ uses the first $M$ rows of the DFT matrix, so $\Phi(\Omega,M)$ may be ill-conditioned. 
\end{remark}

The following result establishes the relationship between the min-max error and the lower restricted isometry constant. An analogue of this result for a similar super-resolution problem on $\R$ was proved in \cite{demanet2015recoverability}. Their proof carries over to this discrete setting with minor modifications. To keep this paper self-contained, we prove it in Appendix \ref{proof:demanet} and it relies on the grid assumption.  

\begin{proposition}
	[Min-max error and lower restricted isometry constant]
	\label{prop:demanet}
	Fix positive integers $M,N,S$ such that $2S\leq M\leq N$, and let $\delta>0$. Then,
	\[
	\frac{\delta}{2\Theta(M,N,2S)}
	\leq \calE(M,N,S,\delta)
	\leq \frac{2\delta}{\Theta(M,N,2S)}. 
	\]
\end{proposition}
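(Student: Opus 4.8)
The plan is to establish the two inequalities separately, in each case reducing the min-max error to a question about the worst-case minimum singular value of a Fourier matrix of order $2S$. The key observation that makes order $2S$ (rather than $S$) appear is that the difference of two $S$-sparse measures on the grid is $2S$-sparse. For the \textbf{lower bound}, I would argue by an indistinguishability (two-point) argument: suppose $\mu_1$ and $\mu_2$ are two $S$-sparse measures on the grid of width $1/N$ whose union of supports has cardinality $\tau\leq 2S$, and set $\nu=\mu_1-\mu_2$. If we choose noise vectors $\eta^{(1)},\eta^{(2)}$ with $\eta^{(1)}-\eta^{(2)}=-\widehat{\nu}|_{\{0,\dots,M\}}$ and each of norm at most $\delta$, then $y(\mu_1,\eta^{(1)})=y(\mu_2,\eta^{(2)})$ — call it $y$ — so any algorithm $\varphi$ returns the same $\varphi_y$ for both, and by the triangle inequality the larger of the two reconstruction errors is at least $\tfrac12\|\mu_1-\mu_2\|_{\ell^2(\text{grid})}$. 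This requires $\|\widehat{\nu}\|_2\leq 2\delta$, i.e. $\|\Phi(\Lambda,M)c\|_2\leq 2\delta$ where $\Lambda=\supp(\nu)$ and $c$ is the coefficient vector of $\nu$; choosing $c$ along the bottom right singular vector of $\Phi(\Lambda,M)$ and scaling so that $\|c\|_2\,\sigma_{\min}(\Phi(\Lambda,M))=2\delta$ gives reconstruction error $\geq \tfrac12\|c\|_2 = \delta/\sigma_{\min}(\Phi(\Lambda,M))$. One must also check the decomposition $\nu=\mu_1-\mu_2$ into two genuinely $S$-sparse pieces on the grid is possible for any $2S$-sparse $\nu$ (split the support into two halves of size $\leq S$), and that the noise split into two vectors each of norm $\leq\delta$ is possible (it is, since $\|\eta^{(1)}-\eta^{(2)}\|_2\leq 2\delta$ can be realized with $\|\eta^{(1)}\|_2,\|\eta^{(2)}\|_2\leq\delta$ by an appropriate centered choice). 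Optimizing over $\Lambda$ (the worst grid support of size $2S$) yields $\calE(M,N,S,\delta)\geq \delta/(2\Theta(M,N,2S))$.

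For the \textbf{upper bound}, I would exhibit a concrete (possibly inefficient) algorithm $\varphi$ and bound its worst-case error. Given $y=\Phi(\Omega,M)x+\eta$ with $\|\eta\|_2\leq\delta$, let $\varphi_y$ be any $S$-sparse measure on the grid whose Fourier data best matches $y$, say $\varphi_y=\argmin$ over all $S$-sparse grid measures $\nu$ of $\|y-\widehat{\nu}|_{\{0,\dots,M\}}\|_2$. Since $\mu$ itself is a feasible candidate with residual $\|\eta\|_2\leq\delta$, the minimizer satisfies $\|y-\widehat{\varphi_y}\|_2\leq\delta$, hence $\|\widehat{\mu}-\widehat{\varphi_y}\|_2\leq\|\widehat{\mu}-y\|_2+\|y-\widehat{\varphi_y}\|_2\leq 2\delta$ on the first $M+1$ Fourier modes. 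Now $\mu-\varphi_y$ is supported on at most $2S$ grid points, so writing $\Lambda'=\supp(\mu-\varphi_y)$ and $c'$ for its coefficient vector, $\|\widehat{\mu}-\widehat{\varphi_y}\|_2 = \|\Phi(\Lambda',M)c'\|_2 \geq \sigma_{\min}(\Phi(\Lambda',M))\,\|c'\|_2 \geq \Theta(M,N,2S)\,\|c'\|_2$. Therefore $\|\mu-\varphi_y\|_{\ell^2(\text{grid})}=\|c'\|_2\leq 2\delta/\Theta(M,N,2S)$, uniformly over $y\in\calY$, which gives the claimed upper bound on $\calE$. The hypothesis $2S\leq M$ is exactly what guarantees $\Phi(\Lambda',M)$ with $|\Lambda'|\leq 2S\leq M$ has trivial kernel so that $\Theta(M,N,2S)>0$ and these manipulations are meaningful; one should also note $\Phi(\Lambda',M)$ may use fewer than $2S$ columns (if supports overlap), but $\Theta$ is a minimum over supports of cardinality exactly $2S$ and is monotone under taking subsets (removing columns can only increase $\sigma_{\min}$), so the bound still applies after padding the support to size $2S$.

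I expect the \textbf{main obstacle} to be handling the bookkeeping in the lower-bound construction cleanly: one must verify that an arbitrary $2S$-sparse signed/complex measure on the grid really does decompose as a difference of two $S$-sparse grid measures (trivial by partitioning the support), and — more delicately — that the worst-case support of size $2S$ achieving $\Theta(M,N,2S)$ can actually be split so that the associated singular vector's mass is compatible with the $\delta$-noise budget on \emph{each} of the two signals, not just their difference. Centering the noise — taking $\eta^{(1)}=-\tfrac12\widehat{\nu}$, $\eta^{(2)}=+\tfrac12\widehat{\nu}$ restricted to modes $0,\dots,M$ — resolves this, since then $\|\eta^{(j)}\|_2=\tfrac12\|\widehat{\nu}\|_2\leq\delta$. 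Everything else is the standard two-point Le Cam–type argument adapted from \cite{demanet2015recoverability}, and the discrete/torus setting changes nothing essential. The factor-of-two gap between the lower and upper constants is inherent to this argument and is not claimed to be tight.
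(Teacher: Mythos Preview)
Your proposal is correct and follows essentially the same two-part strategy as the paper: an oracle $S$-sparse estimator for the upper bound and a two-point indistinguishability argument for the lower bound. Two small differences are worth noting: (i) for the upper bound the paper uses ``sparsest $\varphi_y$ with $\|\calF\varphi_y-y\|_2\le\delta$'' rather than your least-squares over $S$-sparse candidates, but both choices yield the same $2\delta$ residual bound; (ii) for the lower bound the paper uses the asymmetric noise pair $(0,\eta)$ with $\|\eta\|_2=\delta$, whereas your symmetric split $\eta^{(1)}=-\tfrac12\hat\nu$, $\eta^{(2)}=+\tfrac12\hat\nu$ actually allows $\|\hat\nu\|_2\le 2\delta$ and hence yields $\calE\ge \delta/\Theta(M,N,2S)$, a factor of $2$ stronger than the stated $\delta/(2\Theta)$---your final line is more conservative than your own argument warrants. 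Your remark that $\sigma_{\min}$ only increases upon deleting columns (to handle $|\Lambda'|<2S$) is a correct detail the paper leaves implicit.
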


With this result at hand, our main focus is to derive an accurate lower bound for $\Theta(M,N,S)$. We suspect that the minimum in $\Theta(M,N,S)$ occurs when all the points in $\Omega$ are consecutively spaced by $1/N$. 
If we could prove this conjecture, then we could simply apply Theorem \ref{thm:clump2} for the single clump case to derive an upper bound of $\Theta(M,N,S)$. Without this conjecture, our strategy is to apply our duality techniques in Section \ref{sec:duality} for all ${N\choose S}$ possible choices of $\Omega$ and then uniformly bound over all these possibilities. There are exponentially many possible such $\Omega$, so we must be extremely careful with the estimate in order to obtain an accurate lower bound for $\Theta(M,N,S)$. The following theorem is proved in Appendix \ref{proof:theta}.

\begin{theorem}
	\label{thm:theta}
	Fix positive integers $M,N,S$ such that $S\geq 2$, $M\geq 2S$, and $N\geq \pi MS$. We define the constant,
	\[
	C(M,S)
	:=\(\frac{12-\pi^2}{24}\)^{1/2} \(\sum_{j=1}^S  \prod_{k\not=j, \, k=1}^S \frac{1}{(j-k)^2}\)^{-1/2} \frac{1}{\sqrt S} \(\frac{\pi}{S}\)^{S-1} \( \frac{M}{S} \)^{-(S-1)} \Big\lfloor \frac{M}{S} \Big\rfloor^{S-1}. 
	\]
	Then we have
	\begin{equation}
	\label{thm3lower}
	\Theta(M,N,S)
	\geq C(M,S) \sqrt{M} \(\frac{M}{N}\)^{S-1}.
	\end{equation}
\end{theorem}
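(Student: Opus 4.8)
The natural hope, prompted by Theorem~\ref{thm:clump2}, would be that the minimum defining $\Theta(M,N,S)$ is attained when the $S$ points sit consecutively on the grid inside a single clump, so that the single-clump case of Theorem~\ref{thm:clump2} (with $\alpha=M/N$) would yield \eqref{thm3lower} at once. Since I cannot identify the extremal $\Omega$, the plan is instead to combine the duality of Proposition~\ref{prop:duality} with an explicit construction of interpolating trigonometric polynomials carried out \emph{uniformly} over all $\binom{N}{S}$ admissible supports. Concretely: it suffices to lower bound $\sigma_{\min}(\Phi(\Omega,M))$ by the right-hand side of \eqref{thm3lower} for every fixed $S$-point subset $\Omega$ of the grid of width $1/N$; by Proposition~\ref{prop:duality} this amounts to producing, for the unit right singular vector $v$ attaining $\sigma_{\min}$, some $f\in\P(M)$ with $f(\omega_j)=v_j$ for all $j$ and $\|f\|_{L^2(\T)}\le \bigl(C(M,S)\,\sqrt M\,(M/N)^{S-1}\bigr)^{-1}$. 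To bypass the lack of control over $v$ flagged after Proposition~\ref{prop:duality}, I would in fact bound $\|f\|_{L^2}$ uniformly over all unit vectors $v$.

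The construction proceeds by linearization. For each $\omega_j\in\Omega$ I would build an ``interpolation atom'' $g_j\in\P(M)$ with $g_j(\omega_j)=1$ and $g_j(\omega_k)=0$ for $k\ne j$, set $f=\sum_{j=1}^S v_j g_j$, and apply Cauchy--Schwarz: $\|f\|_{L^2}\le\|v\|_2\bigl(\sum_j\|g_j\|_{L^2}^2\bigr)^{1/2}=\bigl(\sum_j\|g_j\|_{L^2}^2\bigr)^{1/2}$. Everything then reduces to bounding $\sum_j\|g_j\|_{L^2}^2$ uniformly over $\Omega$ by $\bigl(C(M,S)\,\sqrt M\,(M/N)^{S-1}\bigr)^{-2}$. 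The atom is built from a Lagrange-type factor annihilating the other nodes together with a concentrated Fej\'er-type kernel peaked at $\omega_j$: writing $m=\lfloor M/S\rfloor$, one takes a product over $k\ne j$ of building blocks of degree $\le m$, the $k$-th of which vanishes at $\omega_k$, equals $1$ at $\omega_j$, and is concentrated near $\omega_j$, so that $\deg g_j\le(S-1)m\le M$, $g_j\in\P(M)$, and $g_j$ has the required interpolation pattern. The integer degree $m=\lfloor M/S\rfloor$ is precisely what produces the rounding factor $\lfloor M/S\rfloor$ in $C(M,S)$, and the Fej\'er normalization factors (one per $k\ne j$) produce the $(\pi/S)^{S-1}$-type constant.

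The analytic core is the estimate $\|g_j\|_{L^2}^2=\int_\T|g_j|^2$. Since $g_j$ concentrates on an interval $I_j$ of length $\asymp 1/m\asymp S/M$ around $\omega_j$, I would split the integral into the contribution of $I_j$ and a tail. For the tail one uses the polynomial decay of the Fej\'er blocks together with the fact that $g_j$ vanishes at every node, so the potentially large factors sit near the other nodes where $g_j$ is small; the hypothesis $N\ge\pi MS$ puts the localization scale comfortably above the grid scale and renders this tail lower order. On $I_j$ one bounds $|g_j|$ by $\prod_{\omega_k\in\Lambda_j\setminus\{\omega_j\}}(\text{block at }\omega)/(\text{block at }\omega_j)$ times a uniformly bounded factor from the clumps other than $\omega_j$'s clump $\Lambda_j$; replacing the trigonometric quantities by their small-argument approximations and using the grid bound $|\omega_j-\omega_k|\ge|i-i'|/N$ (with $i,i'$ the sorted positions of $\omega_j,\omega_k$ in $\Lambda_j$) gives $\sup_{I_j}|g_j|^2\cdot|I_j|\lesssim \frac1M\bigl(\tfrac Nm\bigr)^{2(\lambda_j-1)}\prod_{i'\ne i}\frac1{(i-i')^2}$ with $\lambda_j=|\Lambda_j|$. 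Summing over $j$, grouping by clumps, and using that $N/M\ge\pi S\ge1$ forces the largest clump to dominate, one arrives at
\[
\sum_{j=1}^S\|g_j\|_{L^2}^2\ \lesssim\ \frac1M\Bigl(\frac Nm\Bigr)^{2(S-1)}\ \sum_{j=1}^S\ \prod_{k\ne j}\frac1{(j-k)^2};
\]
taking reciprocal square roots and tracking the absolute constants (which come out of the sharp Fej\'er peak-to-$L^2$ ratio and the small-argument trigonometric comparisons) reproduces \eqref{thm3lower} with the stated $C(M,S)$.

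I expect the main obstacle to be precisely the \emph{uniformity over exponentially many $\Omega$ with a sharp constant}. One must show simultaneously that (i) the bound for each atom is genuinely local, depending only on the geometry of $\omega_j$'s clump, with the contributions of all other clumps -- entering $g_j$ through its ``far'' factors and through the tail of its localizing kernel -- shown to cost only bounded, respectively negligible, factors; and (ii) the final sum reproduces exactly the exponent $S-1$ and the combinatorial constant $\sum_j\prod_{k\ne j}(j-k)^{-2}$ attained in the extremal configuration of $S$ points placed consecutively on the grid in one clump (the worst case precisely because $N/M>1$ makes the exponent $\lambda_{\max}-1$ largest there, so the uniform bound must be dimensioned for it). The most delicate step inside (i) is controlling the atom throughout the localization window $I_j$ and not merely at the nodes: there the Lagrange-type factor is as large as $(N/M)^{\lambda_j-1}$ because of the small denominators coming from the intra-clump distances, and this is exactly where the super-resolution blow-up lives; it is also where the hypothesis $N\ge\pi MS$ and the choice $m=\lfloor M/S\rfloor$ are needed to keep the error terms from corrupting the leading constant.
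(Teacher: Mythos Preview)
Your overall strategy coincides with the paper's: invoke Proposition~\ref{prop:duality}, build for each $\omega_j\in\Omega$ an exact-interpolation atom of the form (Lagrange-type factor)$\times$(Fej\'er localizer), set $f=\sum_j v_j g_j$, apply Cauchy--Schwarz, and bound $\sum_j\|g_j\|_{L^2}^2$ uniformly over all $\binom{N}{S}$ supports. Two technical points in the paper's execution differ from your sketch and are worth flagging.

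\medskip

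\textbf{Two-scale Lagrange factor.} Your degree count ``$\deg g_j\le (S-1)m$ with $m=\lfloor M/S\rfloor$'' suggests a single frequency $m$ in every Lagrange building block. The paper does \emph{not} do this: it sets $Q_{j,k}=\lfloor M/S\rfloor$ only for $\omega_k$ within distance $S/(2M)$ of $\omega_j$, and $Q_{j,k}=\lfloor 1/(2|\omega_j-\omega_k|_\T)\rfloor$ for the far nodes. The point is that with a uniform $m$ one has no control over $m|\omega_j-\omega_k|_\T\bmod 1$ for distant $\omega_k$, so the Lagrange denominator $|e^{2\pi i m\omega_j}-e^{2\pi i m\omega_k}|$ can be arbitrarily small and the sup-norm of the atom blows up. The variable $Q_{j,k}$ forces $Q_{j,k}|\omega_j-\omega_k|_\T\in[1/4,1/2]$ for far nodes, giving a uniform lower bound on each denominator. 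The paper then bounds $\|g_j\|_{L^2}\le\|G_j\|_{L^\infty}\|F_P\|_{L^2}$ directly (no window/tail splitting of the integral), with $P=\lfloor M/(2S)\rfloor$ for the single Fej\'er factor.

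\medskip

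\textbf{The uniformity step.} You correctly locate the crux in proving that the worst $\Omega$ is (effectively) $S$ consecutive grid points, and you propose to argue that ``the largest clump dominates'' because $N/M>1$. The paper's mechanism is more indirect and does not go through a clump decomposition at all. After the sup-norm estimate, everything reduces to bounding a quantity $E(\Omega)$ that depends on the pairwise distances $|\omega_j-\omega_k|_\T$ and on the two-scale partition above. The paper then \emph{extends} $E$ to a function $F$ of $(S-1)^2$ independent real variables $w_{j,k}$ (dropping the symmetry $w_{j,k}=w_{k,j}$ and the metric constraints), restricted to a compact region cut out by the cube $[1/N,1/2]^{(S-1)^2}$ and $S$ half-spaces encoding the minimal ``total distance'' achievable on the grid. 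Two reductions (using $N\ge\pi MS$) push the maximizer into $[1/N,1/M]^{(S-1)^2}$, after which $F$ is smooth with strictly negative partials, so its maximum sits on the boundary hyperplanes---which correspond exactly to the consecutive-grid configuration $\Omega_*$. This extension trick is what makes the exponentially-many-$\Omega$ optimization tractable and delivers the exact combinatorial constant $\sum_j\prod_{k\ne j}(j-k)^{-2}$; your ``largest clump dominates'' heuristic is the right intuition but would need substantial work to be made rigorous with the sharp constant.
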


\begin{remark}
	As described in Remark \ref{rem:Ba}, when $M$ is significantly larger than $S$, the $(M/S)^{-(S-1)} \lfloor M/S \rfloor^{S-1}$ factor in $C(M,S)$ can be safely ignored if $M/S$ is sufficiently large. Alternatively, we can lower bound $C(M,S)$ independent of $M$ since $t/\lfloor t\rfloor \leq t/(t-1)\leq 2$ for all $t>2$ and so
	\[
	C(M,S)
	\geq \(\frac{12-\pi^2}{24}\)^{1/2} \(\sum_{j=1}^S  \prod_{k\not=j, \, k=1}^S \frac{1}{(j-k)^2}\)^{-1/2} \frac{1}{\sqrt S} \(\frac{\pi}{S}\)^{S-1} 2^{-(S-1)}.
	\]
	Now we concentrate on the dependence of $C(M,S)$ on $S$. Since $\prod_{k\not=j, \, k=1}^S (j-k)^{-2}\leq 1$ for each $1\leq j\leq S$, we have the lower bound,
	\[
	C(M,S)
	\geq \frac{1}{\pi} \(\frac{12-\pi^2}{24}\)^{1/2} \(\frac{\pi}{S}\)^S \( \frac{M}{S} \)^{-(S-1)} \Big\lfloor \frac{M}{S} \Big\rfloor^{S-1}.
	\]
	For large $S$, a more accurate bound is (see Lemma \ref{lemma:comb} in Appendix \ref{proof:discrete})
	\[
	C(M,S)
	\geq \frac{1}{2\pi^2 e } \(\frac{12-\pi^2}{24}\)^{1/2} e^{-S} \( \frac{S}{2}-1\)^S\(\frac{\pi}{S}\)^{S} \( \frac{M}{S} \)^{-(S-1)} \Big\lfloor \frac{M}{S} \Big\rfloor^{S-1}.
	\]
	For large $S$, the right hand side scales as $Cc^{-S}$ for universal constants $C,c>0$. 
\end{remark}

\begin{remark}
	Although we do not identify which $\Omega$ achieves the minimum in $\Theta(M,N,S)$, the theorem suggests that the minimum is attained by translates of $\Omega_*:=\{n/N\}_{n=0}^{S-1}$. Indeed, the assumption $N\geq \pi MS$ implies that $\Omega_*$ is contained in an open interval of length $1/M$. We readily check that if $\rho_j=\rho_j(\Omega_*,M)$ is the complexity of $\omega_j\in\Omega_*$, then
	\[
	\(\sum_{j=1}^S \rho_j^2\)^{1/2}
	=\(\sum_{j=1}^S \prod_{k\not=j} \frac{1}{(j-k)^2}\)^{1/2} \(\frac{N}{\pi M}\)^{S-1}.
	\]
	Notice that this is precisely the quantities that appear in Theorem \ref{thm:theta}. 
	\label{remarkminmaxworst}
\end{remark}
We next present the lower and upper bounds for the min-max error, which is proved in Appendix \ref{proof:minmax}. 

\begin{theorem}
	\label{thm:minmax}
	Fix positive integers $S,M,N$ and let $\delta>0$. 
	\begin{enumerate}[(a)]
		\item 
		Assume that $M\geq 4S$ and $N\geq 2\pi MS$, and let $C(M,S)$ be the constant defined in Theorem \ref{thm:theta}. Then, we have the upper bound,
		\[
		\calE(M,N,S,\delta)
		\leq \frac{2\delta}{C(M,2S)} \frac{1}{\sqrt M} \(\frac{N}{M}\)^{2S-1}. 
		\]
		\item 
		Assume that $M\geq 2S+1$, and $N/M\geq 2\pi C(2S)\sqrt{M+1}$, where $C(2S)$ is the constant defined in Proposition \ref{prop:upper}. Then, we have the lower bound,
		\[
		\calE(M,N,S,\delta)
		\geq \frac{\delta}{4}{4S-2\choose 2S-1}^{1/2} \frac{1}{\sqrt{M+1}} \(\frac{N}{2\pi M}\)^{2S-1}.
		\]
	\end{enumerate}
\end{theorem}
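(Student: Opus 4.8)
The plan is to derive both bounds by feeding our estimates on the lower restricted isometry constant into the sandwich inequality of Proposition~\ref{prop:demanet}, so that everything reduces to controlling $\Theta(M,N,2S)$. First I would check that the hypotheses of Proposition~\ref{prop:demanet} (namely $2S\le M\le N$) hold under the assumptions of each part: in part~(a) this follows from $M\ge 4S$ and $N\ge 2\pi MS$, and in part~(b) from $M\ge 2S+1$ together with $N/M\ge 2\pi C(2S)\sqrt{M+1}\ge 1$.

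For part~(a), I would use the \emph{upper} half of Proposition~\ref{prop:demanet}, $\calE(M,N,S,\delta)\le 2\delta/\Theta(M,N,2S)$, and then lower bound $\Theta(M,N,2S)$ via Theorem~\ref{thm:theta} with $S$ replaced by $2S$. The three hypotheses of Theorem~\ref{thm:theta} become $2S\ge 2$, $M\ge 2(2S)=4S$, and $N\ge \pi M(2S)=2\pi MS$, which are exactly the standing assumptions of part~(a). This gives $\Theta(M,N,2S)\ge C(M,2S)\,\sqrt M\,(M/N)^{2S-1}$, and substituting and inverting the power $(M/N)^{2S-1}$ yields the claimed bound with constant $2/C(M,2S)$ and dominant factor $\SRF^{2S-1}=(N/M)^{2S-1}$.

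For part~(b), I would use the \emph{lower} half of Proposition~\ref{prop:demanet}, $\calE(M,N,S,\delta)\ge \delta/(2\Theta(M,N,2S))$, so now I need an \emph{upper} bound on $\Theta(M,N,2S)$. Since $\Theta(M,N,2S)$ is a minimum over admissible $\Omega$, it suffices to exhibit one bad configuration: take $\Omega$ to be the block of $2S$ consecutive grid points $\{0,1/N,\dots,(2S-1)/N\}$, which is precisely the equispaced set of Proposition~\ref{prop:upper} with $\lambda=2S$ and spacing $\alpha/M$ for $\alpha=M/N$. The hypothesis $\alpha\le 1/(C(2S)\sqrt{M+1})$ becomes $N/M\ge C(2S)\sqrt{M+1}$, implied by the stronger assumption of part~(b); the requirement $\lambda\le|\Omega|\le M-1$ becomes $2S\le M-1$, i.e.\ $M\ge 2S+1$. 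Proposition~\ref{prop:upper} then gives $\Theta(M,N,2S)\le \binom{4S-2}{2S-1}^{-1/2}2\sqrt{M+1}\,(2\pi M/N)^{2S-1}$, and plugging this into the lower half of Proposition~\ref{prop:demanet} produces exactly the stated lower bound after collecting the factors $2$, $2$, and the binomial coefficient.

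There is no new analytic content here beyond Propositions~\ref{prop:demanet} and~\ref{prop:upper} and Theorem~\ref{thm:theta}; the entire argument is bookkeeping. The points that require care are (i) consistently carrying the doubling $S\mapsto 2S$ through Proposition~\ref{prop:demanet}, so that the exponent is $2S-1$ and every index hypothesis of Theorem~\ref{thm:theta} and Proposition~\ref{prop:upper} is verified against $2S$ rather than $S$; (ii) confirming that the mild numerical slack in the stated hypotheses (e.g.\ $N\ge 2\pi MS$ versus the bare $N\ge \pi M(2S)$, and the extra factor $2\pi$ in $N/M\ge 2\pi C(2S)\sqrt{M+1}$) is genuinely enough to invoke each ingredient; and (iii) the slight asymmetry between the $\sqrt M$ appearing in the upper bound (from Theorem~\ref{thm:theta}) and the $\sqrt{M+1}$ appearing in the lower bound (from Proposition~\ref{prop:upper}), which I expect to be the only place the two estimates fail to be perfectly dual. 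None of this affects the dominant behavior, and the final conclusion is that $\calE(M,N,S,\delta)$ is pinned between two explicit constant multiples of $\SRF^{2S-1}\delta$.
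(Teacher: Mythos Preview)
Your proposal is correct and matches the paper's own proof essentially line for line: the upper bound comes from combining the upper half of Proposition~\ref{prop:demanet} with Theorem~\ref{thm:theta} applied at sparsity $2S$, and the lower bound from the lower half of Proposition~\ref{prop:demanet} together with Proposition~\ref{prop:upper} applied to the block of $2S$ consecutive grid points. Your hypothesis checks are accurate, and the paper's terse proof is exactly the bookkeeping you describe.
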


% % % % % % % % % % % % % % % % % % % %
\subsection{Numerical accuracy of Theorem \ref{thm:theta}}

Let $\theta(M,N,S)$ denote the right hand side of \eqref{thm3lower}. Note that $\theta$ is only defined for $N/M\geq \pi S$. We make two important observations before numerically evaluating the accuracy of Theorem \ref{thm:theta}. 

First, while we would like to compare $\theta$ directly with $\Theta$, this is not computationally feasible because we would need to enumerate through all possible $\Omega$, for numerous values of $M,N,S$. Instead, we compare $\theta$ with the quantity,
\[
\psi(M,N,S):=\sigma_{\min}(\Phi(\Omega_*,M)),
\]
where $\Omega_*:=\{n/N\}_{n=0}^{S-1}$. Note that $\psi$ serves as a useful substitute for $\Theta$ because of the inequalities
\begin{equation}
\label{eq:ratio}
\theta(M,N,S)
\leq \Theta(M,N,S)
\leq \psi(M,N,S).
\end{equation}	
	
Second, while both $\theta$ and $\psi$ depend on three parameters, Theorem \ref{thm:theta} and Proposition \ref{prop:upper} suggest that after safely ignoring the $\sqrt M$ scaling factor, they should only depend on two parameters, the super-resolution factor $\SRF=N/M$ and the sparsity $S$. Additionally, we can only reliably perform the experiments for modest size of $\SRF^{S-1}$, or else numerical round off errors become significant.

Figure \ref{fig:exper2} (a) displays the values of $\theta$ and $\psi$ as functions of $\SRF$. Since the slope of both curves are identical, they verify that Theorem \ref{thm:theta} correctly quantifies the dependence of the min-max error on $\SRF$. Figure \ref{fig:exper2} (b) displays the ratio between $\psi$ and $\theta$. As a consequence of inequalities \eqref{eq:ratio}, this experiment also indirectly provides us information about the ratio of $\Theta$ and $\theta$. The lines in Figure \ref{fig:exper2} (b) are horizontal, which confirms our theory that, if $\SRF\geq \pi S$, then there exists a $c(M,S)>0$ such that
\[
\Theta(M,N,S)
=c(M,S) \sqrt{M} \(\frac{M}{N}\)^{S-1}.
\]

\begin{figure}[t]
	\hspace{-0.5cm}
	\subfigure[$\theta(M,N,S)$ and $\psi(M,N,S)$]{
		\includegraphics[width=0.6\textwidth]{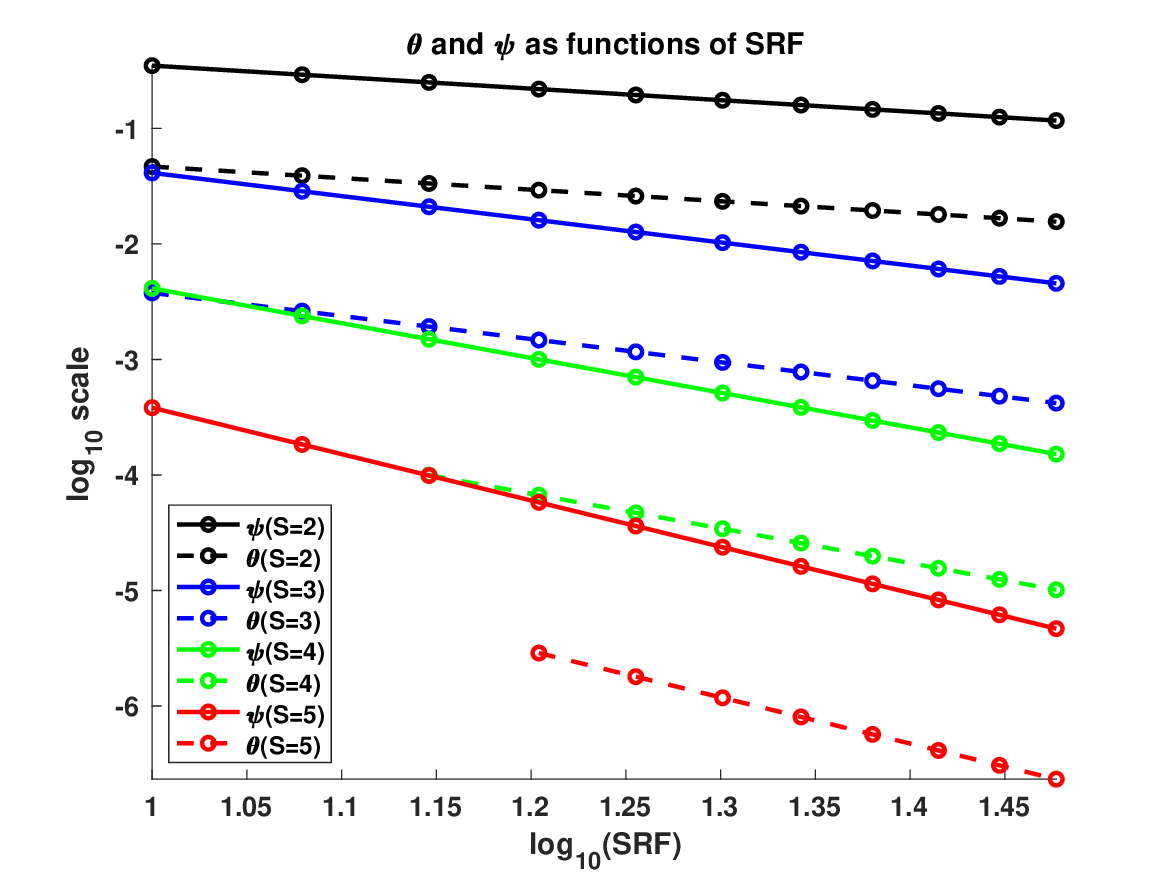}}
	\hspace{-3cm}
	\subfigure[$\psi(M,N,S)/\theta(M,N,S)$]{
		\includegraphics[width=0.6\textwidth]{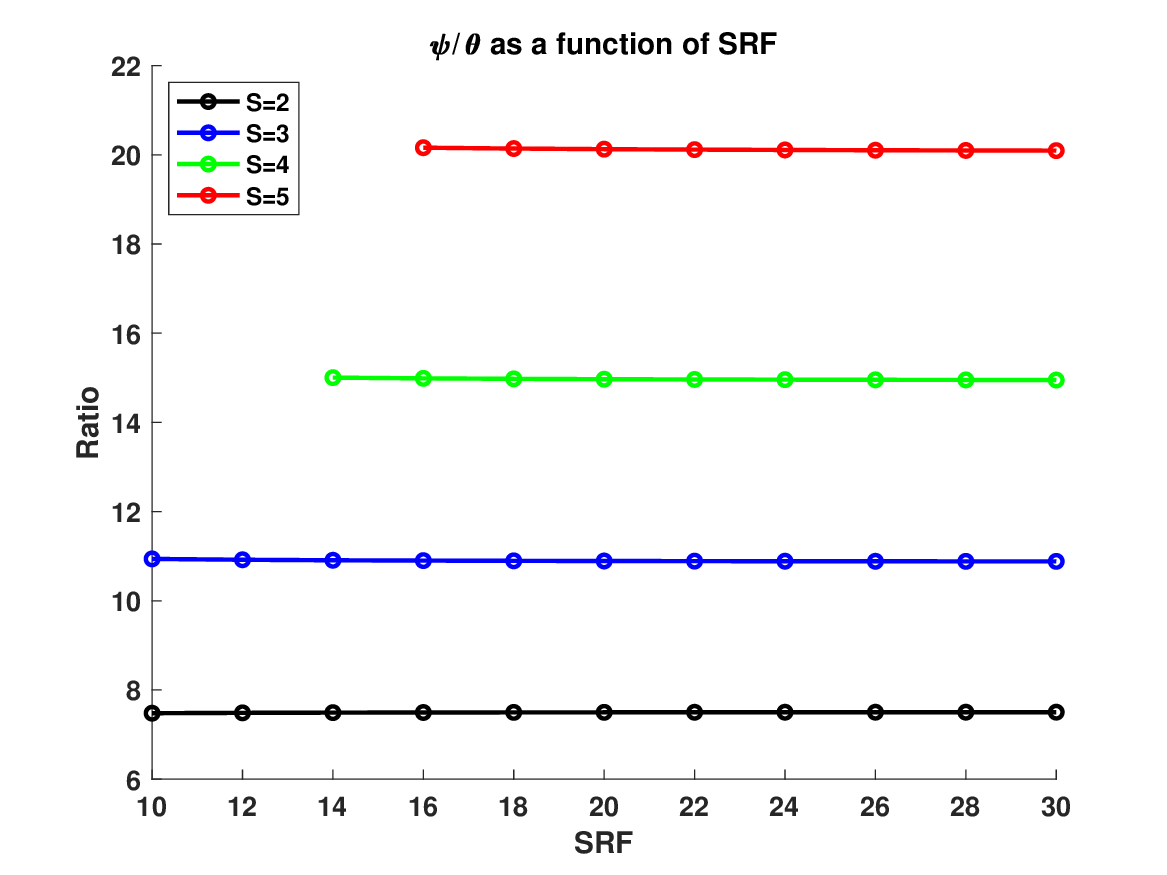}}
	\caption{The left figure displays $\theta(M,N,S)$ and $\psi(M,N,S)$ as functions of $\SRF=N/M$, while the right figure displays their ratio. For both experiments, we consider the range of parameters $2\leq S \leq 5$, $\pi S\leq \SRF$ and $10\leq\SRF\leq 30$, respectively.}
	\label{fig:exper2}
\end{figure}

% % % % % % % % % % % % % % % % % % % %

% % % % % % % % % % % % % % % % % % % %
\subsection{Related work on super-resolution limit}

A min-max formulation of super-resolution for measures on the grid can be traced back to \cite{donoho1992superresolution} and related works \cite{demanet2015recoverability,batenkov2020conditioning}. The similarity is that all three consider \textit{continuous} Fourier measurements and the main differences are their assumptions on $\mu$. In \cite{donoho1992superresolution}, $\mu$ is supported in a grid of the real line $\R$ with constraints on the ``density" of the support. In \cite{demanet2015recoverability}, $\mu$ is again supported on a grid in $\R$ but has $S$ atoms without restrictions on its support. In \cite{batenkov2020conditioning}, $\mu$ is supported a grid in $[0,1]$, has $S$ atoms, and its support satisfies a clumps condition. Our setup is different from those considered in \cite{donoho1992superresolution,demanet2015recoverability,batenkov2020conditioning} in the sense that we consider {\it discrete} Fourier measurements. Even though the min-max error in Theorem \ref{thm:minmax}, is  strikingly similar to those found in \cite{donoho1992superresolution,demanet2015recoverability,batenkov2020conditioning}, their results do not imply ours or vice versa.

Recently, the authors of \cite{batenkov2019super} studied the min-max error with continuous Fourier measurements, when the support set contains a {\it single cluster} of $\lambda$ nodes while the remaining $S-\lambda$ ones are well separated. In our notation, this corresponds to $\lambda_1=\lambda$ and $\lambda_a=1$ for all $a\geq 2$. Under appropriate conditions on the various parameters, \cite[Theorem 2.10]{batenkov2019super} shows that: (1) the min-max support error for the cluster nodes, as a function of $\SRF$, scales linearly in  ${\rm SRF}^{2\lambda -2}$; (2) the min-max amplitude error for the cluster nodes, scales linearly in  ${\rm SRF}^{2\lambda -1}$. Their result does not imply ours or vice versa, since we consider different models. Nonetheless, their min-max amplitude error is consistent with our Theorem \ref{thm:minmax} if the support set contains a cluster of $\lambda$ closely spaced points.

Regarding the detection of the number of sources, \cite{liu2020resolution} provided an information-theoretic condition for the SRF and the noise level such that the number of sources can be correctly detected.

%!TEX root = main.tex

% % % % % % % % % % % % % % % % % % % %
\section{MUSIC and its super-resolution limit}
\label{sec:music}

In signal processing, a class of subspace methods, including MUSIC \cite{schmidt1986multiple}, is widely used due to their superior numerical performance. It was well known that MUSIC has super-resolution phenomenon \cite{odendaal1994two}. The resolution limit of MUSIC was discovered by numerical experiments in \cite{liao2016music}, but has never been rigorously proved. A main contribution of this paper is to prove the sensitivity of the noise-space correlation function in MUSIC under the clumps model.

% % % % % % % % % % % % % % % % % % % %
\subsection{The MUSIC algorithm}

MUSIC is built upon a Hankel matrix and its Vandermonde decomposition. For a fixed positive integer $L \le M$, we form the Hankel matrix of $y$:
\[
\calH(y) 
:= 
\begin{bmatrix}
y_0 & y_1 & \hdots & y_{M-L}
\\
y_1 & y_2 & \hdots & y_{M-L+1}
\\
\vdots & \vdots & \ddots & \vdots
\\
y_{L} & y_{L+1} & \hdots & y_{M}
\end{bmatrix} \in \mathbb{C}^{(L+1) \times (M-L+1)}.
\]
Denote the noiseless measurement vector by $y^0 = \Phi(\Omega,M) x$. For simplicity, we will denote $\Phi(\Omega,M)$ by $\Phi_M$ in this section. It is straightforward to verify that $\calH(y^0)$ possesses the following Vandermonde decomposition:
$$
\calH(y^0)  
= \Phi_L X \Phi_{M-L}^T,
$$
where $X = \diag(x) \in \mathbb{R}^{S \times S}$. 
We always assume that $M+1 \ge 2S$ and $S \le L \le M-S+1$ so that $\Phi_L$ and $\Phi_{M-L}$ have full column rank, and $\calH(y^0)$ has rank $S$.

\renewcommand{\algorithmicrequire}{\textbf{Input:}}
\renewcommand{\algorithmicensure}{\textbf{Output:}}
\begin{algorithm}[t!]   
	\caption{MUltiple SIgnal Classification (MUSIC)}          	
	\label{algorithmmusic}		
	\begin{algorithmic}[1]      
		\REQUIRE  $y \in \C^{M+1}$, sparsity $S$ and a integer $L$
		\STATE Form Hankel matrix $\calH(y) \in \C^{(L+1)\times (M-L+1)}$
		\STATE Compute the SVD of $\calH(y)$: 
		\begin{equation}
		\label{svdnoisy}
		\calH(y) = [\underbrace{\hat U}_{(L+1) \times S} \  \underbrace{\hat W}_{(L+1) \times (L-S)}] \underbrace{{\rm diag}(\hat \sigma_1 , \ldots , \hat \sigma_S, \hat \sigma_{S+1} ,\ldots)}_{(L+1) \times (M-L+1)} [\underbrace{\hat V_1}_{(M-L+1) \times S} \ \underbrace{\hat V_2}_{(M-L+1) \times (M-L+1-S)}]^* \end{equation}
		where $\hat \sigma_1 \ge \hat \sigma_2  \ldots \ge  \hat \sigma_S \ge \hat \sigma_{S+1} \ge \ldots$ are the singular values of $\calH(y)$.
		\STATE Compute the imaging function $\hat\calJ(\omega) = {\|\phi_L(\omega)\|_2}/{ \|{\hat W}^* \phi_L (\omega)\|_2}, \ \omega \in [0,1)$.
		\ENSURE $\hat\Omega= \{\hat\omega_j\}_{j=1}^S$ corresponding to the $S$ largest local maxima of  $\hat\calJ$.
	\end{algorithmic}
	\label{algmusic}
\end{algorithm}

In the noiseless case, let the Singular Value Decomposition (SVD) of $\calH(y^0)$ be:
\begin{equation}
\calH(y^0) = [\underbrace{U}_{L \times S} \ \underbrace{W}_{L \times (L-S)}] \ \underbrace{\text{diag}(\sigma_1,\sigma_2,\ldots,\sigma_S,0,\ldots,0)}_{L \times (M-L+1)} \ [\underbrace{V_1}_{(M-L+1) \times S} \ \underbrace{V_2}_{(M-L+1) \times (M-L+1-S)}]^*
\label{svdnoiseless}
\end{equation}
where $\sigma_1 \ge \sigma_2 \ge \ldots \ge \sigma_S$ are the non-zero singular values of $\calH(y^0)$.  The column spaces of $U$ and $W$ are the same as $\text{Range}(\calH(y^0))$ and $\text{Range}(\calH(y^0))^\perp$, which are called the signal space and the noise space respectively.

For any $\omega \in [0,1)$ and integer $L$, we define the steering vector of length $L+1$ at $\omega$ to be 
\[
\phi_L(\omega) = [1 \ e^{-2\pi i \omega} \ e^{-2\pi i 2 \omega} \ \ldots \ e^{-2\pi i L \omega}]^T \in \mathbb{R}^{L+1},
\]
and then $\Phi_L = [\phi_L(\omega_1) \ \ldots \ \phi_L(\omega_S) ]$. The MUSIC algorithm is based on the following observation on the Vandermonde structure of $\Phi_L$: if $S\le L \le M-S+1$, then
\begin{equation}
\omega \in \{\omega_j\}_{j=1}^S
\Longleftrightarrow
\phi_L(\omega) \in \text{Range}(\Phi_L) 
= \text{Range}(\calH(y^0)) = \text{Range}(U).
\label{subspacebasis}
\end{equation}

\begin{table}[t]
	\renewcommand{\arraystretch}{1}
	\centering
	\resizebox{1\columnwidth}{!}{
		\begin{tabular}{ |c | c  |c| }
			\hline
			&   Noise-space correlation function&  Imaging function\\
			[5pt]
			\hline
			Noiseless case
			& $\calR(\omega) = \frac{\| W^* \phi_L(\omega)\|_2}{\|\phi_L(\omega) \|_2}$
			&   $\calJ(\omega) = \frac{1}{\calR(\omega)} = \frac{\|\phi_L(\omega) \|_2}{\| W^* \phi_L(\omega)\|_2}$
			\\
			[5pt]
			\hline
			%   [15pt]
			% Measure & $\rho(\Cjk)$  & $\hrho(\Cjk) = \hnjk /n$
			Noisy case   & 
			$\hat\calR(\omega) = \frac{\| {\hat W}^* \phi_L(\omega)\|_2}{\|\phi_L(\omega) \|_2}$
			& 
			$\hat\calJ(\omega) = \frac{1}{\hat\calR(\omega)} = \frac{\|\phi_L(\omega) \|_2}{\| {\hat W}^* \phi_L(\omega)\|_2}$
			\\
			[5pt]
			\hline
		\end{tabular}
	}
	\caption{Noise-space correlation functions and imaging functions in MUSIC, where $W$ given in \eqref{svdnoiseless} and $\hat W$ given in \eqref{svdnoisy} of Algorithm \ref{algorithmmusic} are the noise space of dimension $L-S$ in the noiseless and noisy case respectively.}
	\label{TableMUSICFunctions}
\end{table}

We define a noise-space correlation function $\calR(\omega)$ and an imaging function as its reciprocal (see Table \ref{TableMUSICFunctions} for definitions). The following lemma is based on \eqref{subspacebasis}.

\begin{lemma}
	Suppose $M+1 \ge 2S$ and $L$ is chosen such that $S\le L \le M-S+1$. Then 
	\[
	\omega \in \{\omega_j\}_{j=1}^S
	\Longleftrightarrow
	\calR(\omega) = 0
	\Longleftrightarrow
	\calJ(\omega) = \infty.
	\]
\end{lemma}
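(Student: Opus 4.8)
The plan is to reduce the whole statement to a fact about the column space of the Vandermonde matrix $\Phi_L$. Since $\calJ(\omega)=1/\calR(\omega)$ and $\|\phi_L(\omega)\|_2=\sqrt{L+1}>0$ for every $\omega\in[0,1)$, the quantity $\calR(\omega)=\|W^*\phi_L(\omega)\|_2/\sqrt{L+1}$ is a finite nonnegative number, and (with the convention $1/0=\infty$) it vanishes exactly when $\calJ(\omega)=\infty$. This immediately gives the second equivalence, so the real content is $\calR(\omega)=0\iff\omega\in\{\omega_j\}_{j=1}^S$.

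First I would translate the vanishing of $\calR$ into a subspace membership. Because the columns of $[U\ W]$ in \eqref{svdnoiseless} form an orthonormal basis of $\C^{L+1}$, we have the orthogonal decomposition $\C^{L+1}=\text{Range}(U)\oplus\text{Range}(W)$; hence $W^*\phi_L(\omega)=0$ iff $\phi_L(\omega)\perp\text{Range}(W)$, i.e. $\phi_L(\omega)\in\text{Range}(U)$. Thus $\calR(\omega)=0\iff\phi_L(\omega)\in\text{Range}(U)$. Next I would identify $\text{Range}(U)$ with $\text{Range}(\Phi_L)$ using the Vandermonde decomposition $\calH(y^0)=\Phi_L X\Phi_{M-L}^T$ with $X=\diag(x)$ invertible: since $M-L+1\ge S$ forces $\Phi_{M-L}$ to have full column rank $S$, the matrix $X\Phi_{M-L}^T$ maps $\C^{M-L+1}$ onto $\C^S$, so $\text{Range}(\calH(y^0))=\text{Range}(\Phi_L)$, which has dimension $S$ because $L\ge S$ makes $\Phi_L$ of full column rank. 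On the other hand the SVD \eqref{svdnoiseless} shows $\text{Range}(U)=\text{Range}(\calH(y^0))$. Combining these, $\calR(\omega)=0\iff\phi_L(\omega)\in\text{Range}(\Phi_L)$.

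Finally I would settle the equivalence $\phi_L(\omega)\in\text{Range}(\Phi_L)\iff\omega\in\{\omega_j\}_{j=1}^S$, which is the only step requiring a genuine (though standard) argument. The direction ``$\Leftarrow$'' is trivial since $\phi_L(\omega_j)$ is the $j$-th column of $\Phi_L$. For ``$\Rightarrow$'', suppose $\omega\notin\{\omega_j\}_{j=1}^S$; then $[\Phi_L\ \phi_L(\omega)]=\Phi_L(\Omega\cup\{\omega\},L)$ is an $(L+1)\times(S+1)$ Vandermonde matrix with $S+1$ distinct nodes on the unit circle, and since $L+1\ge S+1$ it has full column rank, so $\phi_L(\omega)\notin\text{Range}(\Phi_L)$, a contradiction. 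This Vandermonde full-rank fact is where the hypotheses $M+1\ge 2S$, $L\ge S$, and $M-L+1\ge S$ are all consumed: they guarantee that $\Phi_L$, $\Phi_{M-L}$, and the augmented matrix are each tall (or square) Vandermonde matrices of full column rank. Chaining the three equivalences $\omega\in\{\omega_j\}\iff\phi_L(\omega)\in\text{Range}(\Phi_L)\iff\phi_L(\omega)\in\text{Range}(U)\iff\calR(\omega)=0\iff\calJ(\omega)=\infty$ proves the lemma. I do not anticipate any real obstacle; the only care needed is the bookkeeping of the rank conditions and the $1/0=\infty$ convention.
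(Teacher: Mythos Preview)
Your proposal is correct and follows exactly the approach the paper sketches: the paper states the key observation \eqref{subspacebasis} that $\omega\in\{\omega_j\}\Longleftrightarrow\phi_L(\omega)\in\text{Range}(\Phi_L)=\text{Range}(\calH(y^0))=\text{Range}(U)$ without proof, and the lemma is an immediate consequence; you have simply filled in the standard details (orthogonal complement via the SVD, surjectivity of $X\Phi_{M-L}^T$, and full column rank of the augmented Vandermonde matrix). One minor remark: the hypothesis $M+1\ge 2S$ is not a separate ingredient but is implied by $L\ge S$ and $M-L+1\ge S$.
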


In the noiseless case, the source locations can be exactly identified through the zeros of the noise-space correlation function $\calR(\omega)$
or the peaks of the imaging function $\calJ(\omega)$, as long as the number of measurements is at least twice the number of point sources to be recovered. 

In the presence of noise, $\calH(y^0)$ is perturbed to $\calH(y)$ such that:
$$\calH(y) = \calH(y^0) + \calH(\eta)$$
whose SVD is given by \eqref{svdnoisy} in Algorithm \ref{algorithmmusic}. 
The noise-space correlation function and the imaging function are perturbed to 
$\hat\calR$ and
$\hat\calJ$
respectively (see Table \ref{TableMUSICFunctions} for the definitions). MUSIC in Algorithm \ref{algmusic} gives rise to a recovered support $\hat\Omega= \{\hat\omega_j\}_{j=1}^S$ corresponding to the $S$ largest local maxima of  $\hat\calJ$.

\subsection{Sensitivity of the noise-space correlation function in MUSIC}
When the noise-to-signal ratio is low, the imaging function $\hat\calJ$ still peaks around the true sources, but MUSIC can fail when the noise-to-signal ratio increases.
The stability of MUSIC depends on the perturbation of the noise-space correlation function from $\calR$ to $\hat{\calR}$ which is measured by $\|\hat\calR - \calR\|_\infty := \max_{\omega \in [0,1)} |\hat\calR(\omega) - \calR(\omega)| $.
Thanks to the classical perturbation theory on singular subspaces by Wedin \cite[Section 3]{wedin1972perturbation}, we have the following bound for $\|\hat\calR - \calR\|_\infty$:
\begin{proposition}
	\label{lemmapermusic1}
	Suppose $M+1 \ge 2S$ and $L$ is chosen such that $S\le L \le M-S+1$. If $2\|\calH(\eta)\|_2 < x_{\min}\sigma_{\min}(\Phi_L)\sigma_{\min}(\Phi_{M-L})$, then
	\begin{equation}
		\|\hat\calR - \calR\|_\infty 
	\le \frac{2\|\calH(\eta)\|_2}{x_{\min}\sigma_{\min}(\Phi_L)\sigma_{\min}(\Phi_{M-L})}.	
	\label{eqrper}
	\end{equation}
\end{proposition}

The dependence on $\sigma_{\min}(\Phi_L)$ and $\sigma_{\min}(\Phi_{M-L})$ in \eqref{eqrper} are crucial since they are small in the super-resolution regime.
It is the best to set $ L=\lfloor M/2\rfloor$ to balance them. 
With Wedin's theorem, Proposition \ref{lemmapermusic1} improves Theorem 3 in \cite{liao2016music}, which upper bound $\|\hat\calR - \calR\|_\infty$ in the order of $\|\calH(\eta)\|_2/[\sigma^2_{\min}(\Phi_L)\sigma^2_{\min}(\Phi_{M-L})].$ Combining Proposition \ref{lemmapermusic1}  and Theorem \ref{thm:clump2} gives rise to the following sensitivity bound of the noise-space correlation function under the clumps model:

\begin{theorem}
	\label{thmmusic}
	Fix positive integers $A,M,S$. Let $M\geq S^2$ be an even integer and $L=M/2$. Suppose $\Omega$ satisfies Assumption \ref{def:clumps2} with parameters $(A,L,S,\beta,\alpha)$ for some $\alpha>0$ and $\beta$ with \eqref{eq:sep2}. 
	For each $1\leq a\leq A$, let $c_a = C_a(\lambda_a,L)$. If $ {4\|\calH(\eta)\|_2 \sum_{a=1}^A (c_a  \alpha^{-\lambda_a+1})^2 }<{x_{\min} M}$, then
	$$\|\hat\calR - \calR\|_\infty
	\le \frac{4\|\calH(\eta)\|_2}{x_{\min} M} \sum_{a=1}^A \big(c_a \, \alpha^{-\lambda_a+1} \big)^2.
	$$
\end{theorem}

\begin{remark}
In this paper, the stability of MUSIC is given in terms of the $L^\infty$ perturbation of the noise-space correlation function. When the point sources are well separated, a source localization error $|\widehat\omega_j -\omega_j |$ was derived in terms of   $\|\hat\calR - \calR\|_\infty$ \cite[Theorem 4]{liao2016music}. In the clumps model, such a derivation is more complicated and we leave it as a future work. 
\end{remark}

Theorem \ref{thmmusic} is stated for even $M$ only for simplicity, and a similar result holds for odd $M$ by setting $L= \lfloor M/2\rfloor$. This theorem applies to any noise vector $\eta$. In the case of bounded noise, applying the inequality $\|\calH(\eta)\|_2\leq \|\calH(\eta)\|_F\leq \sqrt L \|\eta\|_2$ gives rise to the following corollary:
\begin{corollary}
\label{comusicbounded}
Under the assumptions in Theorem \ref{thmmusic},
$$\|\hat\calR - \calR\|_\infty
	\le \frac{4\sqrt L \|\eta\|_2}{x_{\min} M} \sum_{a=1}^A \big(c_a \, \alpha^{-\lambda_a+1} \big)^2.
	$$ 
\end{corollary}

If $\eta$ is independent gaussian noise, i.e., $\eta \sim \mathcal{N}(0,\sigma^2 I)$, the spectral norm of $\calH(\eta)$ has been well studied in literature \cite{meckes2007spectral,adamczak2010few}. The following lemma \cite[Theorem 4]{liao2015multi} is obtained from the matrix Bernstein inequality \cite{tropp2012user}:
\begin{lemma}
\label{lemmahankelg}
	If $\eta \sim \mathcal{N}(0,\sigma^2 I)$, then
	\label{lemmanoise}
	\begin{align*}
	\mathbb{P}\left\{ 
	\|\calH(\eta)\|_2 \ge t
	\right\}
	&\le (M+2) \exp\left(-\frac{t^2}{2\sigma^2 \max(L+1,M-L+1)}\right), \forall t>0.
	\end{align*}  
\end{lemma}

Combining Theorem \ref{thmmusic} and Lemma \ref{lemmahankelg} gives rise to the following explicit bound on the noise-to-signal ratio $\sigma/x_{\min}$ in order to guarantee a fixed $\epsilon$-perturbation of $\calR$.

\begin{corollary}
\label{comusicgaussian}
Suppose $\eta \sim \mathcal{N}(0,\sigma^2 I)$. Fix $\epsilon>0$ and $\nu>1$.
Under the assumptions in Theorem \ref{thmmusic}, if
	\begin{equation}
	\frac{\sigma}{x_{\min}} <\frac{M}{4 \sqrt{\nu(M+2)\log(M+2)}}  \left(\sum_{a=1}^A \big(c_a  \alpha^{-\lambda_a+1} \big)^2 \right)^{-1} \epsilon,
	\label{thmmusiceq1}
	\end{equation}
	then
	$	\|\hat\calR - \calR\|_\infty 
	\le \epsilon 
	$
	with probability no less than $1-(M+2)^{-(\nu-1)}$.
\end{corollary}

Corollary \ref{comusicgaussian} is proved in Appendix \ref{secproofthmmusic}. It upper bounds the noise-to-signal ratio for which MUSIC can guarantee an $\epsilon$-perturbation of $\calR$. In the special case where each $\Lambda_a$ contains $\lambda$ equally spaced points with spacing $\alpha/M$ (see Figure \ref{FigDemoClumps1}), \eqref{thmmusiceq1} can be simplified to
\begin{equation}
\frac{\sigma}{x_{\min}} \propto\epsilon \sqrt{\frac{M}{\log M}} \alpha^{2\lambda-2}
=\epsilon \sqrt{\frac{M}{\log M}} \left(\frac{1}{\rm SRF}\right)^{2\lambda-2},
\label{musicres2}
\end{equation}
which shows that the noise-to-signal ratio that MUSIC can tolerate is exponential in $1/{\rm SRF}$. The key contribution of this paper is that, the exponent only depends on the cardinality of the clumps instead of the total sparsity $S$. These estimates are verified by numerical experiments in Section \ref{subsecmusicnum}.

\subsection{Numerical simulations on the super-resolution limit of MUSIC}
\label{subsecmusicnum}

In our experiments, the true support $\Omega$ consists of $A=1,2,3,4$ clumps and each clump contains $\lambda$ equally spaced point sources separated by $\Delta = \alpha/M$ where $1/\alpha$ is the SRF of $\Omega$ (see Figure \ref{Fig_TransitionImage} (a) for an example). The clumps are separated at least by $10/M$. The $x_i$'s are complex with unit magnitudes and random phases. Noise is gaussian: $\eta\sim \mathcal{N}(0,\sigma^2 I)$.
We set $M=100$ and let $\Delta$ vary so that the $\OmegaRF$ varies from $1$ to $10$. 
We run MUSIC with the varying $\OmegaRF$ and $\sigma$ for $10$ trials. The support error is measured by the matching distance between $\Omega$ and $\hat\Omega$:
$${\rm dist}_B(\Omega,\hat\Omega) := \inf_{\substack{\text{bijection }  \psi:\  \Omega\rightarrow  \hat\Omega } } \  \sup_{\hat\omega \in \hat\Omega} 
\left |\hat\omega - \psi(\omega)\right|_{\mathbb{T}}
.$$ 
Figure \ref{Fig_TransitionImage} (b) displays the average $\log_2 [{{\rm dist}_B(\Omega,\hat\Omega)}/{\Delta}]$ over $10$ trials with respect to $\log_{10}\OmegaRF$ (x-axis) and $\log_{10}\sigma$ (y-axis) when $\Omega$ contains $2$ clumps of $3$ equally spaced point sources: $A=2, \lambda = 3$. A clear phase transition demonstrates that MUSIC is capable of resolving closely spaced \textit{complex-valued} objects as long as $\sigma$ is below certain threshold. 

\begin{figure}[h]
	\subfigure[$2$ clumps of $3$ equally spaced point sources separated by $\Delta = \alpha/M$]
	{
		\begin{tikzpicture}[xscale = 1.5,yscale = 1.5]
		\draw[thick] (-5.5,0) -- (-0.1,0);
		\draw[red,thick,->] (-5,0) -- (-5,2);
		\draw[red,thick,->] (-4.7,0) -- (-4.7,2);
		\draw[red,thick,->] (-4.4,0) -- (-4.4,2);
		\draw[blue,thick,<->] (-4.7,-0.2) -- (-4.4,-0.2);
		\node[blue,below] at (-4.1,-0.2) {$ \alpha/M$};
		\draw[blue,thick,<->] (-4.4,1) -- (-1,1);
		\node[blue,above] at (-2.8,1) {$\ge {\rm Constant}/M$};
		\draw[red,thick,->] (-1,0) -- (-1,2);
		\draw[red,thick,->] (-0.7,0) -- (-0.7,2);
		\draw[red,thick,->] (-0.4,0) -- (-0.4,2);
		\draw[blue,thick,<->] (-1,-0.2) -- (-0.7,-0.2);
		\node[blue,below] at (-0.7,-0.2) {$ \alpha/M$};
		\node[above] at (-4.6,2.2) {$\Lambda_1$};
		\node[above] at (-0.75,2.2) {$\Lambda_2$};
		\node[below] at (-2.5,-1) {$\lambda = \max(|\Lambda_1|,|\Lambda_2|) = 3$};
		\end{tikzpicture}
	}
	\subfigure[Phase transition of MUSIC]{
		\includegraphics[width=0.55\textwidth]{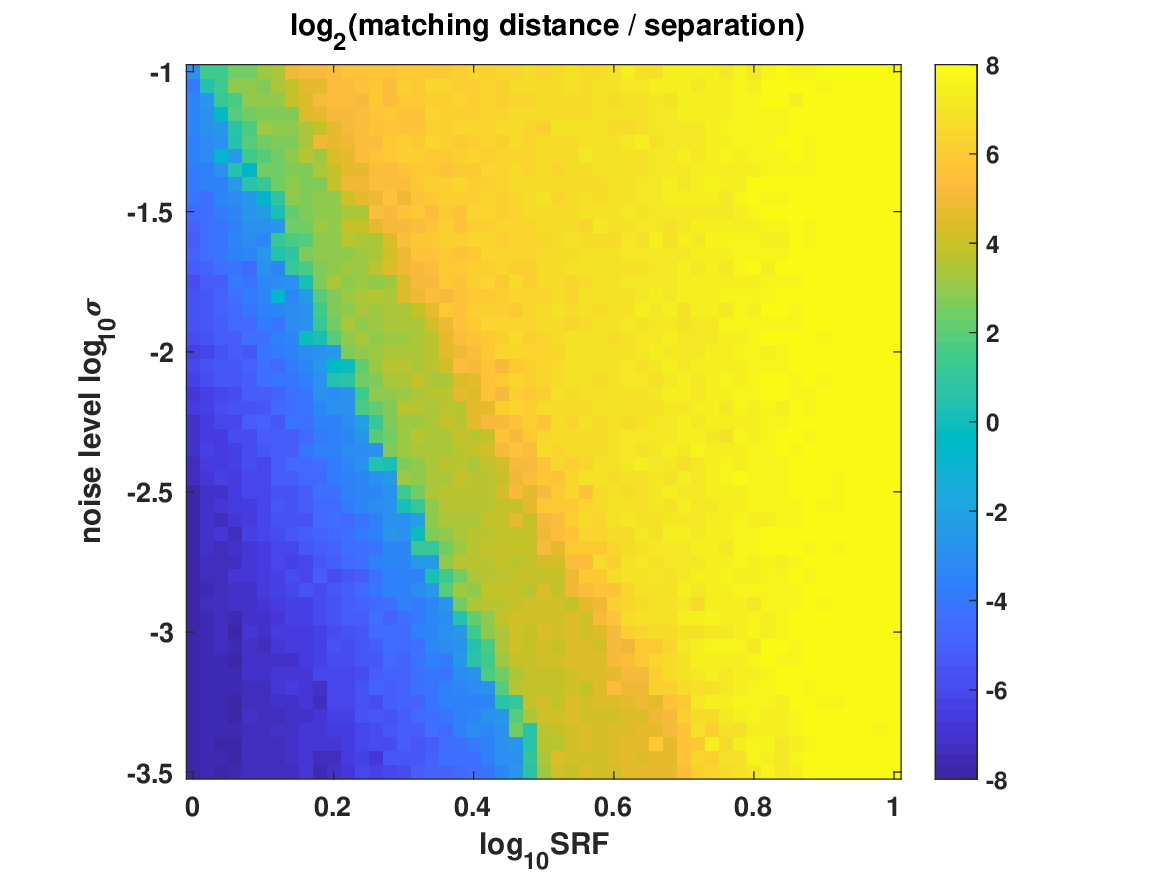}
	}
	\caption{Figure \ref{Fig_TransitionImage} (b) displays the average $\log_2 [{{\rm dist}_B(\Omega,\hat\Omega)}/{\Delta}]$ over $10$ trials with respect to $\log_{10}\OmegaRF$ (x-axis) and $\log_{10}\sigma$ (y-axis) when $A=2$ and $\lambda =3$.}
	\label{Fig_TransitionImage}
\end{figure}

\quad 

\begin{table}[h]
	\centering
	\begin{tabular}{ | c | c | c | c | c | c|  c|}
		\hline			
		& $\lambda = 2$ & $\lambda = 3$ & $\lambda = 4$ & $\lambda = 5$ & Numerical $q(\Omega)$ &Theoretical $q(\Omega)$  \\
		\hline
		1-clump: $A=1$  & $3.0019$ & $5.1935$ & $7.4176$ & $10.4286$ & $2.45\lambda-2.07$ & $2\lambda-2$ \\
		\hline
		2-clump: $A=2$  & $3.1287$ & $5.2717$ & $7.9371$ & $10$ & $2.33\lambda-1.56$ & $2\lambda-2$ \\
		\hline
		3-clump: $A=3$  & $3.1081$ & $5.1826$ & $7.299$ & $8.5$ & $1.83\lambda-0.38$ & $2\lambda-2$ \\
		\hline
		4-clump: $A=4$  & $3.0767$ & $5.1731$ & $7.3252$ & $10$ &$2.29\lambda-1.63$  & $2\lambda-2$ \\
		\hline  
	\end{tabular}
	\caption{Numerical simulations of $q(\Omega)$ in \eqref{eqmusicreslimit} on the super-resolution limit of MUSIC.}
	\label{TableMUSIC}
\end{table}

In Figure \ref{Fig_PhaseTransition}, we display the phase transition curves at which ${\rm dist}_B(\Omega,\hat\Omega) \approx {\Delta}/2$  with respect to $\log_{10}\OmegaRF$ (x-axis) and $\log_{10}\sigma$ (y-axis) when $\Omega$ contains $A=1,2,3,4$ clumps of $\lambda = 2,3,4,5$ equally spaced point sources. All phase transition curves are almost straight lines, manifesting that the noise level $\sigma$ that MUSIC can tolerate satisfies 
\begin{equation}
\sigma \propto \OmegaRF^{-q(\Omega)}.
\label{eqmusicreslimit}
\end{equation}
A least squares fitting of the curves by straight lines gives rise to the exponent $q(\Omega)$ numerically, summarized in Table \ref{TableMUSIC}. It is close to our theory \eqref{musicres2} of $q(\Omega) = 2\lambda -2$.

\begin{figure}[h]
	\centering 
	\subfigure[1-clump model: $A=1$]{
		\includegraphics[width=0.45\textwidth]{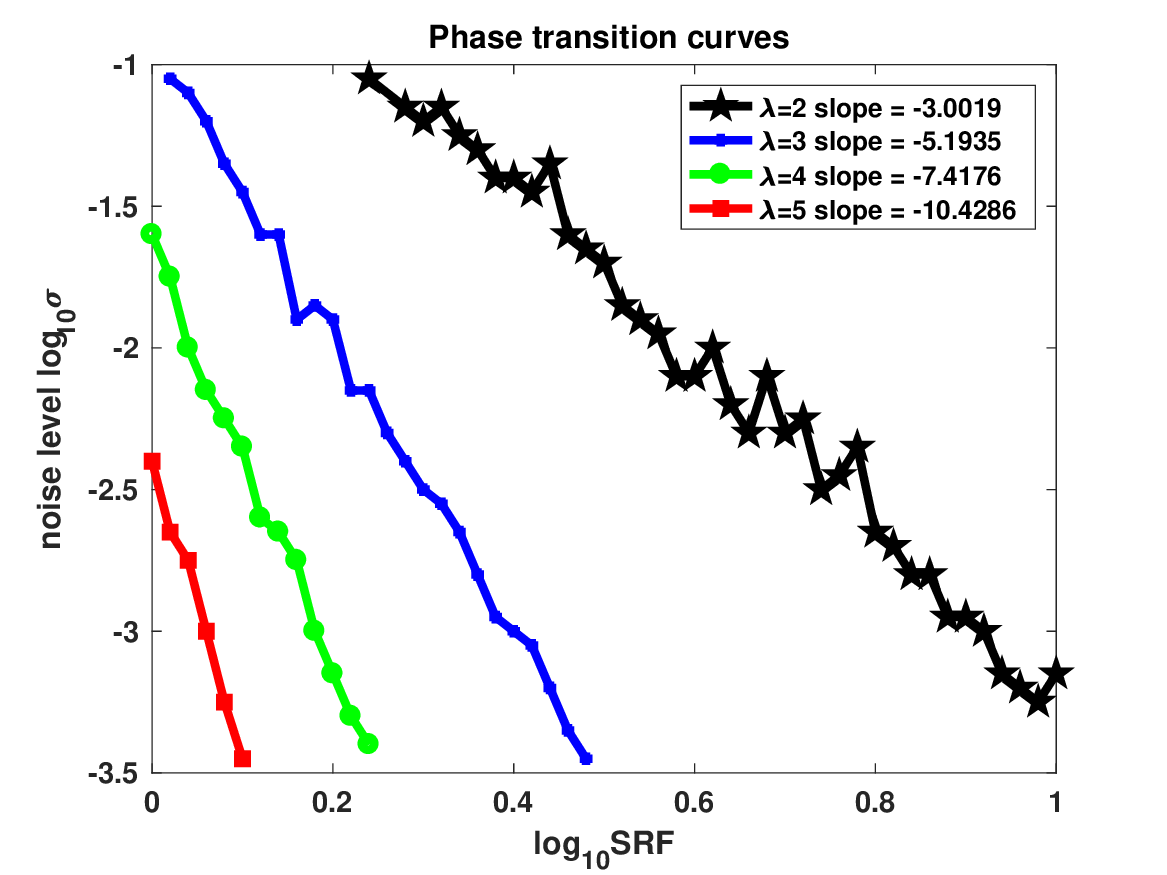}
	}
	\subfigure[2-clump model: $A=2$]{
		\includegraphics[width=0.45\textwidth]{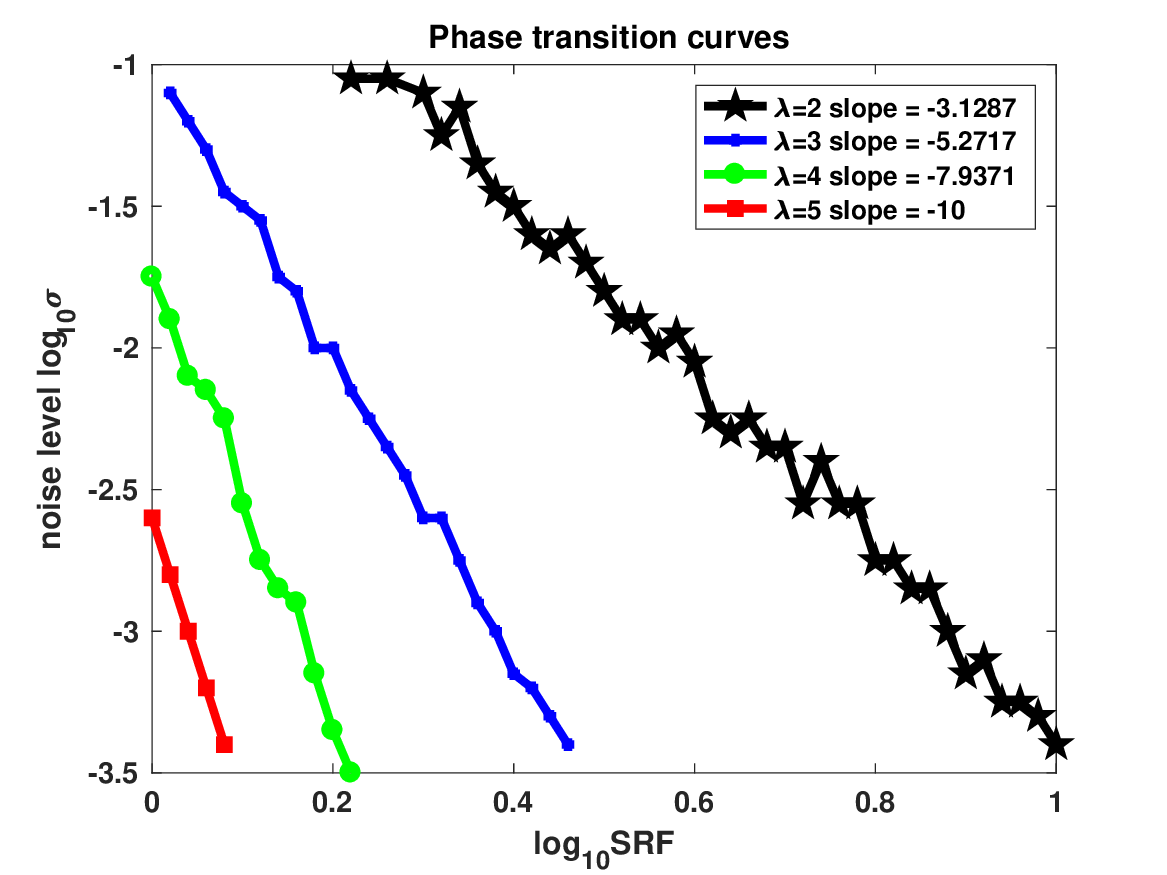}
	}
	\subfigure[3-clump model: $A=3$]{
		\includegraphics[width=0.45\textwidth]{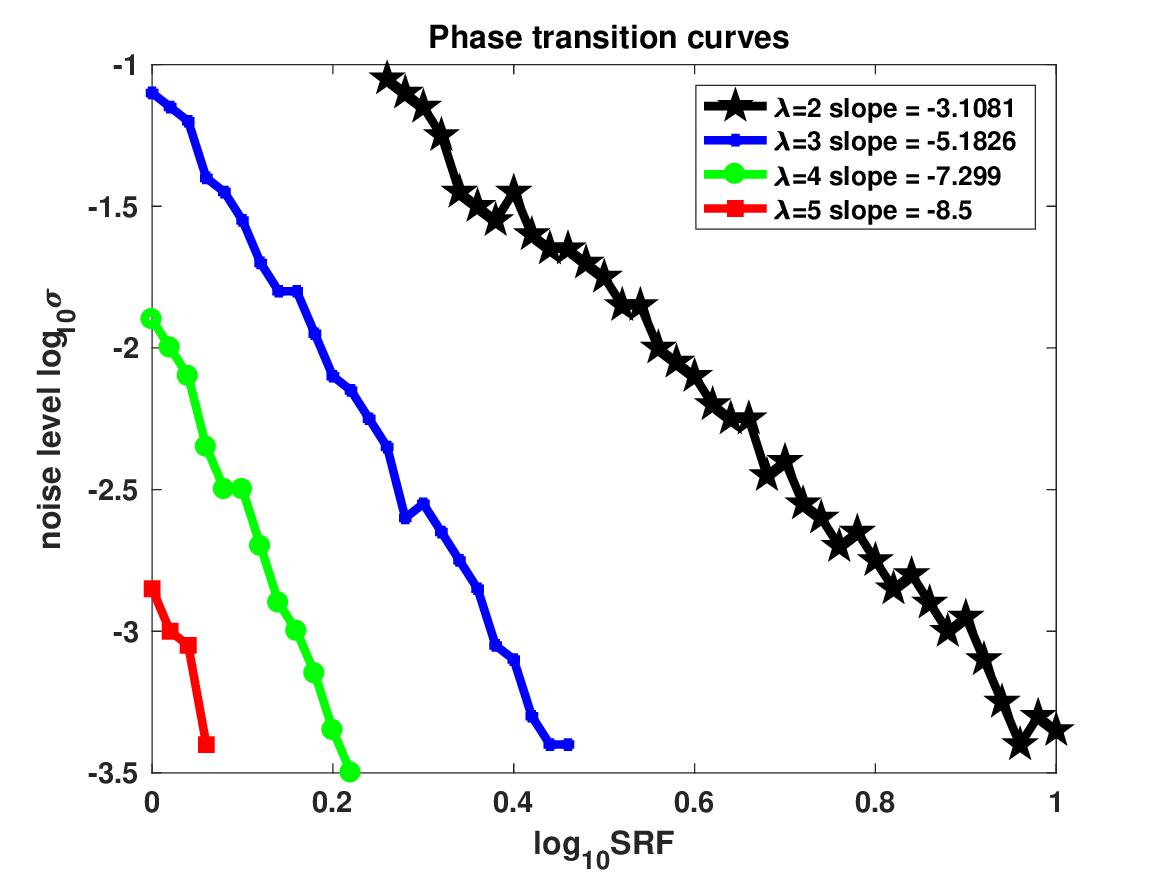}
	}
	\subfigure[4-clump model: $A=4$]{
		\includegraphics[width=0.45\textwidth]{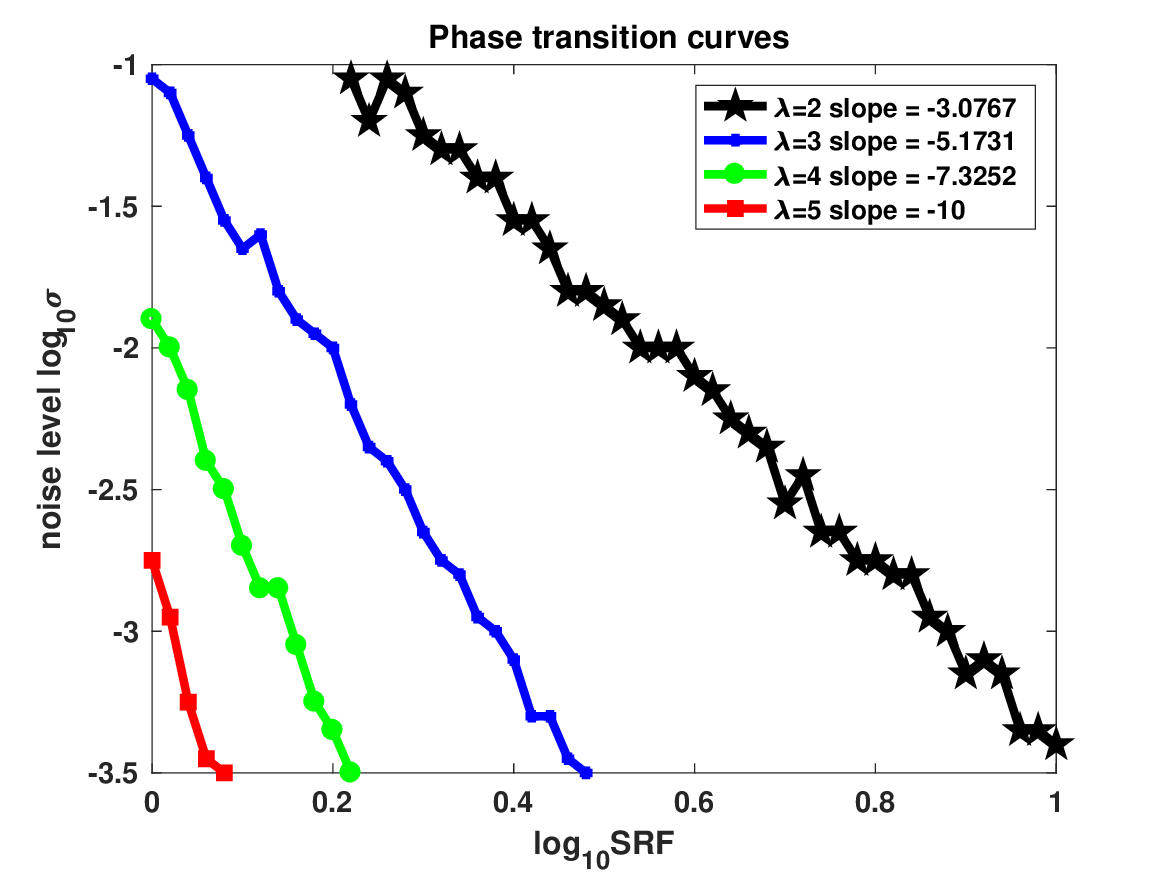}
	}
	\caption{Phase transition curves: (a-d) shows the average $\log_2 [{{\rm dist}_B(\Omega,\hat\Omega)}/{\Delta}]$ over $10$ trials with respect to $\log_{10}\OmegaRF$ (x-axis) and $\log_{10}\sigma$ (y-axis) when $\Omega$ contains $A=1,2,3,4$ clumps of $\lambda = 2,3,4,5$ equally spaced point sources. The slope is computed by a least squares fitting by a straight line.}
	\label{Fig_PhaseTransition}
\end{figure}

\subsection{Related work on subspace methods}

Subspace methods, including MUSIC \cite{schmidt1986multiple}, ESPRIT \cite{kailath1989esprit} and MPM \cite{hua1990matrixpencil}, were initially proposed for the Direction-Of-Arrival (DOA) estimation \cite{krim1996array}. In the DOA setting, the amplitude vector $x$ is random with respect to time, and multiple snapshots of measurements are taken. The covariance matrix of $y$ possesses a Vandermonde decomposition so that MUSIC, ESPRIT and MPM are applicable. The classical theories in  \cite{stoica1989music,ottersten1991performance} primarily analyze the stability of MUSIC and ESPRIT with respect to the number of snapshots, denoted by $\#{\rm Snapshot}$. They show that the asymptotic distribution of the squared error is on the order of $C\times {\rm noise}/\#{\rm Snapshot}$ where the constant $C$ depends on $\sigma_{\min}(\Phi)$. It is not clear from \cite{stoica1989music,ottersten1991performance} how large the implicit constant $C$ is, while our paper gives an explicit characterization of the $C$ under the clumps model.

The super-resolution problem considered in this paper corresponds to the single snapshot case, in which case, the key quantity of interest is the implicit constant $C$. In the single-snapshot case, there have been several works on the stability of subspace methods, which addressed the connection between $\sigma_{\min}(\Phi)$ and the support $\Omega$. These works include MUSIC \cite{liao2016music}, ESPRIT \cite{fannjiang2016compressive} and MPM \cite{moitra2015matrixpencil}, but only apply to the well-separated case. During the review period of this paper, the present authors applied Theorem \ref{thm:clump2} to derive an error bound for ESPRIT in the super-resolution regime \cite{li2020super}.

In literature, the resolution of MUSIC  was addressed in  \cite{stoica1995resolution,lee1992cramer,lee1992eigenvalues}.  These papers studied the frequency estimation error in the DOA setting with multiple snapshots of measurements. The Cram\'er-Rao lower bound is derived based on the statistics of random amplitudes. Thus, these results do not apply to the single-snapshot case. It was explicitly mentioned in \cite[after Equation (1.2)]{stoica1995resolution} that, \lq\lq In effect, the analysis in \cite{lee1992cramer} assumes the multi-experiment case, and hence may not apply to the single experiment case.\rq\rq\  Our theory shares some similarity with \cite{lee1992cramer}. Equation (2) in \cite{lee1992cramer} implies that, to guarantee the success of MUSIC for the estimation of $\lambda$ closely spaced frequencies with separation $\Delta$, the SNR needs to be at least $\Delta^{-2(\lambda-1)}$, which has the same dependence on $\Delta$ as the Equation (4.6) of our paper. Here are some differences between Equation (2) in \cite{lee1992cramer} and Equation (4.6) of our paper: (1) Our Equation (4.6) implies SNR in the order of ${\rm SRF}^{2(\lambda-1)} = (M\Delta)^{-2(\lambda-1)}$, which is more accurate than  $\Delta^{-2(\lambda-1)}$. In other words, our theory takes the advantage of a large $M$ in applications. (2) Our theory does not rely on the randomness of $x$, and considers the clumps model, which is more general than just having  $\lambda$ closely spaced frequencies with separation $\Delta$. (3) Our results are non-asymptotic and work for finite measurements $M$, while the theories in \cite{stoica1995resolution,lee1992cramer,lee1992eigenvalues} are asymptotic and require $\#{\rm Snapshot} \rightarrow \infty$ and $\Delta\rightarrow 0$. 

Finally we remark that the results in the present paper are not the same as the Cram\'er-Rao bounds for a single-snapshot case. In the single snapshot case, one assumes certain statistics of the random noise, and the Cram\'er-Rao bound expresses  a lower bound on the variance of unbiased estimators. The noise statistics play an important role in the the Cram\'er-Rao bound. Our perturbation bound for MUSIC, such as Corollary \ref{comusicbounded}, is deterministic and can be applied for all noise vectors, independently of the noise statistics.

%!TEX root = main.tex
% % % % % % % % % % % % % % % % % % % %

\commentout{
\section{Conclusion}

Without any additional assumptions on the unknown measure, super-resolution is an unstable inverse problem. Past works showed that the assumption $\Delta>C/M$, for a sufficiently large universal constant $C$, regularizes the problem and there are several algorithms that provably succeed in this regime. However, the $\Delta<1/M$ situation is important in super-resolution imaging and a different kind of assumption is needed. As seen in our estimate for the min-max error, a sparsity assumption is typically not strong enough unless the noise energy is significantly smaller than $\SRF^{-2S+1}$. In this paper, we regularized the inverse problem by imposing a geometric constraint on $\Omega$. 
More specifically, if $\Omega$ consists of well separated clumps, then MUSIC is stable provided that the noise energy is significantly smaller than $\SRF^{-2\lambda+1}$. This is the first result to rigorously confirm prior numerical evidence that MUSIC can succeed in the regime $\Delta<1/M$. More generally, this paper highlights that super-resolution recovery is possible under more situations than previously indicated by sparsity models.
}
%!TEX root = main.tex

% % % % % % % % % % % % % % % % % % % %
\section*{Acknowledgements}

Both authors express their sincere gratitude to Professors John Benedetto, Sinan G\"unt\"urk, and Kasso Okoudjou for valuable advice and insightful discussions. Weilin Li was supported by the Defense Threat Reduction Agency Grant 1-13-1-0015, and the Army Research Office Grants W911 NF-17-1-0014 and W911 NF-16-1-0008. He was also supported by the National Science Foundation Grant DMS-1440140 during the Spring 2017 semester while in residence at the Mathematical Sciences Research Institute in Berkeley, California, and by the University of Maryland's Ann G. Wylie Dissertation fellowship during the Spring 2018 semester. Wenjing Liao was supported by NSF DMS 1818751, a startup fund from Georgia Institute of Technology, and the AMS-Simons Travel Grants.

%!TEX root = main.tex

% % % % % % % % % % % % % % % % % % % %
\begin{appendices}
	
% % % % % % % % % % % % % % % % % % % %
\section{Proof of theorems and corollaries}
\label{sec:proofs}

% % % % % % % % % % % % % % % % % % % %
\subsection{Proof of Theorem \ref{thm:clump1}}

\label{proof:clump1}

We first need to introduce the following function that is of great importance in Fourier analysis. For a positive integer $P$, we define the normalized {\it Fej\'er kernel} $F_P\in C^\infty(\T)$ by the formula,
\[
F_P(\omega)
:=\frac{1}{P+1}\sum_{m=-P}^P \(1-\frac{|m|}{P+1}\) e^{2\pi im\omega}
=\frac{1}{(P+1)^2} \(\frac{\sin(\pi (P+1)\omega)}{\sin(\pi \omega)}\)^2. 
\]
The normalization is chosen so that $F_P(0)=1$. We recall some basic facts about the Fej\'er kernel. Its $L^2(\T)$ norm can be calculated using Parseval's formula, and so
\begin{equation}
\label{eq:fejer1}
\|F_P\|_{L^2(\T)}
=\frac{1}{(P+1)^2}\((P+1)^2+2\sum_{m=1}^P m^2\)^{1/2}
\leq \frac{1}{(P+1)^{1/2}}. 
\end{equation}
We can also provide a point-wise estimate. By the trigonometric inequality $|\sin(\pi \omega)|\geq 2|\omega|_\T$, we have
\begin{equation}
\label{eq:fejer2}
|F_P(\omega)|\leq \frac{1}{2^2(P+1)^2|\omega|_\T^2},
\quad\text{for all}\quad
\omega\in \T.
\end{equation}
If we raise the Fej\'er kernel to a power $R$, then the function $(F_P(\omega))^R$ has better decay, but at the cost of increasing its frequency support. If we keep the product $PR$ fixed, then increasing $R$ leads to better decay at the expense of worse localization near the origin.

The proof of Theorem \ref{thm:clump1} relies on the quantitative properties of a set of polynomials $\{I_j\}_{j=1}^S$, with $I_j$ depending on $\Omega$ and $M$, which we shall explicitly construct. The construction seems complicated, but the idea is very simple. For each $\omega_j\in\Omega$, we construct $I_j\in\P(M)$ such that it decays rapidly away from $\omega_j$ and
\[
I_j(\omega_k)=\delta_{j,k},
\quad\text{for all}\quad
\omega_k\in \Lambda_a.
\]
The key is to carefully construct each $I_j$ so that it has small norm; otherwise, the resulting lower bound for $\sigma_{\min}(\Phi)$ would be loose and have limited applicability. The construction of these polynomials is technical and it can be found in Appendix \ref{sec:lemmas}.

\begin{lemma}
	\label{lemma:localizedpoly}
	Suppose the assumptions of Theorem \ref{thm:clump1} hold and define the constant 
	\[
	\tilde B_a
	:=\tilde B_a(\lambda_a,M)
	:= \(1-\frac{\pi^2}{3\lambda_a^2}\)^{-(\lambda_a-1)/2} \(\frac{M}{\lambda_a}\)^{\lambda_a-1} \Big\lfloor \frac{M}{\lambda_a}\Big\rfloor^{-(\lambda_a-1)}. 
	\]
	For each $1\leq a\leq A$ and each $\omega_j\in\Lambda_a$, there exists a $I_j\in\P(M)$ satisfying the following properties.
	\begin{enumerate}[(a)]
		\item 
		$I_j(\omega_k)=\delta_{j,k}$ for all $\omega_k\in\Lambda_a$.
		\item 
		$|I_j(\omega_k)|\leq 1/(20S)$ for all $\omega_k\not\in\Lambda_a$.
		\item 
		$\|I_j\|_{L^2(\T)}
		\leq (2/M)^{1/2} \tilde B_a \lambda_a^{\lambda_a} \rho_j.$
	\end{enumerate}
\end{lemma}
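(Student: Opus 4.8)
The plan is to construct each $I_j$ as a product $I_j = L_j\,G_j$ of an \emph{interpolation factor} $L_j$ that forces (a) and a \emph{localization factor} $G_j$ that forces (b) and (c). Fix the clump $\Lambda_a\ni\omega_j$, write $\lambda=\lambda_a$, and set $K+1=\lfloor M/\lambda\rfloor$. The obvious candidate for $L_j$ — the genuine degree‑$(\lambda-1)$ Lagrange interpolant $\prod_{\omega_m\in\Lambda_a\setminus\{\omega_j\}}(e^{-2\pi i\omega}-e^{-2\pi i\omega_m})/(e^{-2\pi i\omega_j}-e^{-2\pi i\omega_m})$ — is useless here, because its denominators have size $\asymp|\omega_j-\omega_m|<1/M$, so it has size $\asymp M^{\lambda-1}$ away from $\Lambda_a$, and no polynomial of degree $\le M$ can undo such growth. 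Instead I would use ``fat linear factors'': let $g(\omega)=1-e^{2\pi i(K+1)\omega}$, a degree‑$(K+1)$ polynomial with $|g|\le 2$ everywhere, a simple zero at $0$, and $g(\omega)\asymp(K+1)\omega$ near $0$, and put $L_j(\omega)=\prod_{\omega_m\in\Lambda_a\setminus\{\omega_j\}} g(\omega-\omega_m)/g(\omega_j-\omega_m)$. Since $|\omega_j-\omega_m|<1/M<1/(K+1)$, no denominator vanishes, $L_j(\omega_k)=\delta_{j,k}$ for all $\omega_k\in\Lambda_a$, and — crucially — the factor $K+1\asymp M/\lambda$ in the denominators cancels the $M$ hidden in $\rho_j=\prod_m(\pi M|\omega_j-\omega_m|)^{-1}$, yielding a bound $|L_j|\lesssim\lambda^{\lambda-1}\rho_j$ that is bounded in $M$. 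For $G_j$ I would take a normalized power of the Fej\'er kernel centered at $\omega_j$: with $R\asymp\lambda$ and $P\asymp M/\lambda^2$ chosen so that $\deg(L_jG_j)=(\lambda-1)\lfloor M/\lambda\rfloor+2PR\le M$, set $G_j(\omega)=e^{2\pi iPR(\omega-\omega_j)}F_P(\omega-\omega_j)^R$; then $G_j(\omega_j)=1$, $|G_j(\omega)|\le\big(2(P+1)|\omega-\omega_j|_\T\big)^{-2R}$ by \eqref{eq:fejer2}, and $\|G_j\|_{L^2(\T)}^2\le\|F_P\|_{L^2(\T)}^2\le(P+1)^{-1}$ by \eqref{eq:fejer1}. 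Then $I_j:=L_jG_j\in\P(M)$.

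Property (a) is then immediate: $I_j(\omega_k)=L_j(\omega_k)G_j(\omega_k)=\delta_{j,k}$ for $\omega_k\in\Lambda_a$, since $G_j(\omega_j)=1$ and $G_j(\omega_k)$ is multiplied by $0$ when $k\ne j$. For (b), if $\omega_k\in\Lambda_b$ with $b\ne a$ then $|\omega_k-\omega_j|\ge\mathrm{dist}(\Lambda_a,\Lambda_b)$, and feeding the separation hypothesis \eqref{eq:sep1} into $|G_j(\omega_k)|\le\big(2(P+1)|\omega_k-\omega_j|\big)^{-2R}$ (with $P\asymp M/\lambda^2$) produces a decay that overpowers the bound $|L_j(\omega_k)|\lesssim\lambda^{\lambda-1}\rho_j$: because $2R\gtrsim 2\lambda$, the $\rho_j$ coming from $L_j$ is absorbed by the factor $(S\rho_j)^{-R/\lambda}\le(S\rho_j)^{-1}$ generated by the $(S\rho_j)^{1/(2\lambda)}$ in \eqref{eq:sep1}, leaving $|I_j(\omega_k)|\le 1/(20S)$; the constant $10$ and the power $5/2$ in \eqref{eq:sep1} are calibrated precisely to close this. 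For (c), split $\|I_j\|_{L^2(\T)}^2=\int_\T|L_j|^2|G_j|^2$ into the region $\{|\omega-\omega_j|\lesssim 1/M\}$ and its complement. On the former, a refined bound for $|L_j|$ — using the sharp estimate $|\sin\pi t|\ge\pi|t|(1-\pi^2t^2/6)$ in the denominators of $L_j$ — is what accounts for the $(1-\pi^2/(3\lambda^2))^{-(\lambda-1)/2}$ factor in $B_a$, and combining it with $\int_\T|G_j|^2\le(P+1)^{-1}\asymp 2/M$ gives a contribution of the claimed order $(2/M)^{1/2}B_a\lambda^{\lambda-1}\rho_j$; on the complement, the $2R$‑th order decay of $G_j$ renders the contribution negligible compared with the first piece (here $M\ge 2S^2$ supplies the needed room). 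The rounding correction $(M/\lambda)^{\lambda-1}\lfloor M/\lambda\rfloor^{-(\lambda-1)}$ in $B_a$ is an artifact of using degree $K+1=\lfloor M/\lambda\rfloor$ rather than $M/\lambda$.

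The main obstacle is the very first step: building $L_j$ so that it simultaneously (i) vanishes to exactly first order at the other points of $\Lambda_a$, (ii) stays bounded — independently of $M$ — by $\lambda^{\lambda-1}\rho_j$ away from $\Lambda_a$ rather than growing like $M^{\lambda-1}$, and (iii) consumes only about $(\lambda-1)\,M/\lambda$ of the degree budget, so that a Fej\'er power $G_j$ of degree up to $M/\lambda$ (hence decay exponent $\gtrsim 2\lambda$, which the possibly polynomially large $\rho_j$ forces) still fits inside $\P(M)$. The ``fat linear factor'' $g(\omega)=1-e^{2\pi i\lfloor M/\lambda\rfloor\omega}$ is what resolves this tension; once it is in place, everything else reduces to careful bookkeeping with the pointwise and $L^2$ bounds \eqref{eq:fejer1}–\eqref{eq:fejer2} on the Fej\'er kernel, and it is that bookkeeping — pinning down the exact constant $B_a$ — rather than any conceptual difficulty that is the delicate part of the argument.
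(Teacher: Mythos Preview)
Your proposal is essentially the paper's own argument: the paper writes $I_j=G_jH_j$ with $G_j$ exactly your ``fat linear factor'' Lagrange interpolant built from $e^{2\pi iQ_j\omega}-e^{2\pi iQ_j\omega_k}$ with $Q_j=\lfloor M/\lambda_a\rfloor$, and $H_j$ exactly your Fej\'er power with exponent $R=\lambda_a$ and $P_j=\lfloor M/(2\lambda_a^2)\rfloor$; only the names $L_j,G_j$ are swapped. The one place you overcomplicate is (c): the refined lower bound $2-2\cos(2\pi t)\ge(2\pi t)^2(1-\pi^2t^2/3)$ applied to the \emph{denominators} of $L_j$ already yields a \emph{global} $L^\infty$ bound $\|L_j\|_{L^\infty}\le B_a'\lambda_a^{\lambda_a-1}\rho_j$, so a single H\"older inequality $\|I_j\|_{L^2}\le\|L_j\|_{L^\infty}\|G_j\|_{L^2}$ finishes the job without splitting into near/far regions.
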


\begin{proof}[Proof of Theorem \ref{thm:clump1}]
	Let $\{I_j\}_{j=1}^S$ be the polynomials constructed in Lemma \ref{lemma:localizedpoly}. Let $v\in\C^S$ be a unit norm vector such that 
	\[
	\sigma_{\min}(\Phi)
	=\|\Phi v\|_2.
	\]
	We define the trigonometric polynomial $I\in \P(M)$ by the formula,
	\[
	I(\omega)
	:=I(\omega,v)
	:=\sum_{j=1}^S v_j I_j(\omega).
	\]
	For each index $1\leq k\leq S$, we define the quantity
	\[
	\epsilon_k
	:=I(\omega_k)-v_k.
	\]
	Since $I_j(\omega_j)=1$, we have
	\[
	\epsilon_k
	=\sum_{j\not = k} v_j I_j(\omega_k).
	\]
	Fix a $\omega_k\in\Omega$. Then $\omega_k\in\Lambda_a$ for some $1\leq a\leq A$. By Cauchy Schwartz, the assumption that $v$ is unit norm, the property that $I_j(\omega_k)=\delta_{j,k}$ for all $\omega_k\in \Lambda_a$, and the upper bound on $|I(\omega_k)|$ given in Lemma \ref{lemma:localizedpoly}, we deduce
	\[
	|\epsilon_k|
	\leq \(\sum_{j\not=k} |I_j(\omega_k)|^2\)^{1/2}
	= \(\sum_{\omega_j\not\in\Lambda_a} |I_j(\omega_k)|^2\)^{1/2}
	\leq \frac{1}{20\sqrt{S}}. 
	\]
	This holds for each $1\leq k\leq S$, so we have 
	\[
	\|\epsilon\|_2\leq \sqrt{S} \|\epsilon\|_\infty \leq \frac{1}{20}.
	\]
	The conditions of robust duality, Proposition \ref{prop:duality2}, are satisfied, so we have
	\[
	\sigma_{\min}(\Phi)
	=\|\Phi v\|_2
	\geq \frac{19}{20}\|I(\cdot,v)\|_{L^2(\T)}^{-1}.
	\]
	To complete the proof, we need to upper bound $\|I(\cdot,v)\|_{L^2(\T)}$ uniformly in $v$. We use Cauchy-Schwartz, that $v$ has unit norm, and the norm bound for $I_j$ given in Lemma \ref{lemma:localizedpoly} to obtain the upper bound,
	\[
	\|I(\cdot,v)\|_{L^2(\T)}
	\leq \(\sum_{j=1}^S \|I_j\|_{L^2(\T)}^2 \)^{1/2}
	\leq \(\frac{2}{M}\)^{1/2} \(\sum_{j=1}^S (\tilde B_a \lambda_a^{\lambda_a}\rho_j)^2\)^{1/2},
	\] 
	where $\tilde B_a$ was defined in the referenced lemma. Combining the previous two inequalities shows that 
	\[
	\sigma_{\min}(\Phi)
	\geq \frac{19}{20\sqrt 2} \sqrt M  \(\sum_{j=1}^S (\tilde B_a^2\lambda_a^{\lambda_a}\rho_j)^2\)^{-1/2}
	=\sqrt M  \(\sum_{j=1}^S (B_a\lambda_a^{\lambda_a}\rho_j)^2\)^{-1/2},
	\]
	where by definition, $B_a = \tilde B_a(20\sqrt 2)/19$. 
\end{proof}

% % % % % % % % % % % % % % % % % % % %
\subsection{Proof of Theorem \ref{thm:clump2}}

\label{proof:clump2}

\begin{proof}
	Fix an index $1\leq a\leq A$ and $\omega_j\in \Lambda_a$. Recalling the definition of $\rho_j$ and using that $\Delta\geq \alpha/M$, we see that
	\[
	\rho_j
	=\prod_{\omega_k\in \Lambda_a \setminus\{\omega_j\}}
	\frac{1}{\pi M|\omega_k-\omega_j|_\T}
	\leq \(\frac{1}{\pi\alpha}\)^{\lambda_a-1}. 
	\]
	This implies that 
	\[
	\frac{10\lambda_a^{5/2} (S\rho_j)^{1/(2\lambda_a)}}{M}
	\leq \frac{10\lambda_a^{5/2} S^{1/2}}{M\alpha^{1/2}}.
	\]
	This in turn, shows that the separation condition \eqref{eq:sep2} implies \eqref{eq:sep1}. Hence, the assumptions of Theorem \ref{thm:clump1} are satisfied, and we have
	\[
	\sigma_{\min}(\Phi)
	\geq \sqrt{M} \(\sum_{a=1}^A \sum_{\omega_j\in \Lambda_a} (B_a \lambda_a^{\lambda_a} \rho_j)^2 \)^{-1/2}. 
	\]
	We can write the right hand side in terms of $\alpha$. Observe that if $\tilde\Lambda_a=\{\tilde \omega_j\}_{j=1}^{\lambda_a}$ contains $\lambda_a$ points that are equispaced by $\alpha/M$ and $\tilde\rho_j$ is the complexity of $\tilde\omega_j$, then 
	\[
	\sum_{\omega_j\in \Lambda_a} \rho_j^2
	\leq \sum_{\tilde\omega_j\in\tilde\Lambda_a} \tilde\rho_j^2.
	\]
	Thus we have the inequality,
	\[
	\sum_{\omega_j\in \Lambda_a} \rho_j^2
	\leq \sum_{j=1}^{\lambda_a} \(\prod_{k=1,\ k\not=j}^{\lambda_a} \frac{1}{(j-k)^2}\) \(\frac{1}{\pi\alpha}\)^{2\lambda_a-2}.
	\]
	Combining the above inequalities completes the proof.
\end{proof}

% % % % % % % % % % % % % % % % % % % %
\subsection{Proof of Theorem \ref{thm:theta}}

\label{proof:theta}

The crux of the proof is to construct, for each $\Omega\subset \{n/N\}_{n=0}^{N-1}$ with $|\Omega|=S$, a family of polynomials $\{H_j(\cdot,\Omega)\}_{j=1}^S$ with small $L^2(\T)$ norms that satisfy an appropriate interpolation property. The construction is technical because it must be done carefully in order to obtain an accurate bound for the lower restricted isometry constant. The proof of the following lemma can be found in Appendix \ref{sec:lemmas}. 
\begin{lemma}
	\label{lemma:poly1}
	Suppose the assumptions of Theorem \ref{thm:theta} hold and let $C(M,S)$ be the constant defined in the theorem. For each $\Omega\subset \{n/N\}_{n=0}^{N-1}$ and of cardinality $S$, there exist a family of polynomials $\{H_j(\cdot,\Omega)\}_{j=1}^S\subset\P(M)$ such that
	\[
	H_j(\omega_k,\Omega)=\delta_{j,k}
	\quad\text{for all}\quad \omega_j,\omega_k\in\Omega. 
	\]
	Moreover, we have the upper bound,
	\[
	\(\sum_{j=1}^S \|H_j(\cdot,\Omega)\|_{L^2(\T)}^2\)^{1/2}
	\leq C(M,S)^{-1} \frac{1}{\sqrt{M}} \(\frac{N}{M}\)^{S-1}. 
	\]
\end{lemma}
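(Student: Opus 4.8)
The overall strategy is the dual one set up in Section~2.1: for the right singular vector $v\in\C^S$ realizing $\sigma_{\min}(\Phi(\Omega,M))=\|\Phi v\|_2$, the polynomial $f_v=\sum_{j=1}^S v_j H_j(\cdot,\Omega)$ interpolates $f_v(\omega_k)=v_k$, so by the duality results $\sigma_{\min}(\Phi(\Omega,M))\ge\|f_v\|_{L^2(\T)}^{-1}\ge\big(\sum_j\|H_j(\cdot,\Omega)\|_{L^2(\T)}^2\big)^{-1/2}$; thus Theorem~\ref{thm:theta} follows from the lemma, which is what I would prove directly. Fix an arbitrary $S$-subset $\Omega=\{\omega_j\}_{j=1}^S$ of the grid of width $1/N$. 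For each $j$ I would build $H_j$ as a product of two ingredients: a \emph{localization factor} $e^{2\pi i PR(\omega-\omega_j)}F_P(\omega-\omega_j)^R$, a power of the Fej\'er kernel centered at $\omega_j$ (the unimodular prefactor only shifts the frequency support into $\{0,1,\dots,2PR\}$ and does not change $|H_j|$ or the interpolation values), and an \emph{interpolation factor} $\prod_{k\ne j}\frac{1-e^{2\pi i(\omega-\omega_k)}}{1-e^{2\pi i(\omega_j-\omega_k)}}$, a product of $S-1$ degree-one polynomials, the $k$-th of which vanishes at exactly $\omega_k$ and equals $1$ at $\omega_j$. The resolution $P+1$ is chosen of order $\lfloor M/S\rfloor$ and the power $R$ of order $S$, subject to $2PR+(S-1)\le M$ (so $H_j\in\P(M)$) and, simultaneously, $4R$ safely exceeding $2(S-1)$ (so that the Fej\'er decay outruns the polynomial growth of the interpolation product). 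Property (a) is then immediate: at $\omega_k$ with $k\ne j$ the $k$-th interpolation factor vanishes, and at $\omega_j$ the kernel and every factor evaluate to $1$.

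The heart of the proof is the uniform $L^2$ bound (c). Writing $|1-e^{2\pi i\theta}|=2|\sin\pi\theta|$ and $a_j=\prod_{k\ne j}2|\sin\pi(\omega_j-\omega_k)|$, one gets
$\|H_j\|_{L^2(\T)}^2=a_j^{-2}\int_\T F_P(\omega-\omega_j)^{2R}\prod_{k\ne j}4\sin^2\pi(\omega-\omega_k)\,d\omega$.
I would split $\T$ into $\{|\omega-\omega_j|_\T\le 1/M\}$ and its complement. On the near region $F_P^{2R}\le 1$, the factors attached to the nodes close to $\omega_j$ combine into (a power of) a Lagrange-type polynomial that may grow polynomially, while the factors attached to distant nodes are essentially harmless. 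On the far region, the estimate $F_P(\theta)\le(2(P+1)|\theta|_\T)^{-2}$ and the bounds $2|\theta|_\T\le|\sin\pi\theta|\le\pi|\theta|_\T$ reduce the integral, after the substitution $v=\pi(P+1)|\omega-\omega_j|_\T$, to an absolute constant times $(P+1)^{-(2S-1)}$ controlled by $\int_0^\infty\!\big(\tfrac{\sin v}{v}\big)^{4R}v^{2(S-1)}\,dv$ (which converges exactly because $4R>2(S-1)+1$). Tracking the constants yields $\|H_j\|_{L^2}^2\le c_S\,(P+1)^{-(2S-1)}\prod_{k\ne j}|\omega_j-\omega_k|_\T^{-2}$ with $c_S$ explicit; summing over $j$, substituting $P+1=\lfloor M/S\rfloor$, and using $|\omega_j-\omega_k|_\T\ge 1/N$, the proof is reduced to the purely combinatorial inequality $\sum_{j=1}^S\prod_{k\ne j}|\omega_j-\omega_k|_\T^{-2}\le N^{2(S-1)}\sum_{j=1}^S\prod_{k\ne j}(j-k)^{-2}$, i.e.\ that this weighted complexity sum over $S$-subsets of the grid is maximized by $S$ consecutive grid points; plugging this in produces precisely the constant $C(M,S)$.

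The main obstacle is making the $L^2$ estimate simultaneously sharp and uniform. Sharpness is essential because the exponent $S-1$ on $N/M$ and the exact $\lfloor M/S\rfloor$ dependence must survive, so the near/far decomposition and the sinc-integral bookkeeping cannot afford to be wasteful; in particular the global use of $|\sin\pi\theta|\le\pi|\theta|_\T$ loses a factor $(\pi/2)^{2(S-1)}$ that is fatal for large $S$, so the near nodes must instead be treated with the tight linearization $\sin\pi\theta\approx\pi\theta$ and only the genuinely far nodes handled crudely. Uniformity is delicate because an arbitrary $S$-subset carries \emph{no} separation hypothesis between its clusters: when several tight clumps of nodes sit just beyond distance $1/M$ from $\omega_j$, neither the Fej\'er decay nor any single Lagrange factor alone controls $|H_j|$ there, and one must expand $\prod_{k\ne j}\big(|\omega-\omega_j|_\T+|\omega_j-\omega_k|_\T\big)^2$ and bound each symmetric-function term against the leading one using $N\ge\pi MS$. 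Finally, the extremal-configuration inequality $\sum_j\prod_{k\ne j}|\omega_j-\omega_k|_\T^{-2}\le N^{2(S-1)}\sum_j\prod_{k\ne j}(j-k)^{-2}$ is proved by a compression/rearrangement argument — moving a non-consecutive node inward never decreases any of the relevant reciprocal-distance products — and is the other ingredient requiring care, since equality must be attained exactly at $\Omega_*$ for the constant to come out right.
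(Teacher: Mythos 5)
Your blueprint is genuinely different from the paper's: you take a high power $F_P^R$ of the Fej\'er kernel times \emph{frequency-one} Lagrange factors $\prod_{k\ne j}\frac{1-e^{2\pi i(\omega-\omega_k)}}{1-e^{2\pi i(\omega_j-\omega_k)}}$ and try to win the $L^2$ bound from an oscillatory integral, whereas the paper uses a \emph{single} Fej\'er kernel times \emph{dilated} Lagrange factors $\frac{e^{2\pi iQ_{j,k}\omega}-e^{2\pi iQ_{j,k}\omega_k}}{e^{2\pi iQ_{j,k}\omega_j}-e^{2\pi iQ_{j,k}\omega_k}}$ with node-pair--dependent frequencies ($Q_{j,k}=\lfloor M/S\rfloor$ for nearby pairs, $Q_{j,k}=\lfloor 1/(2|\omega_j-\omega_k|_\T)\rfloor$ for far pairs), so that each factor already has sup-norm about $\frac{S}{\pi M|\omega_j-\omega_k|_\T}$ (resp.\ $O(1)$) and the whole estimate reduces to $\|H_j\|_{L^2}\le\|F_P\|_{L^2}\|G_j\|_{L^\infty}$ with no integral analysis. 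This dilation is the load-bearing trick, and it is exactly what your construction is missing: your per-factor sup-norm is only $\frac{1}{2|\omega_j-\omega_k|_\T}$, a deficit of order $(M/S)^{S-1}$ that you must recover from the decay of $F_P^{2R}$ in the $L^2$ integral. A back-of-envelope run of your far-region estimate (using $F_P(\theta)\le(2(P+1)|\theta|_\T)^{-2}$ and $2|\sin\pi(\omega-\omega_k)|\le 2\pi(|\theta|+S/N)$) recovers the correct \emph{exponent} $(N/M)^{S-1}$ but loses a multiplicative factor that grows geometrically in $S$ relative to the stated $C(M,S)$ (roughly a fixed constant $>1$ per node, i.e.\ $c^{S-1}$ overall), and the remedy you propose --- expanding $\prod_k(|\omega-\omega_j|_\T+|\omega_j-\omega_k|_\T)^2$ into symmetric functions --- is a plan, not an estimate; since the lemma's claim is the \emph{exact} constant $C(M,S)$ (which Theorem \ref{thm:theta} then quotes verbatim), this is a genuine gap, not bookkeeping.

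The second gap is the extremal inequality $\sum_j\prod_{k\ne j}|\omega_j-\omega_k|_\T^{-2}\le N^{2S-2}\sum_j\prod_{k\ne j}(j-k)^{-2}$, which you dispatch with ``moving a non-consecutive node inward never decreases any of the relevant reciprocal-distance products.'' That monotonicity claim is false as stated on the torus: for $\Omega=\{0,\,0.1,\,0.6\}$ one has $|\omega_1-\omega_3|_\T=0.4$ realized by the complementary arc, and sliding $\omega_3$ from $0.6$ toward the cluster \emph{increases} $|\omega_1-\omega_3|_\T$ to $0.5$ before it decreases, so the corresponding reciprocal product drops. More fundamentally, moving one node changes all $S$ summands at once and they cannot be simultaneously maximized (the per-$j$ maximum of $\prod_{k\ne j}|\omega_j-\omega_k|_\T^{-2}$ is attained by clustering the other nodes \emph{around} $\omega_j$, which is a different configuration for each $j$), so no term-by-term rearrangement can close the argument. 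The paper is explicit that this is the hard point and devotes all of Lemma \ref{lemma:E} to it, extending the sum to a function of $(S-1)^2$ \emph{independent} distance variables on $[1/N,1/2]^{D}\cap H$, showing the gradient is strictly negative on the relevant face, and locating the maximizer on the hyperplanes $\sum_{j\ne k}w_{j,k}=c(S)/N$ that characterize consecutive configurations; it also maximizes the more complicated two-scale quantity $E(\Omega)$ of \eqref{eq:EOmega} (with the $\tau_j,\gamma_j$-dependent prefactors), not the bare reciprocal-product sum your reduction produces. Until you supply a correct proof of the extremal statement and a constant-sharp version of the sinc integral, the lemma is not established by your route.
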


\begin{proof}[Proof of Theorem \ref{thm:theta}]
	By definition of the lower restricted isometry constant, there exists a set $\Omega$ of cardinality $S$ and supported on the grid with spacing $1/N$ such that
	\[
	\Theta(M,N,S)=\sigma_{\min}(\Phi(\Omega,M)). 
	\]
	Let $\{H_j(\Omega)\}_{j=1}^S$ be the family of polynomials given in Lemma \ref{lemma:poly1}. Let $u=u(\Omega)\in\C^S$ be a unit norm vector such that
	\[
	\sigma_{\min}(\Phi(\Omega,M))=\|\Phi(\Omega,M)u\|_2.
	\]
	We define the polynomial,
	\[
	H(\omega)
	:=H(\omega,u,\Omega)
	:=\sum_{j=1}^S u_j H_j(\omega,\Omega). 
	\]
	Using the interpolation property of $\{H_j(\cdot,\Omega)\}_{j=1}^S$ guaranteed by Lemma \ref{lemma:poly1}, we see that $H\in\P(\Omega,M,u)$. By exact duality, Proposition \ref{prop:duality}, we have
	\[
	\sigma_{\min}(\Phi(\Omega,M))
	=\max_{f\in \mathcal{P}(\Omega,M,u(\Omega))} \|f\|_{L^2(\T)}^{-1}
	\geq \|H(\cdot,\Omega)\|_{L^2(\T)}^{-1}. 
	\]
    Using Cauchy-Schwartz and that $u$ is a unit norm vector, we have
	\[
	\|H\|_{L^2(\T)}
	\leq \(\sum_{j=1}^S \|H_j\|_{L^2(\T)}^2\)^{1/2}.
	\]
	Combining the previous inequalities and using the upper bound given in Lemma \ref{lemma:poly1} completes the proof of the theorem.

\end{proof}

% % % % % % % % % % % % % % % % % % % %
\subsection{Proof of Theorem \ref{thm:minmax}}

\label{proof:minmax}

\begin{proof}
	The upper bound for the min-max error is a direct consequence of Proposition \ref{prop:demanet} and Theorem \ref{thm:theta}. To obtain a lower bound for the min-max error, we first apply Proposition \ref{prop:upper} to the case that $\Omega$ consists of $2S$ consecutive points spaced by $1/N$. We ready check that the size assumptions on $M$ and $N$ imply that the conditions of Proposition \ref{prop:upper} are satisfied, and thus,
	\[
	\Theta(M,N,2S)
	\leq 2{4S-2 \choose 2S-1}^{-1/2} \sqrt{M+1}\ \(\frac{2\pi M}{N}\)^{2S-1}.
	\]
	Combining this with Proposition \ref{prop:demanet} establishes a lower bound for the min-max error.
\end{proof}

% % % % % % % % % % % % % % % % % % % %
\subsection{Proof of Corollary \ref{comusicgaussian}}
\label{secproofthmmusic}

\begin{proof}
According to Theorem \ref{thmmusic}, a sufficient condition for $\|\hat\calR - \calR\|_\infty\le \epsilon$ is
 \begin{equation} \|\calH(\eta)\|_2 \le {M x_{\min}} \(\sum_{a=1}^A  (c_a  \alpha^{-\lambda_a+1})^2  \)^{-1}\epsilon/4.
	\label{thmmusicpeq1}
	\end{equation}
	Lemma \ref{lemmanoise} implies that \eqref{thmmusicpeq1} holds with probability no less than $1-(M+2)^{-(\nu-1)}$ as long as $t= {M x_{\min}} \(\sum_{a=1}^A  c^2_a  \alpha^{-2(\lambda_a-1)}  \)^{-1}\epsilon/4$ and $(M+2)e^{-\frac{t^2}{\sigma^2(M+2)}} < (M+2)^{1-\nu}$, which is guaranteed by \eqref{thmmusiceq1}.
\end{proof}

\section{Proof of propositions}
\label{sec:props}

% % % % % % % % % % % % % % % % % % % %
\subsection{Proof of Proposition \ref{prop:upper}}

\label{proof:upper}

\begin{proof}
	The argument relies on the variational form for the minimum singular value,
	\[
	\sigma_{\min}(\Phi)
	=\min_{u\in\C^S, u\not=0} \frac{\|\Phi u\|_2}{\|u\|_2}.
	\]
	To obtain an upper bound, it suffices to consider a specific $u$, and our choice is inspired by Donoho \cite{donoho1992superresolution}. Without loss of generality, we assume that $\omega=0$. We re-index the set $\Omega=\{\omega_j\}_{j=1}^S$ so that 
	\[
	\omega_j=\frac{(j-1)\alpha}{M}
	\quad\text{for}\quad 1\leq j\leq\lambda. 
	\]
	We consider the vector $u\in\C^S$ defined by the formula
	\[
	u_j
	:=
	(-1)^{j-1} {\lambda-1\choose j-1} 
	\quad\text{for}\quad 1\leq j\leq \lambda,
	\]
	and $u_j=0$ otherwise. Note that 
	\[
	\|u\|_2={{2\lambda-2}\choose {\lambda-1}}^{1/2}.
	\]
	By the variational form for the minimum singular value, we have
	\begin{equation}
	\label{eq:upper1}
	\sigma_{\min}(\Phi)
	\leq \frac{\|\Phi u\|_2}{\|u\|_2}
	={{2\lambda-2}\choose {\lambda-1}}^{-1/2}\|\Phi u\|_2. 
	\end{equation}
	To estimate $\|\Phi u\|_2$, we identify $u$ with the discrete measure 
	\[
	\mu := \sum_{j=1}^{\lambda} u_j\delta_{(j-1)\alpha/M}.
	\]
	We also define a modulated Dirichlet kernel $D_M\in C^\infty(\T)$ by the formula, $D_M(\omega):=\sum_{m=0}^M e^{2\pi im\omega}$. We readily check that
	\begin{equation}
	\label{eq:upper2}
	\|\Phi u\|_2
	=\sum_{m=0}^M |(\Phi u)_m|^2
	=\(\sum_{m=0}^M |\hat\mu(m)|^2\)^{1/2}
	=\|\mu*D_M\|_{L^2(\T)}. 
	\end{equation}
	We see that all $\omega\in\T$,
	\begin{equation}
	\label{eq:upper2.1}
	(\mu*D_M)(\omega)
	=\sum_{j=0}^{\lambda-1} (-1)^j {\lambda-1\choose j} D_M\(\omega-\frac{j\alpha}{M}\). 
	\end{equation}
	The right hand side is the $(\lambda-1)$-th order backwards finite difference of $D_M$. It is well-known that for each $\omega\in\T$, we have
	\begin{equation}
	\label{eq:upper3}
	(\mu*D_M)(\omega)
	=\(\frac{\alpha}{M}\)^{\lambda-1} D_M^{(\lambda-1)}(\omega)+R_{\lambda-1}(\omega),
	\end{equation}
	where $D_M^{(\lambda-1)}$ denotes the $(\lambda-1)$-th derivative of $D_M$ and the remainder term $R_{\lambda-1}$ in magnitude is point-wise $O((\alpha/M)^\lambda)$ as $\alpha\to 0$. In order to exactly determine how small we require $\alpha$ to be, we calculate the remainder term explicitly. By a Taylor expansion of $D_M$, for each $\omega \in\T$ and $0\leq j\leq \lambda-1$, there exists $\omega_j\in (\omega-j\alpha/M,\omega)$ such that 
	\[
	D_M(\omega-j\alpha)
	=\sum_{k=0}^{\lambda-1} D_M^{(k)}(\omega) \(\frac{\alpha}{M}\)^k \frac{(-1)^kj^k}{k!} + D_M^{(\lambda)}(\omega_j) \(\frac{\alpha}{M}\)^\lambda \frac{(-1)^\lambda j^\lambda}{\lambda!}. 
	\]
	Using this formula in equations \eqref{eq:upper2} and \eqref{eq:upper2.1}, we see that 
	\[
	R_{\lambda-1}(\omega)
	=\sum_{j=0}^{\lambda-1} (-1)^{j+\lambda} {\lambda-1 \choose j} D_M^{(\lambda)}(\omega_j) \(\frac{\alpha}{M}\)^\lambda \frac{j^\lambda}{\lambda!}. 
	\] 
	We are ready to bound equation \eqref{eq:upper3} in the $L^2(\T)$ norm. By the Bernstein inequality for trigonometric polynomials, we have
	\[
	\|D_M^{(\lambda-1)}\|_{L^2(\T)}
	\leq (2\pi M)^{\lambda-1} \|D_M\|_{L^2(\T)}
	= \sqrt{M+1}\ (2\pi M)^{\lambda-1}. 
	\]
	By the same argument, we have
	\begin{align*}
	\|R_{\lambda-1}\|_{L^2(\T)}
	&\leq \sum_{j=0}^{\lambda-1} {\lambda-1 \choose j} \(\frac{\alpha}{M}\)^\lambda \frac{j^\lambda}{\lambda!} \|D_M^{(\lambda)}\|_{L^\infty(\T)} \\
	&\leq C(\lambda) \alpha (2\pi \alpha)^{\lambda-1} \|D_M\|_{L^\infty(\T)} \\
	&\leq C(\lambda)\alpha (2\pi \alpha)^{\lambda-1}(M+1). 
	\end{align*}
	Using these upper bounds together with \eqref{eq:upper3}, we have
	\[
	\|\mu*D_M\|_{L^2(\T)}
	\leq \sqrt{M+1} \ (2\pi \alpha)^{\lambda-1} \(1+ C(\lambda) \alpha \sqrt{M+1}\). 
	\]
	This inequality and the assumed upper bound for $\alpha$ \eqref{eq:alpha}, we see that
	\[
	\|\mu*D_M\|_{L^2(\T)}
	\leq 2\sqrt{M+1}\ (2\pi \alpha)^{\lambda-1}.
	\]
	Combining this inequality with \eqref{eq:upper1} and \eqref{eq:upper2} completes the the proof. 
\end{proof}

% % % % % % % % % % % % % % % % % % % %
\subsection{Proof of Proposition \ref{prop:duality}}

\label{proof:duality}

\begin{proof}
	We first prove that $\P(\Omega,M,w)$ is non-empty for any $w\in\C^S$ and $S\leq M-1$. For each $1\leq k\leq S$, we consider the Lagrange polynomials,
	\[
	L_k(\omega):= \prod_{j\not=k} \frac{e^{2\pi i \omega}-e^{2\pi i\omega_j}}{e^{2\pi i\omega_k}-e^{2\pi i\omega_j}}. 
	\]
	We have $L_k(\omega_j)=\delta_{j,k}$ by definition, and after expanding $L_k$ as a summation, we see that $L_k\in \P(S-1)\subset \P(M)$. This implies $\sum_{k=1}^S w_k L_k \in \P(\Omega,S-1,w)$, which proves the first part of the proposition. 
	
	Let $v$ be any unit norm vector such that $\|\Phi v\|_2=\sigma_{\min}(\Phi)$. The set of all trigonometric polynomials $f\in\P(M)$ can be written in the form 
	\[
	f(\omega)
	=\sum_{m=0}^{M-1} \hat f(m) e^{2\pi i m\omega}.
	\]
	Then $f\in \P(M,\Omega,v)$ if and only if $f\in\P(M)$ and it Fourier coefficients satisfy the under-determined system of equations,
	\[
	v_j
	=\sum_{m=0}^{M-1} \hat f(m) e^{2\pi im\omega_j}
	\quad\text{for}\quad
	1\leq j\leq S.
	\]
	Since $\P(\Omega,M,v)$ is non-empty, pick any $f\in\P(\Omega,M,v)$. Since $\|f\|_{L^2(\T)}=\|\hat f\|_{\ell^2(\Z)}$, the functions $f\in\P(M)$ that satisfy this system of equations and have minimal $L^2(\T)$ norm are the ones with Fourier coefficients given by the Moore-Penrose pseudo-inverse solution to the above system of equations. Namely,
	\[
	\min_{f\in \P(M,\Omega,v)} \|f\|_{L^2(\T)}
	=\min_{\Phi^*u=v} \|u\|_2
	=\|(\Phi^*)^\dagger v\|_2
	=\frac{1}{\sigma_{\min}(\Phi)}.
	\]
	Rearranging this inequality completes the proof of the proposition.
\end{proof}

% % % % % % % % % % % % % % % % % % % %
\subsection{Proof of Proposition \ref{prop:duality2}}

\label{proof:duality2}

\begin{proof}
	Define the measure $\mu=\sum_{j=1}^S v_j\delta_{\omega_j}$, and note that $\hat\mu(m)=(\Phi v)_m$. We have
	\[
	\Big|\int_{\T} \overline f\ d\mu\Big|
	=\Big|\sum_{j=1}^S \overline{f(\omega_j)} v_j \Big|
	=\Big|\|v\|_2^2+ \sum_{j=1}^S \overline{v_j} \epsilon_j\Big|
	\geq \|v\|_2^2-\|v\|_2\|\epsilon\|_2
	=1-\|\epsilon\|_2. 
	\]
	On the other hand, using that $f\in\P(M)$, Cauchy-Schwartz, and Parseval,
	\[
	\Big|\int_{\T} \overline f\ d\mu\Big|
	=\Big|\sum_{m=1}^{M-1} \overline{\hat f(m)}\hat\mu(m)\Big| 
	\leq \|\hat f\|_{\ell^2(\Z)}\|\Phi v\|_2
	=\|f\|_{L^2(\T)} \|\Phi v\|_2. 
	\]
	Combining the previous two inequalities completes the proof.
\end{proof}

% % % % % % % % % % % % % % % % % % % %
\subsection{Proof of Proposition \ref{prop:demanet}}

\label{proof:demanet}

For this proof, we make the following changes to the notation. We can identify every discrete measure $\mu$ whose support is contained in the grid with spacing $1/N$ and consists of $S$ points with a $S$-sparse vector $x\in\C^N$. Under this identification, the Fourier transform of $\mu$ is identical to the discrete Fourier transform of $x$. Let $\C_S^N$ be the set of $S$-sparse vectors in $\C^N$, and $\calF$ be the first $M+1$ rows of the $N\times N$ discrete Fourier transform matrix. With this notation at hand, the min-max error is
\[
\calE(M,N,S,\delta)
=\inf_{\varphi\in\calA}\ \sup_{y(x,\eta)\in\calY}  \|\varphi_y-x\|_2.
\] 

\begin{proof}
	We prove the upper bound first. Let $\varphi$ be the function that maps each $y\in\calY$ to the sparsest vector $\varphi_y\in\C^N$ such that $\|\calF \varphi_y-y\|_2\leq \delta$. If there is not a unique choice of vector $\varphi_y$, just choose any one of them arbitrarily. Note that $\varphi_y$ exists because $x$ also satisfies the constraint that $\|\calF x-y\|_2\leq \delta$, and the choice of $\varphi_y$ does not explicitly depend on $x$ and $\eta$. Note that $\|\tilde x\|_0\leq \|\varphi_y\|_0\leq S$ by definition of $\varphi$. Then we have
	\[
	\calE(M,N,S,\delta)
	\leq \sup_{y(x,\eta)\in\cal Y} \|\varphi_y-x\|_2. 
	\]
	For any $x\in\C^N_S$ and $\eta$ with $\|\eta\|_2\leq \delta$, we have $\varphi_y-x\in \C^N_{2S}$ and
	\[
	\Theta(M,N,2S)
	\leq \frac{\|\calF(\varphi_y-x)\|_2}{\|\varphi_y-x\|_2}
	\leq \frac{\|\calF \varphi_y-y\|_2+\|\calF x-y\|_2}{\|\varphi_y-x\|_2}
	\leq \frac{2\delta}{\|\varphi_y-x\|_2}. 
	\]
	Combining the previous two inequalities and rearranging completes the proof of the upper bound for the min-max error. 
	
	We focus our attention on the lower bound for the min-max error. By definition of the smallest singular value, there exists $v\in\C^N_{2S}$ of unit norm such that 
	\[
	\Theta(M,N,2S)
	={\|\calF v\|_2}.
	\]
	Pick any vectors $v_1,v_2\in \C^N_S$ such that 
	\[
	\frac{\delta}{\Theta(M,N,2S)}\ v=v_1-v_2.
	\]
	Suppose we are given the data
	\[
	y=\calF v_1=\calF v_2+\calF(v_1-v_2).
	\]
	Let $\eta := \calF(v_1-v_2)\in\C^{M+1}$. The previous three equations imply
	\[
	\|\eta\|_2
	=\|\calF(v_1-v_2)\|_2
	=\frac{\delta}{\Theta(M,N,2S)} \|\calF v\|_2
	\leq\delta.
	\]
	This proves that $y$ is both the noiseless first $M$ Fourier coefficients of $v_1$ as well as the noisy first $M$ Fourier coefficients of $v_2$ with noise $\calF(v_1-v_2)$ with noise $\eta$. Thus, we have $y\in\calY$ with $y=y(v_1,0)$ and $y=y(v_2,\eta)$. Consequently, we have
	\[
	\calE(M,N,S,\delta)
	\geq \inf_{\varphi\in\calA}\max_{k=1,2} \|f(y)-v_k\|. 
	\]
	Using that $v$ has unit norm, for any $\varphi\in\calA$, we have
	\[
	\frac{\delta}{\Theta(M,N,2S)}
	=\|v_1-v_2\|_2
	\leq \|\varphi_y-v_1\|_2+\|\varphi_y-v_2\|_2
	\leq 2 \max_{k=1,2} \|\varphi_y-v_k\|_2.
	\]
	This holds for all $f\in\calA$, so combining the previous two inequalities completes the proof of the lower bound for the min-max error. 
\end{proof}

\section{Proof of lemmas}
\label{sec:lemmas}

\subsection{Proof of Lemma \ref{lemma:localizedpoly}}

\label{proof:localizedpoly}

\begin{proof}
	Fix a $\omega_j\in\Omega$, and so $\omega_j\in\Lambda_a$ for some $1\leq a\leq A$. We explicitly construct each $I_j$, and it is more convenient to break the construction into two cases. 
	
	The simpler case is when $\lambda_a=1$. Note that $B_a=\rho_j=1$. Then we simply set 
	\[
	I_j(\omega) := e^{2\pi i M(\omega-\omega_j)} F_M(\omega-\omega_j),
	\]
	where we recall that $F_M$ is the Fej\'er kernel. We trivially have $I_j(\omega_k)=\delta_{j,k}$ for all $\omega_k\in \Lambda_a$ and $I_j\in\P(M)$. Using the point-wise bound for the Fej\'er kernel (\ref{eq:fejer2}) and the cluster separation condition (\ref{eq:sep1}), we have
	\[
	|I_j(\omega_k)|
	\leq \frac{1}{4(M+1)^2|\omega_k-\omega_j|_\T^2}
	\leq \frac{1}{400S}.
	\]
	Using the $L^2$ norm bound for the Fej\'er kernel (\ref{eq:fejer1}), we see that
	\[
	\|I_j\|_{L^2(\T)}
	\leq \frac{1}{\sqrt{M+1}}.
	\]
	This completes the proof of the lemma when $\lambda_a=1$. 
	
	From here onwards, we assume that $\lambda\geq 2$. To define $I_j$, we must construct two axillary functions $G_j$ and $H_j$. We define the Lagrange-like polynomial,
	\[
	G_j(\omega)
	:=\prod_{\omega_k\in \Lambda_a\setminus\{\omega_j\}}\frac{e^{2\pi i Q_jt}-e^{2\pi i Q_j\omega_k}}{e^{2\pi i Q_j\omega_j}-e^{2\pi i Q_j\omega_k}},
	\quad\text{where}\quad
	Q_j:=\Big\lfloor \frac{M}{\lambda_a}\Big\rfloor.
	\]
	Note that $Q_j$ is positive because $M/\lambda_a\geq M/S\geq 1$. This function is well-defined because its denominator is always non-zero: this follows from the observation that the inequalities, $Q_j\leq M/2$ and $|\omega_j-\omega_k|_\T<1/M$, imply
	\[
	|Q_j\omega_j-Q_j\omega_k|_\T
	=Q_j|\omega_j-\omega_k|_\T.
	\]
	By construction, the function $G_j$ satisfies the important property that 
	\begin{equation}
	\label{eq:Gdelta}
	G_j(\omega_k)=\delta_{j,k},
	\quad\text{for all}\quad 
	\omega_k\in\Lambda_a.
	\end{equation}
	
	We upper bound $G_j$ in the sup-norm. We begin with the estimate 
	\[
	\|G_j\|_{L^\infty(\T)}
	\leq \prod_{\omega_k\in \Lambda_a\setminus\{\omega_j\}} \frac{2}{|1-e^{2\pi i Q_j(\omega_j-\omega_k)}|}.
	\]
	Recall the trigonometric inequality,
	\[
	2-2\cos(2\pi t)\geq (2\pi t)^2\(1-\frac{\pi^2 t^2}{3}\)
	\quad\text{for } t\in[-1/2,1/2],
	\]
	which follows from a Taylor expansion of cosine. Using this inequality, we deduce the bound,
	\[
	\|G_j\|_{L^\infty(\T)}
	\leq \prod_{\omega_k\in \Lambda_a\setminus\{\omega_j\}} \frac{1}{\pi Q_j|\omega_j-\omega_k|_\T} \(1-\frac{\pi^2 Q_j^2|\omega_j-\omega_k|_\T^2}{3}\)^{-1/2}.
	\]
	Since $Q_j= \lfloor M/\lambda_a \rfloor$ and $|\omega_j-\omega_k|_\T< 1/M$, we have 
	\begin{gather}
	\label{eq:Gpoint}
	\begin{split}
	\|G_j\|_{L^\infty(\T)}
	&\leq \(1-\frac{\pi^2}{3\lambda_a^2}\)^{-(\lambda_a-1)/2} \prod_{\omega_k\in \Lambda_a\setminus\{\omega_j\}} \frac{1}{\pi Q_j|\omega_j-\omega_k|_\T} 
	%&= \(1-\frac{\pi^2}{3\lambda_a^2}\)^{-(\lambda_a-1)/2} \(\frac{M}{\lambda_a}\)^{\lambda_a-1} \Big\lfloor \frac{M}{\lambda_a}\Big\rfloor^{-(\lambda_a-1)} \lambda_a^{\lambda_a-1} \rho_j \\
	= \tilde B_a \lambda_a^{\lambda_a-1}\rho_j.	
	\end{split}
	\end{gather}
	
	We next define the function $H_j$ by the formula,
	\[
	H_j(\omega)
	:=\big(e^{2\pi iP_j(\omega-\omega_j)} F_{P_j}(\omega-\omega_j)\big)^{\lambda_a},
	\quad\text{where}\quad
	P_j := \Big\lfloor \frac{M}{2\lambda_a^2} \Big\rfloor. 
	\]
	Recall that $F_{P_j}$ denotes the Fej\'er kernel and note that $P_j$ is positive because $M/(2\lambda_a^2)\geq M/(2S^2)\geq 1$. We need both a decay and norm bound for $H_j$. To obtain a norm bound, we use H\"older's inequality, that the Fej\'er kernel is point-wise upper bounded by 1, the norm bound for the Fej\'er kernel \eqref{eq:fejer1}, and the inequality $P_j+1\geq M/(2\lambda_a^2)$, to obtain, 
	\begin{equation}
	\label{eq:Hnorm}
	\|H_j\|_{L^2(\T)}
	\leq \|F_{P_j}\|_{L^\infty(\T)}^{\lambda_a-1} \|F_{P_j}\|_{L^2(\T)}
	\leq \frac{1}{\sqrt{P_j+1}}
	\leq \(\frac{2\lambda_a^2}{M}\)^{1/2}.	
	\end{equation}
	To obtain a decay bound for $H_j$, we use the point-wise bound for the Fej\'er kernel \eqref{eq:fejer2} to deduce, 
	\[
	|H_j(\omega)|
	\leq \(\frac{1}{2(P_j+1)|\omega-\omega_j|_\T}\)^{2\lambda_a}
	\leq \(\frac{\lambda_a^2}{M|\omega-\omega_j|_\T}\)^{2\lambda_a},
	\quad\text{for all}\quad
	\omega\in\T.
	\]
	We would like to specialize this to the case that $\omega=\omega_j$ for $\omega_j\not\in\Lambda_a$. We need to make the following observations first. Observe that $1\leq \lfloor t \rfloor / t\leq 2$ for any $t\geq 1$. Using this inequality and that $\lambda_a\geq 2$, we see that
	\[
	(20 \tilde B_a)^{1/(2\lambda_a)}
	\leq 20^{1/(2\lambda_a)} \(1-\frac{\pi^2}{3\lambda_a^2}\)^{-1/4+1/(4\lambda_a)} 2^{(\lambda_a-1)/(2\lambda_a)}
	\leq 10. 
	\]
	This inequality and the cluster separation condition \eqref{eq:sep1} imply
	\[
	|\omega_k-\omega_j|_\T
	\geq \frac{10\lambda_a^2 (S\lambda_a^{\lambda_a-1} \rho_j)^{1/(2\lambda_a)}}{M}
	\geq \frac{\lambda_a^2(20 \tilde B_a S\lambda_a^{\lambda_a-1}\rho_j)^{1/(2\lambda_a)}}{M}
	\quad\text{for all}\quad
	\omega_k\not\in\Lambda_a. 
	\]
	Combining this with the previous upper bound on $H_j$ shows that 
	\begin{equation}
	\label{eq:Homega_k}
	|H_j(\omega_k)|
	\leq \frac{1}{20S \tilde B_a \lambda_a^{\lambda_a-1}\rho_j}
	\quad\text{for all}\quad
	\omega_k\not\in\Lambda_a.
	\end{equation}	
	
	We define the function $I_j$ by the formula 
	\[
	I_j(\omega)
	:=G_j(\omega)H_j(\omega). 
	\]
	It follows immediately from the property \eqref{eq:Gdelta} that 
	\[
	I_j(\omega_k)=\delta_{j,k}
	\quad\text{for all}\quad 
	\omega_k\in\Lambda_a.
	\]
	The negative frequencies of $I_j$ are zero, while its largest non-negative frequency is bounded above by
	\begin{align*}
	2P_j\lambda_a + (\lambda_a-1) Q_j
	&\leq \frac{M}{\lambda_a}+ (\lambda_a-1)\(\frac{M}{\lambda_a}\)
	\leq M,
	\end{align*}
	which proves that $I_j\in\P(M)$. We use H\"older's inequality, the sup-norm bound for $G_j$ \eqref{eq:Gpoint}, and the norm bound for $H_j$ \eqref{eq:Hnorm} to see that 
	\[
	\|I_j\|_{L^2(\T)}
	\leq \|G_j\|_{L^\infty(\T)}\|H_j\|_{L^2(\T)}
	\leq \tilde B_a \lambda_a^{\lambda_a} \rho_j \(\frac{2}{M}\)^{1/2}. 
	\]
	Finally, we use the sup-norm bound for $G_j$ \eqref{eq:Gpoint} and the bound for $|H_j(\omega_k)|$ \eqref{eq:Homega_k} to see that
	\[
	|I_j(\omega_k)|
	\leq \|G_j\|_{L^\infty(\T)}|H_j(\omega_k)|
	\leq \frac{1}{20S}
	\quad\text{for all}\quad
	\omega_k\not\in\Lambda_a. 
	\]
	
\end{proof}

\subsection{Proof of Lemma \ref{lemma:poly1}}

\label{proof:poly1}

\begin{proof}
	Fix integers $M,N,S$ satisfying the assumptions of Lemma \ref{lemma:poly1}. Fix a support set $\Omega$, contained in the grid with spacing $1/N$ and of cardinality $S$. We do a two-scale analysis. For each  $\omega_j\in\Omega$, we define the discrete sets and integers,
	\begin{align*}
	\Gamma_j &:= \Gamma_j(\Omega) = \Big\{\omega_k\in\Omega\colon |\omega_k-\omega_j|_\T<\frac{1}{M} \Big\}
	\quad\text{and}\quad
	\gamma_j:=|\Gamma_j|, \\
	\calT_j 
	&:=\calT_j(\Omega)
	=\Big\{\omega_k\in\Omega\colon |\omega_k-\omega_j|_\T<\frac{S}{2M} \Big\}
	\quad\text{and}\quad
	\tau_j:=|\calT_j|.
	\end{align*}
	
	To construct $H_j(\cdot,\Omega)$, we need to define two axillary functions, similar to the construction done in Lemma \ref{lemma:localizedpoly}. We define the integers
	\[
	Q_{j,k}
	:=Q_{j,k}(\Omega)
	:=\begin{cases}
	\ \lfloor M/S \rfloor &\text{if } \omega_k\in\calT_j\setminus\{\omega_j\}, \medskip \\
	\  \lfloor 1/(2|\omega_j-\omega_k|_\T) \rfloor  &\text{if } \omega_k\in\Omega\setminus \calT_j. 
	\end{cases}
	\]
	We readily verify that we have the inequalities $1\leq Q_{j,k}\leq M/S$ and
	\begin{equation}
	\label{eq:Q}
	|Q_{j,k}\omega_j-Q_{j,k}\omega_k|_\T
	= Q_{j,k}|\omega_j-\omega_k|_\T
	\quad\text{for all}\quad
	\omega_j,\omega_k\in\Omega. 
	\end{equation}
	This observation implies that the Lagrange-like polynomial, 
	\[
	G_j(\omega)
	:=G_j(\omega,\Omega)
	:=\prod_{\omega_k\in \Omega\setminus\{\omega_j\}} \frac{e^{2\pi i Q_{j,k}\omega}-e^{2\pi i Q_{j,k}\omega_k}}{e^{2\pi i Q_{j,k}\omega_j}-e^{2\pi i Q_{j,k}\omega_k}}, 
	\]
	has non-zero denominators, and is thus well-defined. By construction, we have the interpolation identity,
	\[
	G_j(\omega_k)=\delta_{j,k}
	\quad\text{for all}\quad
	\omega_j,\omega_k\in\Omega.
	\]	
	
	\item 
	We bound $G_j$ in the sup-norm. We begin with the inequality, 
	\begin{equation}
	\label{eq:Gsup1}
	\|G_j\|_{L^\infty(\T)}
	\leq \prod_{\omega_k\in \Omega\setminus\{\omega_j\}} \frac{2}{|1-e^{2\pi iQ_{j,k}(\omega_j-\omega_k)}|}.
	\end{equation}
	Recall that we have the partition,
	\[
	\Omega\setminus\{\omega_j\}
	=(\Gamma_j\setminus \{\omega_j\})\cup (\calT_j\setminus\Gamma_j)\cup (\Omega\setminus\calT_j).
	\]
	Then we break \eqref{eq:Gsup1} into three products according to this partition, and estimate each term at a time. 
	\begin{enumerate}[(a)]
		\item
		We first consider the product over $\omega_k\in\Gamma_j\setminus\{\omega_j\}$. If $\Gamma_j\setminus \{\omega_j\}=\emptyset$, there is nothing to do. Hence, assume that $\gamma_j\geq 2$. By a Taylor expansion for cosine, we obtain the inequality,
		\[
		2-2\cos(2\pi t)\geq (2\pi t)^2\(1- \frac{\pi^2 t^2}{3}\)
		\quad\text{for } t\in[-1/2,1/2].
		\]
		Using this lower bound, the observation that   $Q_{j,k}=\lfloor M/S \rfloor \leq M/S\leq M/\gamma_j$ when $\omega_k\in\Gamma_j\setminus\{\omega_j\}$, and the assumption that $|\omega_j-\omega_k|<1/M$ for all $\omega_k\in\Gamma_j$, we obtain 
		\begin{align*}
		&\prod_{\omega_k\in \Gamma_j\setminus\{\omega_j\}} \frac{2}{|1-e^{2\pi iQ_{j,k}(\omega_j-\omega_k)}|} \\
		&\quad\quad\leq \prod_{\omega_k\in \Gamma_j\setminus\{\omega_j\}} \(1-\frac{\pi^2Q_{j,k}^2|\omega_j-\omega_k|_\T^2}{3}\)^{-1/2} \prod_{\omega_k\in \Gamma_j\setminus\{\omega_j\}} \frac{1}{\pi Q_{j,k}|\omega_j-\omega_k|_\T} \\
		&\quad\quad\leq  \(1-\frac{\pi^2}{3\gamma_j^2}\)^{-(\gamma_j-1)/2} \Big\lfloor \frac{M}{S}\Big\rfloor^{-(\gamma_j-1)} \prod_{\omega_k\in \Gamma_j\setminus\{\omega_j\}} \frac{1}{\pi |\omega_j-\omega_k|_\T} \\
		&\quad\quad\leq  \(\frac{12}{12-\pi^2}\)^{1/2}\Big\lfloor \frac{M}{S}\Big\rfloor^{-(\gamma_j-1)} \prod_{\omega_k\in \Gamma_j\setminus\{\omega_j\}} \frac{1}{\pi |\omega_j-\omega_k|_\T} . 
		\end{align*}
		For the last inequality, we made the observation that $(1-\pi^2/(3t^2))^{-(t-1)/2}$ is a decreasing function of $t$ on the domain $t\geq 2$. 
		
		\item 
		We consider the product over $\omega_k\in\calT_j\setminus\Gamma_j$, and note that $Q_{j,k}=\lfloor M/S \rfloor$ for this case. Recall the trigonometric inequality 
		\begin{equation}
		\label{eq:trig}
		|e^{2\pi it}-1|\geq 4|t|_\T, \quad\text{for all}\quad t\in\R.
		\end{equation}
		We this trigonometric inequality and \eqref{eq:Q} to see that
		\begin{align*}
		\prod_{\omega_k\in\calT_j\setminus\Gamma_j} \frac{2}{|1-e^{2\pi iQ_{j,k}(\omega_j-\omega_k)}|}
		&\leq \prod_{\omega_k\in\calT_j\setminus\Gamma_j} \frac{1}{2Q_{j,k}|\omega_j-\omega_k|_\T} \\
		&\leq \Big\lfloor \frac{M}{S}\Big\rfloor^{-\tau_j+\gamma_j}\(\frac{1}{2}\)^{\tau_j-\gamma_j} \prod_{\omega_k\in\calT_j\setminus\Gamma_j} \frac{1}{|\omega_j-\omega_k|_\T}. 
		\end{align*}
		
		\item 
		For the product over $\omega_k\in\Omega\setminus \calT_j$, note that $Q_{j,k}|\omega_j-\omega_k|_\T\geq 1/4$. Using this and the  trigonometric inequality \eqref{eq:trig} again, we see that
		\begin{align*}
		\prod_{\omega_k\in \Omega\setminus \calT_j}
		\frac{2}{|1-e^{2\pi iQ_{j,k}(\omega_j-\omega_k)}|}
		&\leq \prod_{\omega_k\in \Omega\setminus \calT_j} \frac{1}{2Q_{j,k}|\omega_j-\omega_k|_\T}
		\leq 2^{S-\tau_j}. 
		\end{align*}
	\end{enumerate}
	Combining the above three inequities with inequality \eqref{eq:Gsup1} and simplifying, we obtain an upper bound
	\begin{equation}
	\label{eq:Gsup2}
	\|G_j\|_{L^\infty(\T)} 
	\leq \(\frac{12}{12-\pi^2}\)^{1/2} \Big\lfloor \frac{M}{S}\Big\rfloor^{-\tau_j+1} \(\frac{1}{\pi}\)^{\gamma_j-1} 2^{S-2\tau_j+\gamma_j} \prod_{\omega_k\in\calT_j\setminus\{\omega_j\}} \frac{1}{|\omega_j-\omega_k|_\T}. 
	\end{equation}	
	
	\item 
	Let $P=\lfloor M/(2S) \rfloor$ and note that $P\geq 1$ because $M\geq 2S$. Let $F_P$ be the Fej\'er kernel, and by the $L^2(\T)$ bound for the Fej\'er kernel and the observation that $P+1\geq M/(2S)$, we have
	\begin{equation}
	\label{eq:Fnorm}
	\|F_P\|_{L^2(\T)}
	\leq \(\frac{1}{P+1}\)^{1/2}
	\leq \(\frac{2S}{M} \)^{1/2}. 
	\end{equation}

	Finally, we define $H_j$ by the formula, 
	\[
	H_j(\omega)
	:=H_j(\omega,\Omega)
	:=e^{2\pi iP(\omega-\omega_j)} F_P(\omega-\omega_j)G_j(\omega). 
	\]
	We still have the interpolation property that
	\[
	H_j(\omega_k)=\delta_{j,k}
	\quad\text{for all}\quad
	\omega_j,\omega_k\in\Omega.
	\]
	By construction, the negative frequencies of $H_j$ are zero while its largest positive frequency is bounded above by
	\begin{align*}
	2P + \sum_{k\not=j} Q_{j,k}
	\leq  \frac{M}{S} + \sum_{k\not=j} \frac{M}{S}
	= \frac{M}{S} +\frac{M(S-1)}{S}
	\leq M.
	\end{align*}
	This proves that $H_j\in\P(M)$. 
	
	It remains to upper bound $\sum_{j=1}^S \|H_j\|_{L^2(\T)}^2$. By H\"older's inequality and the inequalities,  
	\[
	\(\sum_{j=1}^S \|H_j\|_{L^2(\T)}^2\)^{1/2}
	\leq \(\sum_{j=1}^S \|F_P\|_{L^2(\T)}^2\|G_j\|_{L^\infty(\T)}^2\)^{1/2}
	\leq \(\frac{24}{12-\pi^2}\)^{1/2}\(\frac{S}{M} \)^{1/2} E(\Omega)^{1/2}, 
	\]
	where the constant $E(\Omega)$ is defined as
	\begin{equation}
	\label{eq:EOmega}
	E(\Omega)
	:=\sum_{j=1}^S \Big\lfloor \frac{M}{S}\Big\rfloor^{-2\tau_j+2} \(\frac{1}{\pi^2}\)^{\gamma_j-1}  4^{S-2\tau_j+\gamma_j} \prod_{\omega_k\in\calT_j\setminus\{\omega_j\}} \frac{1}{|\omega_j-\omega_k|_\T^2}.
	\end{equation}
	To complete the proof of the lemma, we need to obtain the appropriate bound on $E(\Omega)$ uniformly in $\Omega$. This is handled separately in Lemma \ref{lemma:E}, which is stated below and proved in Appendix \ref{proof:E}. 
\end{proof}

\begin{lemma}
	\label{lemma:E}
	Suppose the assumptions of Lemma \ref{lemma:poly1} hold and let $E(\Omega)$ be the quantity defined in \eqref{eq:EOmega}. Then
	\[
	E(\Omega)
	\leq \Big\lfloor \frac{M}{S}\Big\rfloor^{-2S+2} N^{2S-2} \(\frac{1}{\pi}\)^{2S-2}   \sum_{j=1}^S  \prod_{k\not=j} \frac{1}{(j-k)^2}.
	\]
\end{lemma}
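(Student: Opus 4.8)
The plan is to fix an arbitrary $\Omega$ supported on the grid of width $1/N$ with $|\Omega|=S$, bound each summand of $E(\Omega)$ using the arithmetic rigidity of the grid, and show that the fully clustered configuration is extremal. First I would work locally around each $\omega_j$: since $M\ge 2S$ forces the radius $S/(2M)$ of $\calT_j$ to be at most $1/4$, no wraparound occurs inside a single $\calT_j$, so the points of $\calT_j$ can be indexed consecutively around $\omega_j$. Two facts then drive everything: (i) both $\Gamma_j$ and $\calT_j$ are contiguous blocks of this local order around $\omega_j$; and (ii) consecutive grid points are at least $1/N$ apart, so $|\omega_j-\omega_k|_\T\ge(\text{local index gap})/N$, while also $|\omega_j-\omega_k|_\T\ge 1/M$ for every $\omega_k\notin\Gamma_j$.

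Fix $\omega_j=\omega_{(i)}$ and write $p'_j,q'_j$ for the numbers of points of $\Gamma_j$ to the left and right of $\omega_j$, so that $p'_j\le i-1$, $q'_j\le S-i$, and $p'_j+q'_j=\gamma_j-1$. Splitting $\prod_{\omega_k\in\calT_j\setminus\{\omega_j\}}|\omega_j-\omega_k|_\T^{-2}$ into its part over $\Gamma_j\setminus\{\omega_j\}$, which (ii) bounds by $N^{2(\gamma_j-1)}/(p'_j!\,q'_j!)^2$, and its part over $\calT_j\setminus\Gamma_j$, each of whose $\tau_j-\gamma_j$ factors is at most $M^2$, and then substituting into the $j$-th summand of $E(\Omega)$ and collecting the powers of $\lfloor M/S\rfloor$, $M$, $N$, $\pi$, $4$ — using $M/(2S)\le\lfloor M/S\rfloor\le M/S$ and $N\ge\pi MS$, so that each surplus factor $M/N$ costs at most $1/(\pi S)$ and each surplus $\lfloor M/S\rfloor/N$ at most $1/(\pi S^2)$ — I expect to arrive at a per-term bound of the shape
\[
(\text{$j$-th summand of }E(\Omega))
\ \le\ \lfloor M/S\rfloor^{-2S+2}\,N^{2S-2}\,\pi^{-2S+2}\;
\frac{(4/S^4)^{\,S-\tau_j}}{(4S^2)^{\,\tau_j-\gamma_j}\,(p'_j!\,q'_j!)^2}\,,
\]
possibly with sharper constants. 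For the clustered set $\Omega_*$ one has $\tau_j=\gamma_j=S$ and $(p'_j,q'_j)=(i-1,S-i)$, this holds with equality, and the sum is exactly the asserted bound — so the constant cannot be improved; note also that $\mathrm{diam}(\Omega)<1/M$ forces $\Gamma_j=\Omega$ for every $j$ and hence reduces to this case immediately.

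It then remains to prove, uniformly over all $\binom{N}{S}$ choices of $\Omega$ with $\mathrm{diam}(\Omega)\ge 1/M$ (hence $\gamma_j<S$ for every $j$), the purely combinatorial inequality
\[
\sum_{j=1}^S\frac{(4/S^4)^{\,S-\tau_j}}{(4S^2)^{\,\tau_j-\gamma_j}\,(p'_j!\,q'_j!)^2}
\ \le\ \sum_{i=1}^S\frac{1}{\big((i-1)!\,(S-i)!\big)^2}\,.
\]
I expect this to be the crux. A termwise comparison against $\Omega_*$ is impossible because $(p'_j!\,q'_j!)^2$ can collapse to $1$ when $\Gamma_j$ is a singleton, so individual left‑hand summands may exceed every right‑hand one; the resolution is that the exponent $S-\gamma_j=(S-\tau_j)+(\tau_j-\gamma_j)$, which records how far $\Gamma_j$ is from all of $\Omega$, forces the prefactor to decay at least geometrically (both $4/S^4$ and $1/(4S^2)$ are $\le 1/4$ for $S\ge 2$), and that decay outpaces the growth of $1/(p'_j!\,q'_j!)^2$ relative to its $\gamma_j=S$ value, since the relevant ratio of factorials is a product of at most $S-\gamma_j$ integers each $\le S$, weighed against the $1/S$'s coming from $N\ge\pi MS$; moreover the block $\Gamma_{(i)}$ cannot be simultaneously large and centered for every $i$, so not all left‑hand summands can be near their maxima at once. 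Summing the at most $S$ terms reduces the claim to an elementary estimate in $S$ alone, checked directly for small $S$ and by Stirling for large $S$. The main obstacle is keeping the constants produced by the grid step tight enough that this elementary estimate genuinely holds for every $S\ge 2$ and every configuration, not merely asymptotically.
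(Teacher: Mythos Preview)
Your route differs substantially from the paper's. Instead of bounding summands of $E(\Omega)$ directly, the paper extends $E$ to a continuous function $F$ of independent real variables $\{w_{j,k}\}_{j\ne k}$ (standing in for the pairwise distances $|\omega_j-\omega_k|_\T$, but now decoupled from one another), restricts $F$ to the compact region cut out by the cube with edges $[1/N,1/2]$ and the half-spaces $\sum_{j\ne k}w_{j,k}\ge c(S)/N$, and shows via two reductions (each exploiting $N\ge\pi MS$) that the maximum of $F$ already occurs where every $w_{j,k}\le 1/M$. On that subdomain every partial derivative of $F$ is strictly negative, forcing the maximum onto the constraint hyperplanes, which correspond precisely to the consecutive configuration $\Omega_*$. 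This trades your combinatorial endgame for a monotonicity/compactness argument; your approach is more hands-on, theirs avoids any case analysis over configurations.

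Your per-term bound is correct, but the combinatorial inequality you leave open is easier than you fear: a termwise comparison \emph{does} work once the bijection is chosen by global position. Since $M\ge 2S$, the largest cyclic gap in $\Omega$ is at least $1/S>1/M$; cut the torus there to obtain a linear order $\omega_{(1)}<\cdots<\omega_{(S)}$, so that for $\omega_j=\omega_{(i)}$ one has $p'_j\le i-1$ and $q'_j\le S-i$ (as you already wrote). Then
\[
\frac{(i-1)!\,(S-i)!}{p'_j!\,q'_j!}
=\prod_{\ell=p'_j+1}^{i-1}\ell\ \cdot\prod_{\ell=q'_j+1}^{S-i}\ell
\le S^{\,S-\gamma_j}=S^{\,a+b},
\]
so the $j$-th left summand is at most $(4/S^4)^a(4S^2)^{-b}S^{2(a+b)}\big/\big((i-1)!(S-i)!\big)^2=(4/S^2)^a\,4^{-b}\big/\big((i-1)!(S-i)!\big)^2\le 1\big/\big((i-1)!(S-i)!\big)^2$ for $S\ge 2$. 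Summing over $j$ (equivalently over $i$) delivers the right-hand side. Your assertion that individual left summands may exceed every right-hand term is therefore mistaken; the apparent obstruction dissolves once the $j$-th summand is matched to the $i(j)$-th term of $\Omega_*$ rather than to its smallest term.
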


\subsection{Proof of Lemma \ref{lemma:E}}

\label{proof:E}

Before we prove the lemma, we motivate the argument that we are about to use. We view $E(\Omega)$ as a function defined on all $N\choose S$ possible sets $\Omega$ supported on the grid with spacing $1/N$ and of cardinality $S$. To upper bound $E(\Omega)$ uniformly in $\Omega$, one method is to determine which $\Omega$ attain(s) the maximum. The maximizer is clearly not unique, since $E(\Omega)$ is invariant under cyclic shifts of $\Omega$ by $1/N$. However, we shall argue that the maximizer is attained by shifts of $\Omega_*=\{n/N\}_{n=0}^{S-1}$. Note that
\begin{equation}
\label{eq:E0}
E(\Omega_*)
=\Big\lfloor \frac{M}{S}\Big\rfloor^{-2S+2} N^{2S-2} \(\frac{1}{\pi}\)^{2S-2}   \sum_{j=1}^S  \prod_{k\not=j} \frac{1}{(j-k)^2}.
\end{equation}

Thus, the lemma is complete once we prove that $E(\Omega)\leq E(\Omega_*)$. While it seems intuitive that $E(\Omega)\leq E(\Omega_*)$ for all $\Omega$, it is not straightforward to prove. When $\Omega$ is contained in a small interval, the product over $\calT_j$ in the definition of $E(\Omega)$ given in \eqref{eq:EOmega} is large, but that is offset by the remaining terms, which are small. The major difficulty is that $E(\Omega)$ is highly dependent on the configuration of $\Omega$. If we perturb just one of the $\omega_j\in\Omega$ and keep the rest fixed, it is possible for all $S$ terms in the summation in the definition of $E(\Omega)$ to change. This makes continuity and perturbation arguments difficult to carry out. To deal with this difficultly, we proceed with the following extension argument. 

\begin{proof}
	
	We extend $E$ to a function of $D=S(S-1)$ variables in the following way. We write $w\in\R^D$ to denote the $D$ variables $\{w_{j,k}\}_{1\leq j,k\leq S,j\not=k}$. We do not impose that $\{w_{j,k}\}_{j\not=k}$ are unique, that $w_{j,k}=w_{k,j}$, or that they lie on some grid. They are just $D$ independent real variables for now. We define the sets and integers,
	\begin{align*}
	A_j(w)
	&:=\Big\{w_{j,k}\colon w_{j,k}<\frac{S}{2M}\Big\} \quad\text{and}\quad  a_j(w):=|A_j(w)|,\\
	B_j(w)
	&:=\Big\{w_{j,k}\colon w_{j,k}<\frac{1}{M}\Big\} \quad\text{and}\quad b_j(w):=|B_j(w)|.
	\end{align*}
	We define the function $F\colon\R^D\to\R$ by the formula,
	\begin{equation}
	\label{eq:F}
	F(w)
	:=\sum_{j=1}^S \Big\lfloor \frac{M}{S}\Big\rfloor^{-2a_j(w)+2} \(\frac{1}{\pi^2}\)^{b_j(w)-1} 4^{S-2a_j(w)+b_j(w)} \prod_{w_{j,k}\in A_j(w)} \frac{1}{w_{j,k}^2}. 
	\end{equation}
	We restrict $F$ to the domain $[1/N,1/2]^D \cap H$, where
	\[
	H
	:=\bigcap_{k=1}^S \Big\{w\in\R^D\colon \sum_{j\not=k} w_{j,k} \geq \frac{c(S)}{N} \Big\},
	\]
	and the constant $c(S)$ is defined as
	\[ 
	c(S) 
	:= \begin{cases}
	\ 2\(1+2+\dots+\frac{S-1}{2}\)\ &\text{if } $S$ \text{ is odd}, \\
	\ 2\(1+2+\dots+\frac{S-2}{2}\)+\frac{S}{2} &\text{if } $S$ \text{ is even}.
	\end{cases} 
	\]
	
	We argue that $F$ is an extension of $E$. Note that any $\Omega$ can be mapped to a $w(\Omega)\in \R^D$ via the relationship $(w(\Omega))_{j,k}=|\omega_j-\omega_k|_\T$ for all $j\not=k$. Under this mapping, we have $a_j(w)=\tau_j$ and $b_j(w)=\gamma_j$, which shows that 
	\[
	F(w(\Omega))=E(\Omega).
	\]
	Moreover, $w(\Omega)$ is clearly contained in $[1/N,1/2]^D$. For each $1\leq k\leq S$, we have
	\[
	\sum_{j\not=k} (w(\Omega))_{j,k} 
	=\sum_{j\not=k} |\omega_j-\omega_k|_\T
	\geq \frac{c(S)}{N}. 
	\]
	This inequality implies that $w(\Omega)$ is contained in the set $[1/N,1/2]^D\cap H$. Thus, $F$ is indeed an extension of $E$, and for all $\Omega$, we have
	\begin{equation}
	\label{eq:E1}
	E(\Omega)
	=F(w(\Omega))
	\leq \sup_{w\in [1/N,1/2]^D\cap H} F(w). 
	\end{equation}
	
	We remark that there is a clear advantage of working with $F$ instead of $E$. If one coordinate of $w$ is perturbed while the rest of the $D-1$ coordinates of $w$ remain fixed, then only one of the $S$ terms in the summation in \eqref{eq:F} is perturbed. 
	
	Observe that $[1/N,1/2]^D\cap H$ is compact because it is the intersection of a closed cube with $S$ closed half-spaces. Clearly $F$ is continuous on the domain $[1/N,1/2]^D\cap H$, so the supremum of $F$ is attained at some point in this set. We first simplify matters and prove that
	\begin{equation}
	\label{eq:F1}
	\max_{w\in [1/N,1/2]^D\cap H} F(w)
	=\max_{w\in [1/N,1/M]^D\cap H} F(w),
	\end{equation}
	which is done via the following two reductions.
	
	\begin{enumerate}[(a)]
		\item 
		Our first claim is that
		\[
		\max_{w\in [1/N,1/2]^D\cap H} F(w)
		=\max_{w\in [1/N,S/(2M)]^D\cap H} F(w).
		\]
		Suppose for the purpose of yielding a contradiction, the maximum of $F$ is not attained at any point in $[1/N,S/(2M)]^D\cap H$. This is equivalent to the claim that, for any maximizer $w$ of $F$, there exist indices $(m,n)$ such that $a_m(w)\leq S-1$ and $w_{m,n}>S/(2M)$. We define the vector $v\in[1/N,1/2]^D\cap H$ by the relationship
		\[
		v_{j,k}:=
		\begin{cases}
		\ S/N &\text{if } (j,k)=(m,n), \\
		\ w_{j,k} &\text{otherwise}.
		\end{cases}
		\]
		Since $v$ and $w$ agree except at one coordinate, we readily calculate that 
		\begin{align*}
		&F(w)-F(v) \\
		&=
		\Big\lfloor \frac{M}{S}\Big\rfloor^{-2a_m(w)+2} \(\frac{1}{\pi^2}\)^{b_m(w)-1} 4^{S-2a_m(w)+b_m(w)} \(\prod_{w_{j,k}\in A_j(w)} \frac{1}{w_{m,k}^2}\)\(1-\Big\lfloor \frac{M}{S}\Big\rfloor^{-2} \frac{1}{4\pi^2 v_{m,n}^2}\).
		\end{align*}
		The assumption that $N\geq \pi MS$ and $S\geq 2$ imply
		\[
		\frac{1}{4\pi^2 v_{m,n}^2}
		=\frac{N^2}{4\pi^2S^2}
		\geq \(\frac{M}{2}\)^2
		\geq \Big\lfloor \frac{M}{S}\Big\rfloor^2.
		\]
		This proves that $F(w)\leq F(v)$, which is a contradiction.  
		
		\item 
		Our second claim is that 
		\[
		\max_{w\in [1/N,S/(2M)]^D\cap H} F(w)
		=\max_{w\in [1/N,1/M]^D\cap H} F(w).
		\]
		Suppose for the purpose of yielding a contradiction, the maximum of $F$ is not attained at any point in $[1/N,1/M]^D\cap H$. This is equivalent to  to the claim that, for any maximizer $w\in[1/N,S/(2M)]^D\cap H$, there exist indices $(m,n)$ such that $b_m(w)\leq S-1$ and $w_{m,n}\geq 1/M$. We define the vector $v\in [1/N,1/M]^D\cap H$ by the relationship,
		\[
		v_{j,k}=
		\begin{cases}
		\ S/N &\text{if } (j,k)=(m,n)\\
		\ w_{j,k} &\text{otherwise}. 
		\end{cases}
		\]
		Since $v$ and $w$ agree except at one coordinate, we see that
		\begin{align*}
		F(w)-F(v)
		=\Big\lfloor \frac{M}{S}\Big\rfloor^{-2S+2} \(\frac{1}{\pi^2}\)^{b_m(w)-1}  4^{-S+b_m(w)} \(\prod_{k\in A_m(w)\setminus\{n\}} \frac{1}{w_{m,k}^2}\) \(\frac{1}{w_{m,n}^2}-\frac{4}{\pi^2 v_{m,n}^2}\).
		\end{align*}
		The assumption that $N\geq \pi MS$ implies
		\[
		\frac{4}{\pi^2 v_{m,n}^2}
		=\frac{4N^2}{\pi^2 S^2}
		\geq 4M^2
		\geq \frac{1}{w_{m,n}^2}. 
		\]
		This shows that $F(w)\leq F(v)$, which is a contradiction.		
	\end{enumerate}
	
	Thus, we have established \eqref{eq:F1}, and combining this fact with \eqref{eq:E1} yields,
	\begin{equation}
	\label{eq:E2}
	E(\Omega)
	=F(w(\Omega))
	\leq \max_{w\in [1/N,1/M]^D\cap H} F(w).
	\end{equation}
	When $w\in [1/N,1/M]^D\cap H$, the function $F$ reduces to
	\[
	F(w)
	= \Big\lfloor \frac{M}{S}\Big\rfloor^{-2S+2} \(\frac{1}{\pi^2}\)^{S-1} \sum_{j=1}^S \prod_{k\in B_j(w)} \frac{1}{w_{j,k}^2}.
	\]
	Since $F$ is a smooth function of $w$, a straightforward calculation shows that each partial derivative of $F$, with respect to the canonical basis of $\R^D$, is strictly negative on $[1/N,1/M]^D\cap H$. Thus, the maximum of $F$ is attained on the boundary of $[1/N,1/M]^D\cap H$. In fact, $H$ is the intersection of $S$ half-spaces and the boundary of the $k$-th half-space is the hyperplane
	\[
	H_k
	:=\Big\{w\in\R^D\colon \sum_{j\not=k} w_{j,k} = \frac{c(S)}{N}\Big\}. 
	\]
	Since each partial derivative of $F$ is strictly negative on $[1/N,1/M]^D\cap H$, we see that the maximum of $F$ must be attained on one of these hyperplanes. We observe that $w(\Omega)$ lies on a $H_k$ if and only if $\Omega$ consists of $S$ consecutive indices. This proves that for all $\Omega$, we have
	\[
	E(\Omega)
	= F(w(\Omega))
	\leq F(w(\Omega_*))
	= E(\Omega_*).
	\]
	This combined with the formula for $E(\Omega_*)$ given in \eqref{eq:E0} completes the proof of the lemma. 
\end{proof}

\subsection{Upper bounds for discrete quantities}
\label{proof:discrete}

\begin{lemma}
	\label{lemma:comb}
	For any integer $n\geq 2$, we have
	\[
	\( \sum_{j=1}^{n} \prod_{k=1,\ k\not=j}^{n} \frac{1}{(j-k)^2}\)^{1/2}
	\leq 2\pi e \sqrt{n} \, \(\frac{n}{2}-1 \)^{-n} e^{n}. 
	\]
\end{lemma}

\begin{proof}
	Notice that 
	\[
	\min_{1\leq j\leq n} \prod_{k=1,\ k\not=j}^{n} |j-k|
	\geq 
	\begin{cases}
	\ (\frac{n}{2}-1)! (\frac{n}{2})! &\text{if $n$ is even} \\
	\ (\frac{n-1}{2})! (\frac{n-1}{2})! &\text{if $n$ is odd}. 
	\end{cases}
	\]
	By further using the well-known inequality, $k!\geq \sqrt{2\pi} \, k^{k+\frac{1}{2}}e^{-k}$ for any integer $k\geq 1$, 
	\[
	\min_{1\leq j\leq n} \prod_{k=1,\ k\not=j}^{n} |j-k|
	\geq 
	\begin{cases}
	\ 2\pi (\frac{n}{2}-1)^{\frac{n}{2}-\frac{1}{2}}(\frac{n}{2})^{\frac{n}{2}+\frac{1}{2}} e^{-n+1} &\text{if $n$ is even} \\
	\ 2\pi (\frac{n-1}{2})^{n}  e^{-n+1} &\text{if $n$ is odd}. 
	\end{cases}
	\]
	Hence, for any $n\geq 2$, we have
	\[
	\min_{1\leq j\leq n} \prod_{k=1,\ k\not=j}^{n} |j-k|
	\geq 2\pi e \(\frac{n}{2}-1 \)^{n} e^{-n}.
	\]
	Thus,
	\[
	\( \sum_{j=1}^{n} \prod_{k=1,\ k\not=j}^{n} \frac{1}{(j-k)^2}\)^{1/2}
	\leq 2\pi e \( \sum_{j=1}^{n} \(\frac{n}{2}-1 \)^{-2n} e^{2n} \)^{1/2}
	= 2\pi e \sqrt{n} \, \(\frac{n}{2}-1 \)^{-n} e^{n}. 
	\]
\end{proof}
	
\end{appendices}

% % % % % % % % % % % % % % % % % % % %
\bibliographystyle{unsrt}
\bibliography{SRlimitFourierbib}

\end{document}